\definecolor{darkgreen}{rgb}{0,0.2,0}
\definecolor{darkred}{rgb}{0.3,0,0}
\newcounter{llst}
\newenvironment{abet}{\begin{list}{\rm (\alph{llst})}{\usecounter{llst}
\setlength{\itemindent}{0em} \setlength{\leftmargin}{3em}
\setlength{\labelwidth}{2em} \setlength{\labelsep}{1em}}}{\end{list}}
\newenvironment{numm}{\begin{list}{\rm (\roman{llst})}{\usecounter{llst}
\setlength{\itemindent}{0em} \setlength{\leftmargin}{3.5em}
\setlength{\labelwidth}{2.5em} \setlength{\labelsep}{1em}}}{\end{list}}
\newtheorem{theorem}{Theorem}[section]
\newtheorem{corollary}[theorem]{Corollary}
\newtheorem{definition}[theorem]{Definition}
\newtheorem{expl}[theorem]{Example}
\newtheorem{lemma}[theorem]{Lemma}
\newtheorem{proposition}[theorem]{Proposition}
\newtheorem{dscrpt}[theorem]{Description}
\newenvironment{proof}[1][Proof]{\noindent \textbf{#1.} }{\hfill
\rule{0.5em}{0.5em}}
\newenvironment{example}{\begin{expl} \rm}{\hfill $\blacklozenge$
\end{expl}}
\begin{document}

\title{\textbf{Building social networks under consent: \\ A survey}\thanks{I am very grateful for the extensive and helpful comments and suggestions of an anonymous referee. I also thank my co-authors Sudipta Sarangi, Subhadip Chakrabarti and Owen Sims for their contributions and support to explore the paradoxes of social network formation under the hypothesis of mutual consent. The work summarised here could not have been developed without their contributions, insights and help. }}

\author{Robert P.~Gilles\thanks{Economics Group, Management School, The Queen's University of Belfast, Riddel Hall, 185 Stranmillis Road, Belfast, BT9~5EE, UK. \textsf{Email: r.gilles@qub.ac.uk}} }

\date{October 2019 \\ Revised: April 2020}

\maketitle



\section{Mutual consent in network formation}

During the past two decades there has emerged an extensive literature on game theoretic models of network formation. Seminally, the fundamentals of such a game theoretic perspective were set out by \citet{AumannMyerson1988} in which players are guided by the Myerson value of corresponding communication situations. This contribution explored network formation under mutual consent through a non-cooperative signalling game: A link between two players is formed if and only if both players signal to each other their willingness to form this relationship.  The main insight  of the \emph{Myerson model} \citep{Myerson1991} is that the network without any links is always supported through a Nash equilibrium of this signalling game. This theoretical result leads to the conclusion that network formation under mutual consent has to be considered as difficult, even impossible. This would contradict the well-established understanding of human nature as that of a social networker \citep{Seabright2004,Harari2014}.\footnote{The main conclusion is strengthened in the case of costly link formation, in which the empty network is a \emph{strong} Nash equilibrium, indicating that starting from an empty network it seems unlikely that rational agents would be able to establish non-trivial networks. I also refer to \citet{JoshiSarangi2020} for a dynamic model of such non-trivial network formation. } Nevertheless, paradoxically, the Myerson model is and remains the most natural, straightforward and convincing non-cooperative model of network formation under mutual consent.

The relative failure of this natural non-cooperative approach induced \citet{JacksonWolinsky1996} to introduce an alternative approach, which is founded on a bilateral cooperative consideration.\footnote{An alternative mathematical model emerged with \citet{BalaGoyal2000a} based on one-sided link formation: One assumes \emph{ex-ante}, or implicit, consent among players in the network formation game. The resulting equilibrium networks are denoted as \emph{Nash networks} in the subsequently developed literature. This approach is unsatisfactory due to its unnatural social foundations with rather limited applicability to explain social and economic phenomena.} In their approach, Jackson and Wolinsky allow pairs of players to cooperatively deviate from an existing network to modify it. The equilibrium networks under such pairwise modification are denoted as \emph{pairwise stable} networks. Pairwise stability provided a fertile foundation for further exploration of network formation under cooperative consent. This resulted in the development and study of variations of pairwise stability.

Although the Jackson-Wolinsky approach founded on pairwise stability has been very successful in explaining the emergence of non-trivial networks, there remained a gap in our understanding concerning a purely non-cooperative approach to the modelling of mutual consent in network formation. This has been more recently explored through the design of bespoke equilibrium concept applied in the Myerson model. In particular, \citet{Buskens2005} and \citet{GillesSarangi2010} introduced models of trusting behaviour in network formation through trust-based belief systems. The equilibrium concepts that are developed from these models have very strong properties, showing that trust in network formation leads to non-trivial equilibrium networks. For example, \citet{GillesSarangi2010}'s notion of monadic stability results in equilibrium networks that form a specific subclass of pairwise stable networks---denoted as the \emph{strictly} pairwise stable networks.

\paragraph{Overview of this survey.}

This survey explores the various methodologies to properly model mutual consent in network formation. I compare the different classes of equilibrium networks that emerge from these different methodologies. After discussing the principles of link formation under mutual consent and Myerson's seminal model, I turn to the exploration of Jackson-Wolinsky type stability concepts based on pairwise cooperative behaviour. I distinguish different subclasses of stable network based on hypotheses about how coalitions of certain sizes can modify the current network. This mainly pertains to pairs of players, but also extends to coalitions of players of arbitrary size---resulting in the notion of a \emph{strongly stable} network \citep{JacksonNouweland2005}.

Subsequently, I turn to the main non-cooperative theory of network formation under mutual consent, namely extensions of the Myerson model \citep{Myerson1991}.  I survey the results from the literature that categorise the various classes of equilibrium networks in the Myerson model with two- as well as one-sided link formation costs. There emerges a close link to certain classes of stable networks in the Jackson-Wolinsky framework. 

Subsequently, I discuss the idea of equilibrium refinement in the Myerson model to reflect considerations of mutual trust in link formation. Indeed, links are representations of socio-economic relationships that are founded on mutual trust between the interacting parties. This results in the unilateral  \citep{Buskens2005} and monadic stability \citep{GillesSarangi2010} concepts in the Myerson model. I explore monadic stability further, which is founded on a conception of mutual trust through a belief system in the Myerson model. The properties of these monadically stable networks as well as their existence, using \citet{MondererShapley1996}'s theory of game-theoretic potentials, are also reviewed.

I conclude this survey by looking at an alternative method to modelling mutual consent in link- and network formation. This refers to the introduction of correlated strategies in the Myerson model as a tool to represent coordinated interaction. The resulting class of ``correlated equilibrium networks'' still needs to be explored in future research. 

\section{Introducing mutual consent: Modelling principles}

Throughout this survey, I use a broad class of game theoretic techniques to model how relationships---or ``links''---between pairs of socio-economic agents come about. We refer to these socio-economic agents  as \emph{players} in the context of these models. Each player is assumed to  be a fully rational individual decision maker that acts according to a set of behavioural rules described in the developed equilibrium concept.

Besides the specific behavioural hypotheses on which these equilibrium concepts are based, it is important to realise that there are some fundamental broad axioms made. These fundamental axioms introduce a few fundamental limitations of the approach that is surveyed here: 
\begin{numm}
\item This game-theoretic approach is purely \emph{static} in nature. This implies that we start from a zero state in which no links exist and in which these socio-economic agents decide whether and which links to build. The end result is a fully formed network in which certain value-generating activities are achieved. It would be more realistic to model the formation of a network as a dynamic building process. However, in the static conception followed throughout this survey, one network does \emph{not} evolve into another.
\\
 This has major consequences for how we view network formation and which networks actually are identified in these constructions. Indeed, the identified equilibrium networks do not exhibit the features of large social networks identified in the literature quoted on social networks \citep{Newman2010book,Barabasi2016}. So, these equilibrium networks are usually neither scale-free nor small world networks nor satisfying the basic property of assortative mixing. This is a severe limitation of such a static approach.\footnote{In my discussion in this survey I omit the recent development of incentive-based stochastic models of network formation. This approach focuses not only on game theoretic incentives in network formation---as the subject matter of this survey---but combines this concept with stochastic processes that describe random meetings. This approach was seminally developed in \citet{JacksonRogers2005,JacksonRogers2007} and further addressed in, e.g., \citet{GolubLivne2010}.}
\\
On the other hand, the static approach highlights certain properties of rational decision making in the context of pairwise cooperation, required for building value-generating relationships under mutual consent. Rather contradictorily, the main theorem in Myerson's non-cooperative model shows that rational decision-making does actually not result into any sensible network formation---the empty network is always supported through a Nash equilibrium in the Myerson model. So, starting from an empty network, fully rational players have no mechanism to create a meaningful interaction structure. Only if we impose that the decision makers are \emph{boundedly rational}---and, thus, use animal spirits rather than optimisation in decision making---we arrive at the conclusion that non-trivial and sensible networks emerge under mutual consent.\footnote{This is captured in the notion of a monadically stable network that is founded on trusting behaviour by the players. Such trusting behaviour is fundamentally boundedly rational. Indeed, to trust another player is not founded on calculation, but on a leap of faith.} This important insight is the main conclusion presented in this survey.

\item The game-theoretic approach explored in this survey is founded on a \emph{negative} stability methodology. Hence, a network is called ``stable'' if there are no incentives to change the network. This is the standard methodology in game theory and neo-Walrasian economics. Rather than constructing an actual building process, this methodology only looks at which networks \emph{cannot} emerge due to the existing incentives to change the network that the players are endowed with. We thus arrive at a class of equilibrium networks that describe configurations in which such incentives for deviation are absent.
\\
The consequence of the application of this standard game-theoretic methodology is that reality is only approximated. This approach, for example, does not allow the mixing of modes of incentives, which is common in real-life interaction. This, therefore, is another reason why the theoretically derived networks do not have the desired features discussed in the literature on large social networks as surveyed by \citet{Barabasi2016}.
\end{numm}
The next section sets out the basic framework of modelling mutual consent in the formation of a relationship between two players. 

\subsection{Players, links and networks}

We use the basic concepts from the theory of social networks set out in the literature. Following the accepted symbolism, the set $N = \{ 1, \ldots ,n \}$ represents a set of \emph{players}. The fundamental issue addressed here is how these players will build pairwise or binary relationships with other players and ultimately construct a socio-economic network consisting of such binary relationships.

Each player $i \in N$ is explicitly endowed with the social ability to build such pairwise relationships or \emph{links} with other players, provided that consent is given by the other party. Again following the accepted terminology in the literature \citep{Jackson2008}, the pairwise subset $\{ i,j \} \subset N$ with $i \neq j$ denotes a pairwise relationship between players $i \in N$ and $j \in N$.  We follow convention to use shorthand notation and define a \emph{link} between players $i$ and $j$ as $ij = \{ i,j \} \in g^N$, where 
\begin{equation}
	g^N = \{ \, \{ i,j \} \mid i,j \in N \mbox{ and } i \neq j \} = \{ ij \mid i,j \in N \}
\end{equation}
denotes the set of all potential links on the player set $N$. As such the set $g^N$ acts as the universal set of all potential links on player set $N$.

A \emph{network} on $N$ is now an arbitrary subset of links, i.e., any subset $g \subset g^N$ is a network on $N$.  In particular, $g=g^0 = \varnothing$ is the \emph{empty network} on $N$ which describes a situation where no links are formed. Furthermore, $g = g^N$ is the \emph{complete network} on $N$, which is the largest network consisting of all potential links among players in $N$.  We introduce $\mathbb{G}^N = \{ g \mid g \subset g^N \}$ as the collection of all networks on $N$.

The \emph{neighbourhood} of player $i \in N$ in network $g \in \mathbb{G}^N$ is given by $N_i (g) = \{ j \in N \mid ij \in g \}$. The collection of corresponding neighbouring relationships or links is denoted by $L_i (g) = \{ ij \in g \mid j \in N_i (g) \}$. The complete collection of all potential links that involve player $i \in N$---or that can be formed by player $i$---is denoted by $L_i = L_i (g^N) = \{ ij \mid j \neq i \}$.

\paragraph{Adding and deleting links to a network.}

In formal models of network formation we consider the deletion and addition of links to given networks. For this I introduce some well-accepted notation \citep{Jackson2008}. Consider a network $g \in \mathbb{G}^N$. For every pair of players $i,j \in N$ with $ij \notin g$ we now denote by $g + ij$ the network that results from $g$ by adding the link $ij \notin g$, i.e., $g + ij = g \cup \{ ij \} \in \mathbb{G}^N$. Similarly, for some collection of links $h \subset g^N$ with $g \cap h = \varnothing$, we denote $g + h = g \cup h$ the network that results from adding link collection $h$ to the network $g$.

Next, consider two players $i,j \in N$ with $ij \in g$. We denote by $g - ij = g \setminus \{ ij \} \in \mathbb{G}^N$ the network that results from removing the link $ij$ from the network $g$. Again, for any collection of links $h \subset g$ we denote $g-h = g \setminus h$ the network that results from removing the links in $h$ from the network $g$.

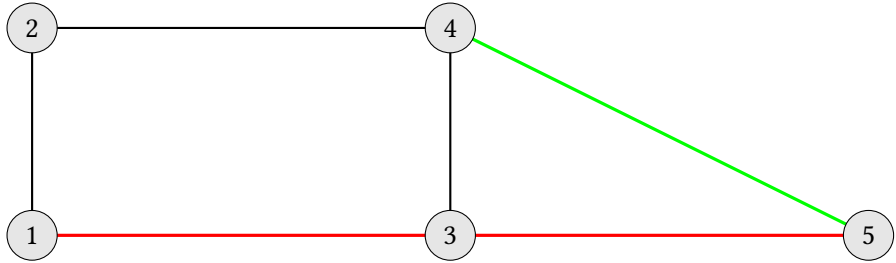
\begin{figure}[h]
\begin{center}
\begin{tikzpicture}[scale=0.55]
\draw[thick] (0,0) -- (0,5) -- (10,5) -- (10,0);
\draw[color=red,very thick] (0,0) -- (20,0);
\draw[color=green,very thick] (10,5) -- (20,0);
\draw (0,0) node[circle,fill=black!10,draw] {$1$};
\draw (0,5) node[circle,fill=black!10,draw] {$2$};
\draw (10,0) node[circle,fill=black!10,draw] {$3$};
\draw (10,5) node[circle,fill=black!10,draw] {$4$};
\draw (20,0) node[circle,fill=black!10,draw] {$5$};
\end{tikzpicture}
\end{center}
\caption{Illustration for link addition and deletion.}
\end{figure}

\begin{example}
	With these notational conventions we are now equipped to address link formation processes. To illustrate this notation, consider the network $g = \{ 12,13,24,34,35 \}$ on $N = \{ 1,2,3,4,5 \}$ as depicted in Figure 1 above consisting of the red and black links. Considering the green link $45 \notin g$, then $g' = g + 45 = \{ 12,13,24,34,35,45 \}$ is depicted in Figure 1 as the network consisting of all coloured links. Finally, removing the red link set $h = \{ 13,35 \} \subset g$ from $g$ results into $g'' = \{ 12,24,34 \}$, depicted by collection of the black links only in Figure 1.
\end{example}

\paragraph{Payoffs.}

Throughout the literature on game theoretic approaches to network formation, players are assumed to be fully incentivised in their drive to build and maintain links as well as delete links in existing networks. These incentives are introduced as a individualised payoff function. Indeed, for every player $i \in N$ we introduce \emph{player $i$'s network payoff function} as $\varphi_i \colon \mathbb{G}^N \to \mathbb{R}$, which assigns to every network $g \in \mathbb{G}^N$ a value $\varphi_i (g)$ that evaluates $i$'s situation as a member of the networked community described by $g$.

We can now capture all payoff information on the population $N$ of players in the \emph{network payoff function} given by $\varphi = ( \varphi_1, \ldots , \varphi_n ) \colon \mathbb{G}^N \to \mathbb{R}^N$. In particular, I emphasise that the function $\varphi$ indeed captures all incentives for the decision makers in $N$ in the network formation processes to be considered next.\footnote{We might refer to the multi-dimensional function $\varphi$ also as representing the \emph{network payoff structure}.}

A network payoff for a player captures \emph{all} values emanating in the structured community that is perceived or received by that player. This includes all perceived \emph{externalities} of third parties. In this regard, the network payoff function can capture widespread externalities from relationship and network formation in that community. The addition of network externalities in the payoff structure differentiates this inclusive network payoff approach from the more classical cooperative game theoretic payoff structure employed by \citet{Myerson1977,Myerson1980}, \citet{DuttaMutuswami1997} and \citet{Nouweland1993,Nouweland2004review}. The payoff function including widespread externalities has been seminally introduced in network theory by \citet{JacksonWolinsky1996}.
\begin{example}
	I illustrate this concept by revisiting the networks depicted in Figure 1. For example, player 1 can be assigned $\varphi_1 (g) =1$ as well as $\varphi_1 (g') = 5$ even though her neighbours in both networks are exactly the same, i.e., $N_1 (g) = N_1 (g') = \{ 2,3 \}$. This, therefore, captures widespread externalities from the creation of the link $45$ in the network $g$ from the perspective of player 1.
\end{example}

\subsection{Myerson's approach to network formation}

The most fundamental and basic model of how networks form under mutual consent was seminally introduced as an example in \citet[page 448]{Myerson1991}. He pointed out that in a very simple network formation game---known as the \emph{Myerson model}---, the resulting networks that are supported by Nash equilibria in this game \emph{always} include the empty network $g^0$. Hence, building no links at all is an equilibrium in the incentive structure generated by player benefits to network formation.

Myerson presented this as a negative insight, since it indicates that purely noncooperative game theory cannot provide a fertile basis for a debate of how non-trivial networks between players emerge. However, what this really expresses is that networks are not forming if players act purely selfishly. My contention is throughout that it actually has to be expected that pure selfishness would undermine cooperative acts such as forming links between pairs of players.

Here I initially explore the seminal Myerson model itself. In subsequent sections I turn to extensions of this basic model with added consideration of link formation costs. For the proper development of the Myerson model we need to review some basic non-cooperative game theory.

\paragraph{Preliminaries: Some game theory.}

This section relies heavily on standard noncooperative game theory. Again we let $N = \{ 1, \ldots , n \}$ be the set of players. A \emph{game} on $N$ is a pair $(\mathcal{A}, \pi )$ with $\mathcal{A} = ( A_1, \ldots , A_n )$ an ordered collection of \emph{strategy sets} such that each player $i \in N$ is assigned her individual strategy set $A_i$ and a game theoretic \emph{payoff function} $\pi = (\pi_1 , \ldots , \pi_n) \colon A \to \mathbb R^N$ where $A = \prod_{i \in N} A_i$ is the set of all \emph{strategy tuples} generated in $\mathcal{A}$.

Hence, in a non-cooperative game, each player $i \in N$ is endowed with her individual strategy set $A_i$ and a payoff function $\pi_i \colon A \to \mathbb R$. The fundamental idea is that every player selects a strategy that optimises her payoffs, provided that other players also select strategies that affect this payoff. As such, a game is a mathematical representation of a social interaction situation. Game theory is now a collection of rules and tools that model how players make decisions in the context of such social interaction situations.

A strategy tuple is a list $a = (a_1, \ldots , a_n) \in A$. We use the convention that the list of strategies of players other than $i \in N$ are indicated by $a_{-i} = (a_1 , \ldots , a_{i-1} , a_{i+1}, \ldots , a_n) \in \prod_{j \in N \colon j \neq i} A_j$. Hence, $a = (a_i, a_{-i})$.
\begin{definition}
	A strategy tuple $a^* \in A$ is a \textbf{Nash equilibrium} in the game $(\mathcal A, \pi )$ if for every player $i \in N$ and any strategy $b_i \in A_i$ it holds that $\pi_I (a^*) \geqslant \pi_i (b_i,a^*_{-i} )$. In a Nash equilibrium, every player optimises her strategy, \emph{given the strategic choices of all other players}.
\end{definition}
A Nash equilibrium can also be expressed in terms of ``best responses''. Formally, a strategy $a_i \in A_i$ is a \emph{best response} to strategy tuple $a_{-i} \in \prod_{j \in N \colon j \neq i} A_j$ if for every strategy $a'_i \in A_i$ it holds that $\pi_i (a_i , a_{-i}) \geqslant \pi_i (a'_i , a_{-i} )$. Hence, a best response is the strategy for player $i$ that optimises her payoffs given that all other players $j \neq i$ select the strategy $a_j \in A_j$.

Now a strategy tuple $a^* \in A$ is a Nash equilibrium if and only if for every player $i \in N$ it holds that $a^*_i$ is a best response to $a^*_{-i}$. As such a Nash equilibrium is a fixed point of the best response correspondence that is generated by the game. Furthermore, it can be shown that in this respect a Nash equilibrium usually can be interpreted as a saddle point in a well-constructed geometric representation of the game.

\paragraph{The Myerson model.}

\citet{Myerson1991} introduced his approach to modelling the formation of networks as an illustration of the underlying processes that determine the Nash equilibria in a non-cooperative strategic form game. Myerson's framework is \emph{the} quintessential model of mutual consent in link formation. The Myerson model encompasses a basic signalling game in which players send each other messages about whether they want to form a link or not. Due to its very fundamental and basic nature, it is a model that acts as the benchmark in any discussion on consent in link formation.

In Myerson's framework, players costlessly signal to each other whether they want to form links. Now, a link is established if and only if the two players signal both that they would like to form the link. Formally, the \emph{Myerson model} $\Gamma^m_\varphi$ on player set $N$ under network payoff function $\varphi \colon \mathbb{G}^N \to \mathbb{R}^N$ is a non-cooperative game $\Gamma^m_\varphi = (\mathcal A^m,\pi^m)$ given as follows:
\begin{itemize}
\item For every player $i \in N$, her strategy set is given by all vectors of signals to other players in $N \colon$
\begin{equation}
A^m_i = \left\{ \ell_i = (\ell_{i1} , \ell_{i2}, \ldots , \ell_{in} ) \, \left| \, \ell_{ij} \in \{ 0,1 \} \mbox{ and } \ell_{ii}=1 \right. \right\} \ ;
\end{equation}
Here, $\ell_{ij}$ is a signal that player $i$ communicates to player $j$ about her intentions to form a link with $j$. If $\ell_{ij} =1$, player $i$ indicates that she is interested in forming the link with player $j$; if $\ell_{ij} =0$, player $i$ signals that she wants to remain unattached to player $j$.

\item A link $ij$ is now formed if both players $i$ and $j$ signal to each other they want to form the link, i.e., if $\ell_{ij} = \ell_{ji} =1$. If we denote by $\ell = (\ell_1, \ldots , \ell_n ) \in A^m = A^m_1 \times \cdots \times A^m_n$ a strategy profile, then the resulting network can be identified as
\begin{equation}
g (\ell ) = \{ ij \in g^N \mid \ell_{ij} = \ell_{ji} =1 \} .
\end{equation}
We say that $g (\ell )$ is the network \emph{supported} by the strategy profile $\ell$ in the Myerson model.

\item The Myerson model is completed by the game theoretic payoff function $\pi^m \colon A^m \to \mathbb{R}^N$ defined by
\begin{equation}
\pi^m_i (\ell ) = \varphi_i \left( g( \ell ) \right) .
\end{equation}
Clearly, the payoff function $\pi^m$ reflects the property that signalling is costless and that there are no costs incurred in the formation of a link between any pair of players.
\end{itemize}
In the next discussion, I investigate the networks that are supported through Nash equilibria in the Myerson model.

\paragraph{M-networks.}

The Nash equilibria in the basic Myerson model form a class of signalling profiles that support networks on $N$ that are stable against unilateral modification. We denote these Nash equilibrium networks as ``M-networks'' to distinguish this class of networks from other classes of networks.
\begin{definition}
	Let $\varphi$ be a network payoff function on player set $N$ and let $\Gamma^m_\varphi = (\mathcal A^m,\pi^m)$ be the corresponding basic Myerson model. A network $g \in \mathbb{G}^N$ is an \textbf{M-network} if there exists a Nash equilibrium strategy tuple $\ell^g \in A^m$ in $\Gamma^m_\varphi$ such that $g (\ell^g) =g$.
\end{definition}
Clearly, using the Nash equilibrium conditions and the definition of $\pi^m$, we get the following M-network requirement: For every player $i \in N$ and every signal vector $\ell_i \in A^m_i$ it holds that
\[
\varphi_i \left( \, g ( \ell_i , \ell^g_{-i} ) \, \right) \leqslant \varphi_i \left( \, g ( \ell ) \, \right) .
\]
The concept of M-network is at the core of the assessment of network formation itself, since it describes the stable outcomes of the basic signalling framework represented in the Myerson model. Crucially, \citet{Myerson1991} already pointed out that the empty network is always supported as an M-network. Formally, this can be expressed as follows.
\begin{proposition}[Myerson's Lemma]
In the Myerson model $\Gamma^m_\varphi = (\mathcal A^m,\pi^m)$ the ``no-link'' signal profile $\ell^0 = (0, \ldots ,0) \in A^m$ is a Nash equilibrium. Consequently, the empty network $g^0 = g (\ell^0)$ is an M-network.
\end{proposition}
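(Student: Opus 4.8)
The plan is to verify the Nash equilibrium condition directly, exploiting the bilateral nature of link formation built into the definition of $g(\ell)$. First I would pin down the meaning of the profile $\ell^0$: under the convention $\ell_{ii}=1$, writing $\ell^0 = (0,\ldots,0)$ is shorthand for $\ell^0_{ij} = 0$ whenever $i \neq j$, i.e.\ no player signals any willingness to link with another. Since a link $ij$ forms only when $\ell_{ij} = \ell_{ji} = 1$, and here every off-diagonal signal equals $0$, it is immediate that $g(\ell^0) = \varnothing = g^0$, which already reduces the second sentence of the statement to the first.

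The heart of the argument is the following observation about unilateral deviations. Fix a player $i \in N$ and suppose every other player $j \neq i$ plays according to $\ell^0_{-i}$, so that $\ell_{ji} = 0$ for all $j \neq i$. Then for \emph{any} alternative strategy $\ell_i \in A^m_i$ that $i$ might select, and for any $j \neq i$, the link $ij$ is formed only if $\ell_{ij} = \ell_{ji} = 1$; but $\ell_{ji} = 0$, so no link incident to $i$ can form. As no links among the other players exist either, the supported network is unchanged: $g(\ell_i, \ell^0_{-i}) = \varnothing = g(\ell^0)$ for every $\ell_i \in A^m_i$.

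Combining this with the definition $\pi^m_i(\ell) = \varphi_i(g(\ell))$ gives
\[
\pi^m_i(\ell_i, \ell^0_{-i}) = \varphi_i\!\left( g(\ell_i, \ell^0_{-i}) \right) = \varphi_i(g^0) = \varphi_i\!\left( g(\ell^0) \right) = \pi^m_i(\ell^0)
\]
for every player $i$ and every deviation $\ell_i$. Hence the Nash inequality $\pi^m_i(\ell_i, \ell^0_{-i}) \leqslant \pi^m_i(\ell^0)$ holds, in fact with equality, so $\ell^0$ is a Nash equilibrium; and $g^0 = g(\ell^0)$ is then an M-network directly from the M-network definition.

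As for difficulty, there is no genuine obstacle: the result is a purely structural consequence of requiring \emph{mutual} consent, since a single player acting alone can never override another player's refusal to link. The only point demanding mild care is the bookkeeping around the $\ell_{ii}=1$ convention, to ensure that ``no-link'' is read as $\ell_{ij}=0$ for $i \neq j$ rather than literally all coordinates vanishing. Notably, the payoff function $\varphi$ plays no role beyond being defined on $g^0$, which is precisely why the conclusion holds for \emph{every} $\varphi$ and underlies Myerson's negative interpretation.
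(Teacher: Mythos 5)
Your proof is correct and follows essentially the same route as the paper's: you observe that since $\ell_{ji}=0$ for all $j \neq i$, any unilateral deviation $\ell_i$ still yields $g(\ell_i, \ell^0_{-i}) = g^0$, so every strategy (in particular $\ell^0_i$) is a best response, making $\ell^0$ a Nash equilibrium. Your added care about the $\ell_{ii}=1$ convention and the explicit payoff chain are harmless elaborations of the same one-line argument.
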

\begin{proof}
Let $\ell^0_{ij} =0$ for all $i,j \in N$, making up the strategy profile $\ell^0$. Then, for any player $i \in N$, any signal vector $\ell_i \in A^m_i$ is a best response to $\ell^0_{-i}$, since $g (\ell^0_{-i} , \ell_i )= g^0$ irrespective of the selected signal vector $\ell_i$. Therefore, $\ell^0_i$ itself is a best response to $\ell^0_{-i}$, showing that $\ell^0$ is a Nash equilibrium in $\Gamma^m_\varphi = (\mathcal A^m,\pi^m)$.
\end{proof}

\medskip\noindent
This property points out that non-trivial M-networks are very hard to form; rational self-interest easily results in complete failure and no cooperation might emerge. In this case, Myerson's Lemma indicates that, without some supporting mechanism, there simply are no incentives to justify that any links are formed at all. So, Myerson's Lemma points to the very fundamental issue of human cooperation: Why would rational human beings be cooperative? In this regard, Myerson's Lemma is a very succinct expression of this major question in social science and economics.

\section{Jackson-Wolinsky stability concepts}

The challenge of modelling non-trivial network formation stated in the discussion of the Myerson model as Myerson's Lemma was taken on by \citet{JacksonWolinsky1996}.  They formulated \emph{cooperative} equilibrium concepts that are tailored to the specific demands of modelling bilateral link formation.  This resulted in the notion of a ``pairwise stable'' network.

 I first discuss a class of cooperative or pairwise concepts of network stability from a link-based perspective as explored in \citet{Consent2006,GillesChakrabarti2012}. This concerns four fundamental link-stability principles, each founding a particular form of cooperative stability, and three further derived stability notions---including the seminal pairwise stability concept introduced by \citet{JacksonWolinsky1996}. 

Central to this approach is that while mutual consent is required for establishing a link, a player is able to delete her links unilaterally. Here, we focus on link-centred considerations. Hence, how would the deletion of one or more links affects the players' payoffs? Similarly, how would the addition of one or more links affect payoffs? These mutual considerations are brought together into a link- or network-based notion of stability.

\paragraph{Deleting links from networks.}

Throughout it is assumed that players have full autonomy or sovereignty over the decision to delete one or more of her links. Indeed, the principle of mutual consent requires that players control which links they participate in. This implies that every player can veto her participation in any link or relationship. Based on this consideration, I introduce two fundamental stability concepts concerning the deletion of links.

As before, let $\varphi \colon \mathbb G^N \to \mathbb R^N$ be a network payoff function on the player set $N$.
\begin{numm}
\item A network $g\in \mathbb{G}^{N}$ is \textbf{link deletion proof} (LDP) for $\varphi$ if for every player $i\in N$ and every neighbour $j \in N_{i} (g)$, it holds that $\varphi_{i}(g-ij)\leqslant \varphi_{i}(g)$.
\\
Link deletion proofness requires that no player has an incentive to sever an existing link with one of her neighbours.
\\
We denote by $\mathcal{D} (\varphi ) \subset \mathbb{G}^N$ the class of all link deletion proof networks for the given payoff function $\varphi$ \citep{JacksonWolinsky1996}.

\item A network $g\in \mathbb{G}^{N}$ is \textbf{strong link deletion proof} (SLDP) for $\varphi$ if for every player $i\in N$ and every set of her direct links $h \subset L_{i}(g)$, it holds that $\varphi_{i} (g -h) \leqslant \varphi_{i}(g)$.
\\
Strong link deletion proofness requires that no player has incentives to sever links with one or more of her neighbours simultaneously.
\\
We denote by $\mathcal{D}_s (\varphi ) \subset \mathbb{G}^N$ the class of all strong link deletion proof networks for the given payoff function $\varphi$ \citep{Consent2006}.
\end{numm}
From the definition it is clear that any SLDP network is always LDP and, therefore, strong link deletion proofness is indeed a stronger notion than (regular) link deletion proofness. As indicated, LDP was seminally introduced in \citet{JacksonWolinsky1996}, while SLDP was only introduced as a stand-alone concept in early drafts of  \citet{Consent2006}.

Second, the empty network $g^0 = \varnothing$ on any set of players $N$ is trivially strong link deletion proof. Indeed, this network does not contain any links and, therefore, the deletion of links is vacuously satisfied. We can therefore summarise that:
\begin{proposition}
	For any network payoff function $\varphi \colon \mathbb{G}^N \to \mathbb{R}^N$ it holds that
	\begin{equation}
		g^0 \in \mathcal{D}_s (\varphi ) \subset \mathcal{D} (\varphi) \subset \mathbb{G}^N.
	\end{equation}
\end{proposition}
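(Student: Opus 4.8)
The plan is to verify the three relations in the displayed chain separately, since each follows directly from unwinding the definitions of link deletion proofness and strong link deletion proofness. I expect no genuine obstacle here: the statement is really a consolidation of the two observations made in the paragraphs immediately preceding it, and the work is entirely a matter of specialising quantifiers.

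First I would dispose of the rightmost inclusion $\mathcal{D}(\varphi) \subset \mathbb{G}^N$. This is immediate, since $\mathcal{D}(\varphi)$ was introduced as the collection of those networks $g \subset g^N$ satisfying the LDP condition, and hence is by construction a subclass of $\mathbb{G}^N$.

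Next I would prove the middle inclusion $\mathcal{D}_s(\varphi) \subset \mathcal{D}(\varphi)$. Take any $g \in \mathcal{D}_s(\varphi)$, fix a player $i \in N$ and a neighbour $j \in N_i(g)$, and apply the SLDP condition to the singleton deletion set $h = \{ij\} \subset L_i(g)$. Since $g - h = g - ij$, the SLDP inequality $\varphi_i(g - h) \leqslant \varphi_i(g)$ reduces precisely to the LDP inequality $\varphi_i(g - ij) \leqslant \varphi_i(g)$. As $i$ and $j$ were arbitrary, $g$ is LDP, i.e., $g \in \mathcal{D}(\varphi)$. The only point deserving a word of care is that a single link is indeed an admissible choice in the SLDP definition, which it is, as $\{ij\}$ is a subset of $L_i(g)$; this specialisation of the set-based quantifier over $h \subset L_i(g)$ to singletons is the one substantive observation in the whole argument.

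Finally I would check the membership $g^0 \in \mathcal{D}_s(\varphi)$. For the empty network every player $i$ has $L_i(g^0) = \varnothing$, so the only admissible deletion set is $h = \varnothing$; then $g^0 - h = g^0$ and the required inequality $\varphi_i(g^0) \leqslant \varphi_i(g^0)$ holds trivially. Combining these three facts yields the chain in the statement.
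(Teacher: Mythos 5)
Your proposal is correct and follows exactly the reasoning the paper itself gives: the paper establishes this proposition via the two informal observations preceding it---that SLDP implies LDP by specialising the deletion set $h \subset L_i(g)$ to a singleton $\{ij\}$, and that $g^0$ satisfies SLDP vacuously since no player has any links to delete---which is precisely your argument made explicit. Nothing further is needed.
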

The first question that I consider is under which conditions link deletion proofness is exactly the same as strong link deletion proofness. This seems a rather innocuous question, since SLDP is so much stronger a concept than LDP. Nevertheless, it is enlightening to identify the exact property on the network payoff structure $\varphi$ that allows this equivalence.
\begin{theorem} \label{9:equiv:Deletion}
Strong link deletion proofness and link deletion proofness are equivalent for network payoff structure $\varphi$ in the sense that $\mathcal{D} (\varphi ) = \mathcal{D}_s (\varphi )$ if and only if the network payoff structure $\varphi$ is \textbf{convex} on the class of link deletion proof networks $\mathcal{D} (\varphi ) \subset G^N$ in the sense that for every LDP network $g \in \mathcal{D} (\varphi )$, every player $i \in N$, every neighbour $j \in N_i (g)$ and every link set $h \subset L_i$ with $h \cap L_i (g) = \varnothing$ it holds that
\begin{equation}
\sum_{ij \in h} \left[ \, \varphi_i (g + ij) - \varphi_i (g) \, \right] \geqslant 0 \mbox{ implies that } \varphi_i (g+h) \geqslant \varphi_i (g) .
\end{equation}
\end{theorem}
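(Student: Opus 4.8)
The plan is to reduce the claimed equivalence to a single inclusion and then read the convexity condition as the exact bridge between single-link and multi-link deletion. Since every strong link deletion proof network is trivially link deletion proof, the inclusion $\mathcal{D}_s(\varphi) \subseteq \mathcal{D}(\varphi)$ holds for every $\varphi$, as already recorded above. Hence $\mathcal{D}(\varphi) = \mathcal{D}_s(\varphi)$ is equivalent to the reverse inclusion $\mathcal{D}(\varphi) \subseteq \mathcal{D}_s(\varphi)$, i.e.\ to the statement that every LDP network is in fact SLDP. I would therefore prove the biconditional by testing this reverse inclusion against the convexity condition, player by player and incident-link-set by incident-link-set.

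For the ``only if'' direction I would argue directly and expect it to be essentially immediate. Assume $\mathcal{D}(\varphi) = \mathcal{D}_s(\varphi)$ and take the data in the convexity condition: an LDP network $g$, a player $i \in N$, and the link set $h$. Because $g$ is LDP it is, by the standing assumption, also SLDP; and strong link deletion proofness already delivers $\varphi_i(g) \geqslant \varphi_i(g-h)$ for every admissible $h$, unconditionally. The conclusion of the convexity implication thus holds on its own, so the implication is valid whatever its antecedent, and this direction is done.

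The real content is the ``if'' direction: assuming convexity, I would show that an arbitrary $g \in \mathcal{D}(\varphi)$ lies in $\mathcal{D}_s(\varphi)$. Fix $i \in N$ and a set $h \subseteq L_i(g)$ of incident links slated for deletion. The crucial point is that link deletion proofness of $g$ furnishes, for each single link $ij \in h$, the one-link inequality $\varphi_i(g) - \varphi_i(g-ij) \geqslant 0$; summing these over $ij \in h$ produces precisely the (nonnegative) antecedent required by the convexity condition. Convexity then yields the bundled inequality $\varphi_i(g) \geqslant \varphi_i(g-h)$, which is exactly the SLDP requirement for this $i$ and $h$. Letting $i$ and $h$ range over all players and all incident link sets gives $g \in \mathcal{D}_s(\varphi)$, establishing the reverse inclusion and hence the equivalence.

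The step I expect to demand the most care---and the main obstacle---is aligning the convexity condition with the deletion geometry of SLDP. One must match the simultaneous deletion of $h$ from $g$ with the additive form in which the condition is written, using the reparametrisation in which the base network is $g-h$ and the links of $h$ are added back, so that $(g-h)+h = g$ and $h \cap L_i(g-h) = \varnothing$. The delicate part is to ensure that the single-link terms feeding the summation are the marginals $\varphi_i(g) - \varphi_i(g-ij)$ of the tested network $g$---the very quantities controlled by link deletion proofness---rather than marginals evaluated at the reduced network $g-h$, since only the former are supplied for free by LDP. Once this correspondence is fixed, both directions are short, and the whole force of the theorem rests on convexity being the exact hypothesis that promotes the one-at-a-time guarantee of LDP to the simultaneous guarantee of SLDP.
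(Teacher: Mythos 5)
Your argument coincides with the paper's own proof in Appendix A.1: the same reduction to the single inclusion $\mathcal{D}(\varphi) \subseteq \mathcal{D}_s(\varphi)$, the same if-direction (LDP supplies the single-link deletion marginals $\varphi_i(g) - \varphi_i(g-ij) \geqslant 0$, whose sum triggers convexity to yield $\varphi_i(g) \geqslant \varphi_i(g-h)$ for every $h \subset L_i(g)$), and the same essentially unconditional only-if direction, which the paper merely dresses up as a proof by contradiction. The ``delicate'' alignment you flag at the end is a genuine looseness, but the paper resolves it exactly as you propose---its proof silently applies the convexity condition in deletion form at the tested LDP network $g$ itself, with marginals evaluated at $g$ rather than at $g-h$---so your proposal is correct and takes essentially the same route.
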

For a proof of Theorem \ref{9:equiv:Deletion} I refer to Appendix A.1 of this survey.

The convexity property on the payoff structure $\varphi$ requires that the sign of the sum of values from adding one link to a network from a set of links fully determines whether adding all links is beneficial or not. Hence, looking at links one-by-one gives complete information about whether it is beneficial to add all links to the network or not.

\paragraph{Adding links to networks.}

Next I consider how players assess the addition of a link to an existing network. Again we take the idea of consent in link formation as central into our reasoning here. This implies that both parties in the formation of a new link have to agree that adding this link is beneficial.
\begin{numm}
\setcounter{llst}{2}
\item  A network $g\in \mathbb{G}^{N}$ is \textbf{link addition proof} (LAP) for $\varphi$ if for all $i,j\in N$ with $ij \notin g$, it holds that $\varphi_{i} (g+ij) > \varphi_{i}(g) $ implies $\varphi_{j} (g+ij)< \varphi_{j}(g)$.
\\
Link addition proofness states that there are no incentives for any pair of players to form an additional link. This is based on the requirement of mutual consent in link formation. Indeed, if one player would like to add a link, the other player would have strong objections. In this case this is formulated as that, if one player has benefits from forming the link, the other (consenting) party has losses and, thus, would withhold her consent.
\\
We denote by $\mathcal{A} (\varphi ) \subset \mathbb{G}^N$ the class of all link addition proof networks for the given payoff function $\varphi$ \citep{JacksonWolinsky1996}.

\item A network $g\in \mathbb{G}^{N}$ is \textbf{strict link addition proof} (SLAP) for $\varphi$ if for all $i,j\in N$, it holds that $ij \notin g$ if and only if $\varphi_{i} (g+ij) < \varphi_{i}(g) $ as well as $\varphi_{j} (g+ij) < \varphi_{j}(g)$.
\\
Strict link addition proofness is a far stronger notion that LAP. Indeed, it requires that both players agree that forming an additional link between them is not beneficial for either of them. This agreement is imposed and only a certain very specific type of network payoff structures would support such networks to exist. Consequently, it has to be expected that, for an arbitrary regular network payoff function, only a rather small class of networks actually satisfies this property.
\\
We denote by $\mathcal{A}_s (\varphi ) \subset \mathbb{G}^N$ the class of all strict link addition proof networks for the given payoff structure $\varphi$ \citep{GillesSarangi2010}.
\end{numm}
The introduced notions of link addition proofness require some clarification. These two notions indeed only seem to partially cover the idea that a network is stable if it satisfies the property that ``if $i$ has an incentive to form an additional link with $j$, then $j$ has no incentive to form a link with $i$''. This is subject to the next discussion.

To understand link addition proofness in more detail, we can reformulate it. Indeed, a network $g$ is link addition proof if and only if for all players $i,j \in N$ with $ij \notin g \colon$
\begin{equation}
\varphi_i (g + ij) \geqslant \varphi_i (g) \mbox{ implies } \varphi_j (g + ij) \leqslant \varphi_j (g) .
\end{equation}
This has some interesting consequences regarding the interpretation of the LAP property. First, a link $ij \notin g$ for some $i,j \in N$ is \emph{non-discerning} if it holds that
\begin{equation}
	\varphi_i (g + ij) = \varphi_i (g) \mbox{ as well as } \varphi_j (g + ij) = \varphi_j (g).
\end{equation}
From the derivation above, the definition of link addition proofness is indeed ambiguous whether any non-discerning link $ij$ should be in the network for it to be LAP or not. Hence, such non-discerning links can arbitrarily be added to or deleted from networks without the LAP property being affected. Thus, the class of non-discerning links makes the determination of LAP networks ``fuzzy''.

To address this issue of the addition or deletion of non-discerning links, I introduce a third type of link addition proofness:
\begin{numm}
\setcounter{llst}{4}
\item A network $g\in \mathbb{G}^{N}$ is \textbf{$\star$-link addition proof} ($\star$-LAP) for $\varphi$ if for all players $i,j\in N$, it holds that if $ij \notin g$, then $\varphi_{i} (g+ij) \geqslant \varphi_{i}(g) $ implies $\varphi_{j} (g+ij)< \varphi_{j}(g)$.
\\
We denote by $\mathcal{A}_{\star} (\varphi ) \subset \mathbb{G}^N$ the class of all $\star$-link addition proof networks for the given payoff structure $\varphi$.
\end{numm}
This minor modification of the definition of link addition proofness simply requires that all non-discerning links should be part of a $\star$-link addition proof network. This makes the definition unambiguous.
\begin{example}
To delineate the three link addition proofness concepts introduced here, we can explore an example of a network payoff function in which these concepts result in different classes of networks. We consider three players and all possible networks, i.e., $N = \{ 1,2,3 \}$ and $\mathbb G^N = \{ g \mid g \subset g^N \}$ where $g^N = \{ 12,23,13 \}$. Note that there are exactly eight possible networks on $N$, i.e., $\# \mathbb G^N =8$.
\\
We now consider a particular network payoff function $\varphi$ on the generated class of networks $\mathbb{G}^N$ on $N$. All potential network payoffs represented by $\varphi$ can be represented in an appropriately constructed table:

\begin{center}
\begin{tabular}{|l|c|c|c||c|}
\hline
\textbf{Network} $g$ & $\varphi_1 (g)$ & $\varphi_2 (g)$ & $\varphi_3 (g)$ & \textbf{Stability} \\
\hline
$g^0 = \varnothing$ & 0 & 0 & 0 & LAP \\
$g^1 = \{ 12 \}$ & 0 & 0 & 1 & $\star$-LAP \\
$g^2 = \{ 13 \}$ & 0 & 0 & 0 & \\
$g^3 = \{ 23 \}$ & 0 & 0 & 0 & \\
$g^4 = \{ 12,13 \}$ & 2 & 1 & 0 & \\
$g^5 = \{ 12,23 \}$ & 1 & 2 & 0 & \\
$g^6 = \{ 13,23 \}$ & 0 & 1 & 0 & \\
$g^7 = g^N$ & 3 & 3 & 3 & SLAP \\
\hline
\end{tabular}

\end{center}
First, note that $g^0$ is link addition proof, but not $\star$-link addition proof. Indeed, if any link is added to the empty network, no payoffs are changed for any of the players involved. On the other hand, there are no losses, thus precluding that $g^0$ is $\star$-link addition proof.
\\
Next, $g^1$ is $\star$-link addition proof, but not strong link addition proof. Indeed, any addition of a link to $g^1$ results into a loss for player 3. However, adding link $13$ results into a strict gain for player 1, implying that $g^1$ is not strong link addition proof.
\\
Third, the complete network $g^N$ is strong link addition proof by tautology. Indeed, there are no links to be added to this network, and therefore vacuously the property of strong link addition proofness is satisfied.
\\
I remark that none of the other networks have any link addition proofness properties.
\end{example}
Next I explore the equivalence of these link addition proofness concepts. In order to explore these equivalences effectively, I introduce two auxiliary properties of the network payoff structure.
\begin{definition}
Consider a network payoff structure $\varphi$ on $\mathbb{G}^N$. Then:
\begin{itemize}
\item The structure $\varphi$ is said to be \textbf{discerning} on the class of networks $\mathbb{G} \subset \mathbb{G}^N$ if for every network $g \in \mathbb{G}$ it holds that for any pair $i,j \in N$ with $ij \notin g$ either $\varphi_{i} (g+ij) \neq \varphi_{i}(g)$ or $\varphi_{j} (g+ij) \neq \varphi_{j}(g)$ or both.

\item The structure $\varphi$ is said to be \textbf{uniform} on the class of networks $\mathbb{G} \subset \mathbb{G}^N$ if for every network $g \in \mathbb{G}$ and for any pair $i,j \in N$ with $ij \notin g$ it holds that
\begin{equation}
\varphi_i (g + ij) \geqslant \varphi_i (g) \mbox{ implies } \varphi_j (g + ij) \geqslant \varphi_j (g) .
\end{equation}
\end{itemize}
\end{definition}
Using these auxiliary concepts we can now show the following equivalences:
\begin{theorem} \label{9:equiv:Addition}
Let $\varphi$ be some network payoff structure on the class of all networks $\mathbb{G}^N$ on the set of players $N$. Then the following properties hold:
\begin{abet}
\item $g^N \in \mathcal{A}_s (\varphi ) \subset \mathcal{A}_{\star} (\varphi ) \subset \mathcal{A} (\varphi )$;

\item It holds that $\mathcal{A}_{\star} (\varphi ) = \mathcal{A} (\varphi )$ if and only if $\varphi$ is discerning on $\mathcal{A} (\varphi )$, and;

\item It holds that $\mathcal{A}_s (\varphi ) = \mathcal{A}_{\star} (\varphi )$ if and only if $\varphi$ is uniform on $\mathcal{A}_{\star} (\varphi )$.
\end{abet}
\end{theorem}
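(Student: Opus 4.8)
The plan is to prove the inclusion chain in (a) directly from the definitions, and then observe that in both (b) and (c) the ``forward'' inclusion is already supplied by (a); hence each equivalence reduces to characterising exactly when the \emph{reverse} inclusion holds. The main tool throughout is the contrapositive reformulation of LAP recorded before the statement, namely that $g \in \mathcal{A}(\varphi)$ is equivalent to ``$\varphi_i(g+ij) \geqslant \varphi_i(g)$ implies $\varphi_j(g+ij) \leqslant \varphi_j(g)$ for all $i,j$ with $ij \notin g$'', which by the symmetry of the quantifier I may invoke in either orientation of the pair $\{i,j\}$. For (a): the membership $g^N \in \mathcal{A}_s(\varphi)$ is vacuous, since no link can be added to the complete network; the inclusion $\mathcal{A}_s(\varphi) \subset \mathcal{A}_{\star}(\varphi)$ holds because SLAP forces $\varphi_i(g+ij) < \varphi_i(g)$ for every addable link, so the antecedent $\varphi_i(g+ij) \geqslant \varphi_i(g)$ of the $\star$-LAP implication is never met and that implication holds vacuously; and $\mathcal{A}_{\star}(\varphi) \subset \mathcal{A}(\varphi)$ holds because the antecedent of LAP ($\varphi_i(g+ij) > \varphi_i(g)$) is stronger than that of $\star$-LAP ($\varphi_i(g+ij) \geqslant \varphi_i(g)$), so $\star$-LAP yields LAP at once.

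For (b), given (a) it suffices to show that $\mathcal{A}(\varphi) \subset \mathcal{A}_{\star}(\varphi)$ if and only if $\varphi$ is discerning on $\mathcal{A}(\varphi)$. For the ``if'' direction I would take $g \in \mathcal{A}(\varphi)$ and $ij \notin g$ with $\varphi_i(g+ij) \geqslant \varphi_i(g)$: if this inequality is strict, LAP delivers $\varphi_j(g+ij) < \varphi_j(g)$ directly, while if it is an equality, the discerning hypothesis forces $\varphi_j(g+ij) \neq \varphi_j(g)$, and LAP applied to the swapped pair $(j,i)$ rules out $\varphi_j(g+ij) > \varphi_j(g)$, leaving $\varphi_j(g+ij) < \varphi_j(g)$ as required. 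For the ``only if'' direction, any non-discerning link at some $g \in \mathcal{A}(\varphi)$ would satisfy the antecedent of $\star$-LAP by equality while violating its strict conclusion, exhibiting an element of $\mathcal{A}(\varphi) \setminus \mathcal{A}_{\star}(\varphi)$.

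For (c), again by (a) it suffices to characterise when $\mathcal{A}_{\star}(\varphi) \subset \mathcal{A}_s(\varphi)$. I would first record that one half of the SLAP biconditional is automatic: if $ij \in g$ then $g+ij=g$, so the strict-loss condition cannot hold, meaning that membership in $\mathcal{A}_s(\varphi)$ amounts precisely to ``every addable link is a strict loss for \emph{both} endpoints''. For the ``if'' direction, assume $\varphi$ uniform on $\mathcal{A}_{\star}(\varphi)$ and fix $g \in \mathcal{A}_{\star}(\varphi)$ with $ij \notin g$: were $\varphi_i(g+ij) \geqslant \varphi_i(g)$, then $\star$-LAP would give $\varphi_j(g+ij) < \varphi_j(g)$ whereas uniformity would give $\varphi_j(g+ij) \geqslant \varphi_j(g)$, a contradiction; hence $\varphi_i(g+ij) < \varphi_i(g)$, and the symmetric argument in orientation $(j,i)$ handles $j$, so SLAP holds. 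The ``only if'' direction is essentially vacuous: if $\mathcal{A}_{\star}(\varphi) = \mathcal{A}_s(\varphi)$, then every $g$ in this common class satisfies SLAP, so $\varphi_i(g+ij) < \varphi_i(g)$ always holds and the antecedent of the uniformity implication is never triggered.

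The step I expect to be the main obstacle is the equality case in the ``if'' direction of (b): plain LAP is silent when $\varphi_i(g+ij) = \varphi_i(g)$, and the argument closes only by combining the discerning hypothesis (to exclude $\varphi_j(g+ij) = \varphi_j(g)$) with a second application of LAP in the \emph{swapped} orientation $(j,i)$ (to exclude $\varphi_j(g+ij) > \varphi_j(g)$). Keeping the pair orderings and the strict-versus-weak inequalities exactly straight in that chain is the genuinely delicate bookkeeping; every other step is either a vacuity observation or a one-line contradiction.
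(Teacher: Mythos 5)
Your proposal is correct and follows essentially the same route as the paper's proof in Appendix A.2: the chain in (a) by vacuity and antecedent-strengthening, the discerning/equality-case analysis for (b), and the uniformity-versus-$\star$-LAP contradiction plus the vacuous converse for (c). You are in fact slightly more explicit than the paper in two spots---ruling out $\varphi_j(g+ij) > \varphi_j(g)$ via the swapped application of LAP in (b), where the paper implicitly relies on its earlier weak-inequality reformulation of LAP, and noting that the ``$ij \in g$'' half of the SLAP biconditional is automatic in (c)---but these are refinements of the same argument, not a different one.
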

For a proof of Theorem \ref{9:equiv:Addition} I refer to Appendix A.2 in this survey. Furthermore, from Theorem \ref{9:equiv:Addition} it is easily concluded that the following equivalence also holds:
\begin{corollary}
SLAP and LAP are equivalent concepts for payoff structure $\varphi$ in the sense that $\mathcal{A}_s (\varphi ) = \mathcal{A} (\varphi )$ if and only if the payoff structure $\varphi$ is discerning and uniform on $\mathcal{A} (\varphi )$.
\end{corollary}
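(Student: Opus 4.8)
The plan is to derive the Corollary directly from Theorem~\ref{9:equiv:Addition} by chaining its three parts together along the inclusion chain $\mathcal{A}_s(\varphi) \subset \mathcal{A}_{\star}(\varphi) \subset \mathcal{A}(\varphi)$ supplied by part (a). No fresh manipulation of the defining LAP/SLAP inequalities is needed, since all of that labour has already been carried out in establishing Theorem~\ref{9:equiv:Addition}; the entire task reduces to careful bookkeeping of \emph{which} class each auxiliary property (discerning, uniform) is being asserted on.

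For the ``only if'' direction I would assume $\mathcal{A}_s(\varphi) = \mathcal{A}(\varphi)$. Because part (a) sandwiches $\mathcal{A}_{\star}(\varphi)$ between the two equal ends, the middle class collapses too, giving $\mathcal{A}_s(\varphi) = \mathcal{A}_{\star}(\varphi) = \mathcal{A}(\varphi)$. Applying part (b) to the equality $\mathcal{A}_{\star}(\varphi) = \mathcal{A}(\varphi)$ yields that $\varphi$ is discerning on $\mathcal{A}(\varphi)$; applying part (c) to the equality $\mathcal{A}_s(\varphi) = \mathcal{A}_{\star}(\varphi)$ yields that $\varphi$ is uniform on $\mathcal{A}_{\star}(\varphi)$. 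Since we have already established $\mathcal{A}_{\star}(\varphi) = \mathcal{A}(\varphi)$, uniformity on $\mathcal{A}_{\star}(\varphi)$ is literally uniformity on $\mathcal{A}(\varphi)$, so $\varphi$ is both discerning and uniform on $\mathcal{A}(\varphi)$, as claimed. Conversely, assuming $\varphi$ is discerning and uniform on $\mathcal{A}(\varphi)$, the discerning hypothesis lets me invoke part (b) to obtain $\mathcal{A}_{\star}(\varphi) = \mathcal{A}(\varphi)$. Part (c) requires uniformity on $\mathcal{A}_{\star}(\varphi)$, but having just identified $\mathcal{A}_{\star}(\varphi)$ with $\mathcal{A}(\varphi)$, the assumed uniformity on $\mathcal{A}(\varphi)$ is exactly what part (c) consumes, delivering $\mathcal{A}_s(\varphi) = \mathcal{A}_{\star}(\varphi)$. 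Chaining the two equalities gives $\mathcal{A}_s(\varphi) = \mathcal{A}_{\star}(\varphi) = \mathcal{A}(\varphi)$, and in particular $\mathcal{A}_s(\varphi) = \mathcal{A}(\varphi)$.

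The one place demanding attention—and the step I expect to be the only genuine obstacle—is the domain matching between the two auxiliary properties. Theorem~\ref{9:equiv:Addition}(c) is phrased in terms of uniformity on $\mathcal{A}_{\star}(\varphi)$, whereas the Corollary speaks of uniformity on $\mathcal{A}(\varphi)$; these coincide only after the discerning hypothesis has, via part (b), equated the two classes. The proof must therefore deploy discerningness \emph{first}, so that $\mathcal{A}_{\star}(\varphi) = \mathcal{A}(\varphi)$ is in hand before uniformity is ever applied. Once this ordering is respected the argument is purely formal, and everything else follows from the inclusion chain of part (a).
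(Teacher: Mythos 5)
Your proof is correct and follows exactly the route the paper intends: the paper offers no written proof, stating only that the corollary is ``easily concluded'' from Theorem~\ref{9:equiv:Addition}, and your chaining of parts (a)--(c) is precisely that derivation made explicit. Your attention to the domain-matching issue---that uniformity in part (c) is stated on $\mathcal{A}_{\star}(\varphi)$ and only coincides with uniformity on $\mathcal{A}(\varphi)$ after part (b) has equated the two classes, so discerningness must be deployed first---is the one genuinely non-trivial bookkeeping point, and you handle it correctly in both directions.
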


\subsection{Notions of pairwise stability}

In the previous discussion, I introduced four fundamental stability concepts on adding links to and deleting links from a network. These four basic notions can be combined to define derived concepts. The first concept---known as \emph{pairwise stability} \citep{JacksonWolinsky1996}---combines the weakest link stability notions and has been the subject of extensive discussion in the literature. This notion implicitly assumes that players only consider the deletion and addition of one specific link at a time.
\begin{numm}
\setcounter{llst}{5}
\item Network $g$ is \textbf{pairwise stable} (PS) for $\varphi$ if $g$ is link deletion proof as well as link addition proof. We denote by $\mathcal{P} (\varphi ) = \mathcal{D} (\varphi ) \cap \mathcal{A} (\varphi )$ the family of pairwise stable networks for the payoff function $\varphi $.
\end{numm}
The original pairwise stability concept---introduced by \citet{JacksonWolinsky1996}---only concerns itself with the contemplation of adding a single link to or deleting a single link from a given network. If there are no incentives for players to either add a link to the existing network or delete a link from the network, then the network is ``pairwise stable'': There are no incentives present under the hypothesis of mutual consent in link formation that anybody wants to change a single link in this network.

Two further derived stability concepts, which strengthen the notion of pairwise stability, have particular relevance in the theory of consent in link formation. Strong pairwise stability \citep{Consent2006,GillesChakrabarti2012} assumes that players can delete an arbitrary collection of links under their control. Hence, they can veto any link in which they participate. On the other hand, the contemplation of adding links remains confined to adding a single link.

Strict pairwise stability \citep{GillesSarangi2010} is the strongest notion in this framework. It not only considers that players can delete any number of their existing links, but also that they are assumed to be in agreement regarding the addition of a link to an existing network. It is clear that for an arbitrary network payoff structure, the collection of such strictly pairwise stable networks might well be empty. Only for certain network payoff structures such networks might emerge.
\begin{numm}
\setcounter{llst}{6}
\item Network $g$ is \textbf{strongly pairwise stable} (SPS) for $\varphi$ if it is strong link deletion proof as well as link addition proof.
\\
We denote by $\mathcal{P}_{\star} (\varphi ) = \mathcal{D}_s (\varphi ) \cap \mathcal{A} (\varphi )$ the family of strongly pairwise stable networks for the payoff function $\varphi $.

\item Network $g$ is \textbf{strictly pairwise stable} (SPS*) for $\varphi$ if it is strong link deletion proof as well as strict link addition proof.
\\
We denote by $\mathcal{P}_{s} (\varphi ) = \mathcal{D}_s (\varphi ) \cap \mathcal{A}_s (\varphi )$ the family of strictly pairwise stable networks for the payoff function $\varphi $.
\end{numm}
These three pairwise stability concepts generate different classes of networks in most cases. I consider an example to illustrate this.
\begin{example}
Again consider three players and all potentially generated networks, i.e., $N = \{ 1,2,3 \}$ with $g^N = \{ 12,23,13 \}$. Now, consider a network payoff function $\varphi$ on the generated class of networks $\mathbb{G}^N$ on $N$ represented in the following table:

\smallskip
\begin{center}
\begin{tabular}{|l|c|c|c||c|}
\hline
\textbf{Network} $g$ & $\varphi_1 (g)$ & $\varphi_2 (g)$ & $\varphi_3 (g)$ & \textbf{Stability} \\
\hline
$g^0 = \varnothing$ & 0 & 0 & 0 & Strongly PS \\
$g^1 = \{ 12 \}$ & 0 & 0 & 5 & Strictly PS \\
$g^2 = \{ 13 \}$ & 0 & 0 & 0 & \\
$g^3 = \{ 23 \}$ & 0 & 0 & 0 & \\
$g^4 = \{ 12,13 \}$ & -1 & 0 & 0 & \\
$g^5 = \{ 12,23 \}$ & 0 & -1 & 0 & \\
$g^6 = \{ 13,23 \}$ & 0 & 1 & 1 & \\
$g^7 = g^N$ & 3 & 3 & 3 & PS \\
\hline
\end{tabular}

\end{center}
Again we discuss the properties of these networks.
\\
First, note that the empty network $g^0$ is trivially SLDP and in this case as well LAP. Therefore, it is indeed strongly pairwise stable.\footnote{It should be remarked that networks with at most one link are SLDP if they are LDP. Therefore, they are strongly pairwise stable if they are link addition proof and link deletion proof.}
\\
Second, $g^1$ is LDP and, therefore, SLDP. Moreover, $g^1$ is SLAP. Indeed, adding link $13$ to $g^1$ results into strict losses for both players 1 and 3. Similarly, for link $23$. Thus, we conclude that $g^1$ is strictly pairwise stable.
\\
Finally, the complete network $g^N$ is SLAP due to being the maximal network. Furthermore, $g^N$ is LDP. However, $g^N$ is not SLDP. player 3 has the strict incentive to delete both her links and revert to network $g^1$.
\\
We conclude from this discussion that this simple network payoff example induces three distinct classes of pairwise stable networks.
\end{example}
Using the equivalence results stated in Theorems \ref{9:equiv:Deletion} and \ref{9:equiv:Addition}, we can now conclude the following equivalences between the formulated pairwise stability concepts. The proofs are rather transparent and therefore omitted.
\begin{corollary}
Consider a network payoff structure $\varphi$ on the class of all networks $\mathbb{G}^N$ on set of players $N$. Then the following relationships hold:
\begin{abet}
\item $\mathcal{P}_s (\varphi ) \subset \mathcal{P}_{\star} (\varphi ) \subset \mathcal{P} (\varphi )$;

\item Pairwise stability and strong pairwise stability are equivalent concepts for $\varphi$ in the sense that $\mathcal{P} (\varphi ) = \mathcal{P}_{\star} (\varphi )$ if and only if $\varphi$ is convex on $\mathcal{P} (\varphi )$;

\item Strong pairwise stability and strict pairwise stability are equivalent concepts for $\varphi$ in the sense that $\mathcal{P}_{\star} (\varphi ) = \mathcal{P}_{s} (\varphi )$ if and only if $\varphi$ is discerning and uniform on $\mathcal{P}_{\star} (\varphi )$, and;

\item Pairwise stability and strict pairwise stability are equivalent concepts for $\varphi$ in the sense that $\mathcal{P} (\varphi ) = \mathcal{P}_{s} (\varphi )$ if and only if $\varphi$ is convex, discerning as well as uniform on $\mathcal{P} (\varphi )$.
\end{abet}
\end{corollary}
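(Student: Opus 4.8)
The plan is to exploit the fact that each of the three pairwise stability notions is an intersection of a deletion class with an addition class, namely $\mathcal{P}(\varphi)=\mathcal{D}(\varphi)\cap\mathcal{A}(\varphi)$, $\mathcal{P}_{\star}(\varphi)=\mathcal{D}_s(\varphi)\cap\mathcal{A}(\varphi)$ and $\mathcal{P}_s(\varphi)=\mathcal{D}_s(\varphi)\cap\mathcal{A}_s(\varphi)$. Passing from $\mathcal{P}$ to $\mathcal{P}_{\star}$ upgrades only the deletion factor from $\mathcal{D}$ to $\mathcal{D}_s$ while keeping $\mathcal{A}$ fixed, and passing from $\mathcal{P}_{\star}$ to $\mathcal{P}_s$ upgrades only the addition factor from $\mathcal{A}$ to $\mathcal{A}_s$ while keeping $\mathcal{D}_s$ fixed. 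This reduces each equivalence to the corresponding single-factor result already proved in Theorems~\ref{9:equiv:Deletion} and~\ref{9:equiv:Addition}. Part~(a) then needs no hypotheses: from the inclusion $\mathcal{D}_s(\varphi)\subset\mathcal{D}(\varphi)$ noted earlier and from $\mathcal{A}_s(\varphi)\subset\mathcal{A}(\varphi)$ in Theorem~\ref{9:equiv:Addition}(a), intersecting with the fixed factor yields $\mathcal{P}_s(\varphi)\subset\mathcal{P}_{\star}(\varphi)\subset\mathcal{P}(\varphi)$.

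For part~(b) I would keep the common addition factor $\mathcal{A}(\varphi)$ fixed and apply Theorem~\ref{9:equiv:Deletion} to the deletion factor. Since $\mathcal{D}_s\subset\mathcal{D}$ already gives $\mathcal{P}_{\star}\subset\mathcal{P}$, the identity $\mathcal{P}(\varphi)=\mathcal{P}_{\star}(\varphi)$ is equivalent to the inclusion $\mathcal{D}(\varphi)\cap\mathcal{A}(\varphi)\subset\mathcal{D}_s(\varphi)$, i.e.\ to the statement that every pairwise stable network is strong link deletion proof. For the ``if'' direction, take $g\in\mathcal{P}(\varphi)$; then $g$ is link deletion proof, and as $g\in\mathcal{P}(\varphi)$ the convexity condition of Theorem~\ref{9:equiv:Deletion} holds at $g$, promoting $g$ to $\mathcal{D}_s(\varphi)$ and hence to $\mathcal{P}_{\star}(\varphi)$. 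For the ``only if'' direction, if $\mathcal{P}(\varphi)=\mathcal{P}_{\star}(\varphi)$ then every $g\in\mathcal{P}(\varphi)$ lies in $\mathcal{D}_s(\varphi)$, and the necessity half of Theorem~\ref{9:equiv:Deletion} forces convexity to hold at each such $g$, that is, $\varphi$ is convex on $\mathcal{P}(\varphi)$.

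Part~(c) is entirely parallel, now fixing the common deletion factor $\mathcal{D}_s(\varphi)$ and applying to the addition factor the equivalence $\mathcal{A}_s(\varphi)=\mathcal{A}(\varphi)\iff\varphi$ discerning and uniform on $\mathcal{A}(\varphi)$ (the corollary to Theorem~\ref{9:equiv:Addition}). As $\mathcal{A}_s\subset\mathcal{A}$ gives $\mathcal{P}_s\subset\mathcal{P}_{\star}$ for free, $\mathcal{P}_{\star}(\varphi)=\mathcal{P}_s(\varphi)$ reduces to the claim that every network in $\mathcal{P}_{\star}(\varphi)=\mathcal{D}_s(\varphi)\cap\mathcal{A}(\varphi)$ is strict link addition proof, and the two directions run as in~(b), with the discerning-and-uniform condition imposed precisely on $\mathcal{P}_{\star}(\varphi)$. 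Part~(d) then follows by composing~(b) and~(c): the chain $\mathcal{P}_s\subset\mathcal{P}_{\star}\subset\mathcal{P}$ shows that $\mathcal{P}(\varphi)=\mathcal{P}_s(\varphi)$ holds if and only if both $\mathcal{P}(\varphi)=\mathcal{P}_{\star}(\varphi)$ and $\mathcal{P}_{\star}(\varphi)=\mathcal{P}_s(\varphi)$ hold; invoking~(b) and~(c), and noting that under $\mathcal{P}(\varphi)=\mathcal{P}_{\star}(\varphi)$ the phrases ``on $\mathcal{P}_{\star}(\varphi)$'' and ``on $\mathcal{P}(\varphi)$'' coincide, this is exactly the requirement that $\varphi$ be convex, discerning and uniform on $\mathcal{P}(\varphi)$.

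The one point needing genuine care---and the main obstacle---is the domain on which the structural conditions are imposed. Theorems~\ref{9:equiv:Deletion} and~\ref{9:equiv:Addition} phrase convexity, discerningness and uniformity over $\mathcal{D}(\varphi)$, $\mathcal{A}(\varphi)$ and $\mathcal{A}_{\star}(\varphi)$, whereas the corollary must impose them only over the strictly smaller intersections $\mathcal{P}(\varphi)$ and $\mathcal{P}_{\star}(\varphi)$. The reductions above are legitimate precisely because each equivalence is established network-by-network: whether a given link deletion proof $g$ is also strong link deletion proof is governed by the convexity condition evaluated at that single $g$, and analogously on the addition side. Hence it suffices to require the conditions exactly where the weaker and stronger classes can differ, namely on the common intersection. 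I would therefore make this localized, per-network reading of Theorems~\ref{9:equiv:Deletion} and~\ref{9:equiv:Addition} explicit before carrying out the four reductions, since it is what permits the restriction from $\mathcal{D}(\varphi)$ and $\mathcal{A}(\varphi)$ down to $\mathcal{P}(\varphi)$ and $\mathcal{P}_{\star}(\varphi)$ without weakening either direction of the stated equivalences.
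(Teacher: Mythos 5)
Your proposal is correct and follows exactly the route the paper intends: the paper declares that the corollary follows from Theorems \ref{9:equiv:Deletion} and \ref{9:equiv:Addition} and omits the ``rather transparent'' proofs, and your factor-by-factor reduction via $\mathcal{P}(\varphi)=\mathcal{D}(\varphi)\cap\mathcal{A}(\varphi)$, $\mathcal{P}_{\star}(\varphi)=\mathcal{D}_s(\varphi)\cap\mathcal{A}(\varphi)$ and $\mathcal{P}_s(\varphi)=\mathcal{D}_s(\varphi)\cap\mathcal{A}_s(\varphi)$, keeping one factor fixed at each step, is precisely that omitted argument. Your explicit per-network localization of convexity, discerningness and uniformity is exactly the detail needed to legitimately restrict the hypotheses from $\mathcal{D}(\varphi)$, $\mathcal{A}(\varphi)$ and $\mathcal{A}_{\star}(\varphi)$ down to $\mathcal{P}(\varphi)$ and $\mathcal{P}_{\star}(\varphi)$, and it is consistent with how the appendix proofs of both theorems actually operate, since each equivalence there is established pointwise at a single network $g$.
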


\subsection{Strong stability}

Next I discuss some of the ideas put forward by \citet{JacksonNouweland2005}. They investigated networks that emerge if coalitions of arbitrary size can make changes to the network in a coordinated fashion to the coalition's overall benefit.\footnote{This approach is akin to the strong equilibrium concept proposed by \citet{Aumann1959} in non-cooperative game theory. Jackson-Nouweland's concept of strong stability can be viewed as a network theoretical implementation of the ideas behind Aumann's strong equilibrium concept.} As such strong stability is an extension of the pairwise stability concept to allow arbitrary coalitions to adjust the network structure under their control.

As a preliminary we denote a \emph{coalition} as any subset $S$ of players in $N$; hence, a coalition is any $S \subset N$. This includes the empty coalition $\varnothing$ as well as the ``grand'' coalition $N$ itself. In a non-cooperative game $( \mathcal A, \pi )$, for any coalition $S \subset N$ and strategy profile $a \in A$ we denote by $a_S$ the $S$-restriction of $a$ defined by $(a_j)_{j \in S}$ and by $a_{N \setminus S}$ its complement $(a_k)_{k \notin S}$.

Now, in a non-cooperative game $( \mathcal A, \pi )$ a strategy tuple $a \in A$ is a \emph{strong equilibrium} if for every (non-empty) coalition of players $S \subset N$ and every coordinated strategic deviation $b_S = (b_i)_{i \in S} \in A_S = \prod_{i \in S} A_i$ it holds that
\begin{equation}
	\pi_i \left( a_{N \setminus S} , b_S \right) \leqslant \pi_i (a) \qquad \mbox{ for all } i \in S
\end{equation}
Next we introduce the strong stability concept put forward by \citet{JacksonNouweland2005}. The next definition essentially transposes strong equilibrium conditions to network formation situations.
\begin{definition}
Let $\varphi$ be a network payoff function on $N$ and consider the corresponding Myerson model $\Gamma^m_\varphi = (\mathcal A^m,\pi^m)$.
\begin{numm}
	\item A network $g' \in \mathbb{G}^N$ \textbf{can be obtained from} network $g \in \mathbb{G}^N$ through the coordinated actions of coalition $S \subset N$ if $g' = g +h^+ - h^-$, where $h^+ \subset g^S = \{ ij \mid i,j \in S \}$ and $h^- \subset \cup_{i \in S} \, L_i (g)$.
	\item A network $g \in G^N$ is \textbf{strongly stable} if for every coalition $S \subset N$ and every network $g'$ that is obtainable from network $g$ through coordinated actions from coalition $S$ it holds that $\varphi_i (g') > \varphi_i (g)$ for some player $i \in S$ implies that there exists some other player $j \in S$ with $\varphi_j (g') < \varphi_j (g)$.
\end{numm}
\end{definition}
It should be remarked that \citet{DuttaMutuswami1997} introduced a slightly different definition of ``strong stability''. They consider that all members of $S$ need to be made strictly better off for a deviation to be successful.\footnote{In the definition used by \citet{JacksonNouweland2005} a deviation needs to make all members of $S$ to be at least as well off and making one member strictly better off. }

Strong equilibrium is a very demanding concept and these equilibria do not exist in many game theoretic decision situations. Similarly, the notion of strong stability is equally demanding, resulting that such networks rather unlikely exist. The next example illustrates these issues and introduces the notion of \emph{costly} link formation that will be explored further in the next two subsections.
\begin{example} \label{ex:StrongStab} \textbf{(Costly trade networks)} \\
This example of a Walrasian trade network has been introduced seminally in \citet{JacksonWatts2002} and further developed in \citet{JacksonNouweland2005} and \citet{GCS2011}. It considers an economy of $n$ players who trade goods through connection paths. There are two commodities $X$ and $Y$ and all players are endowed with a Cobb-Douglas utility function $u(x,y) = \sqrt{xy}$. All players are assumed to have a commodity endowment of either $(1,0)$ or $(0,1)$ with an equal probability of $\tfrac{1}{2}$.
\\
Players can trade with any other player that they are connected with, directly or indirectly. Hence, there emerge complete markets in each of the components. So, for $n=5$ a network $g = \{ 12,23,45 \}$ generates two components and two markets, namely 123 separated from 45. Additional links, therefore, not always contribute to the extent of these markets: $g' = \{ 12,23,13,45 \}$ results in exactly the same markets 123 and 45.
\\
The cost $c$ of forming any link $ij$ is uniform and set at $c > \tfrac{1}{2}$. The costs of the formation of the trade network are divided equally among the members of a market, being a component of the network.
\\
The network payoff function $\varphi$ is now defined as the expected net benefits from participating in the generated market structure. This can be developed as follows.
\\[1ex]
First, consider the case of a market of the size two. There is a probability of $\tfrac{1}{2}$ that these two players have opposite endowments and a probability of $\tfrac{1}{2}$ that they have the same endowment. Hence, the probability of trade is $\tfrac{1}{2}$ resulting in a Walrasian allocation of $( \tfrac{1}{2}, \tfrac{1}{2})$ resulting in $\varphi = \tfrac{1}{2} \cdot \sqrt{ \tfrac{1}{4}} - \tfrac{1}{2} c = \tfrac{1}{4} - \tfrac{1}{2} c <0$.
\\
More generally consider a market (component) of $k$ players. The probability of $r$ players having endowment $(0,1)$ and $(k-r)$ players having endowment $(1,0)$ is now
\[
^kC_r \left( \tfrac{1}{2} \right)^k-r \cdot \left( \tfrac{1}{2} \right)^r = {}^kC_r \left( \tfrac{1}{2} \right)^k .
\]
The expected gross payoff from trade is now given by
\[
\tfrac{r}{2k} \cdot \left( \frac{k-r}{r} \right)^{\tfrac{1}{2}} + \tfrac{k-r}{2k} \cdot \left( \frac{r}{k-r} \right)^{\tfrac{1}{2}} = \frac{\sqrt{r(k-r)}}{k}
\]
Hence, taking into account that there are exactly $k-1$ links required to build a market for $k$ players, the resulting net payoff from this trade network is given by
\[
\varphi = \frac{1}{k \cdot 2^k} \left[ \sum^{k-1}_{r=1} {}^kC_r \sqrt{r(k-r)} \right] - \frac{(k-1)c}{k} .
\]
Turning to $n=k=3$ it can easily be computed that the net benefits to each player are given by
\[
\varphi = \frac{\sqrt{2}}{4} - \frac{2c}{3} > 0 \mbox{ for } \tfrac{1}{2} < c < \tfrac{3 \sqrt{2}}{8} .
\]
For $n=k=3$ and the given link formation cost range there are two pairwise stable networks, namely the connected network and the (inefficient) empty network. The empty network is bilaterally stable, since creating a single link between two players is not beneficial for the given link formation cost range. On the other hand, the empty network is not strongly stable. Indeed, if all three players coordinate they would create two links to make a beneficial market among them.
\\
This also shows that the connected component based on two links among the three players is strongly stable.
\end{example}
In this section I discussed the stability concept and its variants in the link-based cooperative framework as seminally set out by \citet{JacksonWolinsky1996}. It is clear that these concepts are rather limited in their scope, since they are link-based only. Individual and collective incentives are not truly taken into account. Indeed, considerations are founded on adding and deleting links; the players' incentives are assumed to coincide with the (marginal) benefits generated from these links rather than the individualised payoffs. Next, I return to Myerson's original non-cooperative framework founded on the direct benefits to players to the formation of links.

\section{Refinements of M-networks}

In this section I review stability and equilibrium concepts that refine the class of M-networks that emerges from the Myerson approach to non-cooperative network formation under mutual consent. This literature is founded on the insight that the class of M-networks is very large. This is subject of the next theorem, which states the equivalence of the class of M-networks with the set of strong link deletion proof networks.
\begin{theorem} \label{9:prop:Mlargeness}
Let $\varphi$ be a network payoff function on $N$ and consider the corresponding Myerson model $\Gamma^m_\varphi = (\mathcal A^m,\pi^m)$.
\begin{abet}
\item A network $g \in \mathbb G^N$ is an M-network for $\varphi$ if and only if $g$ is strong link deletion proof for $\varphi$.

\item Suppose that the network payoff structure $\varphi$ is \textbf{link monotone} in the sense that for every player $i \in N$, every network $g \in \mathbb{G}^N$ and every link $ij \notin L_i (g)$ it holds that $\varphi_i (g + ij) \geqslant \varphi_i (g)$. Then every network $g \in \mathbb{G}^N$ is supported as an M-network.
\end{abet}
\end{theorem}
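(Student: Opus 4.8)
The plan is to prove both parts through the Nash-equilibrium characterisation of M-networks, exploiting one asymmetric feature of the Myerson model: a player can always \emph{delete} any of her current links by switching the corresponding own signals from $1$ to $0$, but she can only \emph{add} a link $ij$ when the partner $j$ already signals $\ell_{ji}=1$. This asymmetry is precisely why the class of M-networks coincides with the deletion-based notion SLDP rather than any addition-based one.

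For the forward implication of part (a), let $g$ be an M-network with supporting Nash equilibrium $\ell^g$. Fixing a player $i$ and a link set $h \subset L_i(g)$, I would let $i$ deviate by setting $\ell_{ij}=0$ for each $ij \in h$ and leaving her remaining signals unchanged; since the links in $h$ lie in $g$ they vanish and everything else is untouched, so the induced network is exactly $g-h$. The Nash inequality then reads $\varphi_i(g-h) \leqslant \varphi_i(g)$, i.e.\ the SLDP condition---and this holds for \emph{any} equilibrium supporting $g$, because turning off one's own signals removes links irrespective of the partners' signals. Conversely, given an SLDP network $g$, I would construct the canonical profile $\ell^g_{ij}=1$ iff $ij \in g$; then $g(\ell^g)=g$, and since $\ell^g_{ji}=1$ only when $ij \in g$, no unilateral deviation by $i$ can create a link outside $g$. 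Her reachable networks are therefore exactly $\{\, g-h \mid h \subset L_i(g)\,\}$, so her best-response condition reduces to the SLDP inequalities, making $\ell^g$ a Nash equilibrium and $g$ an M-network.

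For part (b) I would combine part (a) with link monotonicity. Fix a network $g$, a player $i$, and a subset $h = \{ij_1, \ldots, ij_k\} \subset L_i(g)$. Reinserting the links of $h$ into $g-h$ one at a time, link monotonicity yields a weak increase of $\varphi_i$ at each step, since every link being added is absent from the current neighbourhood of $i$. Telescoping these inequalities gives $\varphi_i(g-h) \leqslant \varphi_i((g-h)+h) = \varphi_i(g)$, so $g$ is SLDP and hence, by part (a), an M-network. As $g$ was arbitrary, every network is supported as an M-network.

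The one point requiring care---more bookkeeping than genuine obstacle---is verifying in part (a) that the deletion deviation produces exactly $g-h$ and nothing else: links not involving $i$ are pinned down by $\ell^g_{-i}$, while each link $ij \in L_i(g) \setminus h$ survives because the partner still signals $\ell^g_{ji}=1$. Once this is confirmed, both directions of (a) and the whole of (b) follow directly from the respective equilibrium conditions.
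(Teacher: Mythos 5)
Your proof is correct and takes essentially the same approach as the paper's: the identical deletion-deviation argument for the forward direction of (a), and the same canonical profile $\ell^g_{ij}=1$ iff $ij \in g$ (whose reachable deviations are exactly $\{\, g-h \mid h \subset L_i(g) \,\}$) for the converse. In part (b) you route through part (a) by telescoping link monotonicity to show every network is SLDP, whereas the paper verifies the Nash condition for $\ell^g$ directly, but the underlying computation is the same---your explicit one-link-at-a-time reinsertion merely fills in a step the paper leaves implicit.
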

For a proof of this theorem I refer to Appendix A.3.

The fundamental insights presented as Myerson's Lemma and Theorem \ref{9:prop:Mlargeness} have motivated economists and social scientists to look into ``refinements'' of the Nash equilibrium concept in the Myerson model. These refinement equilibrium concepts have been developed particularly for addressing link formation issues from the perspective of consent. These attempts can be divided into two classes.

First, the standard approach in game theoretic models of network formation is to strictly apply methodological individualistic perspectives. Thus, all motivations emanate from the player decision makers and are not considered to be external to the rational decision making process. This has resulted into a number of equilibrium concepts that simply assume that decision makers have a natural ability to cooperate if the incentives are in favour of such cooperation. Below I present the refinements considered by \citet{BlochJackson2006} and \citet{GCS2011,GillesChakrabarti2012}.

The second approach is to explicitly assume that decision makers are \emph{not} fully individualistic, but adhere to some institutional or trusting norms of behaviour. \citet{Buskens2005} and \citet{GillesSarangi2010} explicitly introduce a model of trusting behaviour through the introduction of a individualised belief or conjecture that other decision makers will form links if they benefit from that. Thus, the trust in network formation is internalised into the player decision makers; all such decision makers adhere to a well-defined norm of decision making that expresses trusting behaviour. This is fully developed in Section 5.

Similarly, certain equilibrium concepts in non-cooperative game theory are founded on institutional signalling systems. The main such concept is Aumann's \emph{correlated equilibrium}, which can be used to introduce institutional arrangements in the decision making processes of players \citep{Aumann1974}. Here these institutions are explicitly modelled as external to these players. They adhere to these institutions since they benefit from applying these institutional behavioural rules instead of acting purely selfish. This is explored fully in Section 6.

\subsection{Pairwise Nash equilibrium and bilateral stability}

\citet{GoyalJoshi2006} introduced a refinement of the M-network concept that implements the idea of cooperation between players to modify the network through coordinated actions. Thus, it is assumed that decision makers can implement bilateral or pairwise coordinated network modification. So, we consider any pair of players $i,j \in N$ who consider how to modify their strategic signals $\ell_i$ and $\ell_j$ to modify the resulting network in their favour. 

This bilaterally coordinated action can be modelled in two different fashion. First, within the Myerson model as the so-called ``pairwise'' Nash equilibrium \citep{GoyalJoshi2006} and, second, as a network stability notion, denoted as ``bilateral'' stability \citep{GCS2011}.\footnote{I remark here that I use a terminology that deviates from the literature. Indeed, the pairwise Nash equilibrium concept in the Myerson model was seminally introduced in \citet{GoyalJoshi2006} and explored further by \citet{BlochJackson2006} and \citet{JoshiSarangi2020}. It refers to M-networks that are additionally link addition proof. Therefore, I use the notion of pairwise Nash equilibrium here in a slightly different way as introduced in \citet{GCS2011}.} This is introduced in the next definition.
\begin{definition}
Let $\varphi$ be a network payoff function on $N$ and consider the corresponding Myerson model $\Gamma^m_\varphi = (\mathcal A^m,\pi^m)$.
\begin{numm}
\item A signal profile $\ell \in A^m$ is a \textbf{pairwise Nash equilibrium} in $\Gamma^m_\varphi$ if $\ell$ is a Nash equilibrium in $\Gamma^m_\varphi$ and for every pair of players $i,j \in N$ it holds that
\begin{equation}
\pi^m_i \left( \ell'_i , \ell'_j , \ell_{-i,j} \right) > \pi^m_i (\ell) \mbox{ implies that } \pi^m_j \left( \ell'_i , \ell'_j , \ell_{-i,j} \right) < \pi^m_j (\ell)
\end{equation}
for all deviations $\ell'_i \in A^m_i$ and $\ell'_j \in A^m_j$. (Here, $\ell_{-i,j}$ refers to the restricted signal profile $(\ell_h)_{h \neq i,j}$.)

\item A network $g \in \mathbb{G}^N$ is \textbf{bilaterally stable} for $\varphi$ if $g$ is strong deletion proof for $\varphi$ and for every pair of players $i,j \in N$ and network $g' = g+\hat{h} -h_i - h_j$ with $\hat{h} \in \{ \, \{ ij \} , \varnothing \}$, $h_i \subset L_i (g)$ and $h_j \subset L_j(g)$ it holds that
\begin{equation}
\varphi_i (g') > \varphi_i (g) \mbox{ implies that } \varphi_j (g') < \varphi_j (g) .
\end{equation}
\end{numm}
\end{definition}
It is not hard to see that in the Myerson model there is a complete equivalence between these two concepts. The pairwise Nash equilibrium is simply a strategic formulation of bilateral stability. I give the following proposition therefore without proof.
\begin{proposition}
Let $\varphi$ be a network payoff function on $N$ and consider the corresponding Myerson model $\Gamma^m_\varphi = (\mathcal A^m,\pi^m)$. A network $g \in \mathbb{G}^N$ is supported through a pairwise Nash equilibrium $\ell \in A^m$ with $g(\ell )=g$ if and only if $g$ is bilaterally stable for $\varphi$.
\end{proposition}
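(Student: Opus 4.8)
The plan is to translate between the strategic and network-theoretic formulations using the identity $\pi^m_i(\ell)=\varphi_i(g(\ell))$ together with Theorem \ref{9:prop:Mlargeness}(a), which already equates the Nash-equilibrium (i.e.\ M-network) condition with strong link deletion proofness. Both the pairwise Nash equilibrium and bilateral stability are built from exactly the same two ingredients---a Nash/SLDP requirement plus a ``no profitable joint deviation by a pair'' requirement---so the proposition reduces to showing that the bilateral deviation clauses coincide. The key technical step is therefore to determine precisely which networks a pair $\{i,j\}$ can reach by jointly changing their signals.

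First I would record the reachability observation. Fix $g$ and a profile $\ell$ with $g(\ell)=g$, and let $\{i,j\}$ deviate to $(\ell'_i,\ell'_j)$ while everyone else keeps $\ell_{-i,j}$. The link $ij$ is under the pair's joint control and may be freely present or absent; a link $ik$ with $k\neq j$ can be present only when $\ell_{ki}=1$, so $i$ can always delete an existing link $ik\in g$ but can create a link to a third party $k$ only if $k$ was already signalling to $i$; links incident to neither $i$ nor $j$ are untouched. Hence every network $g'=g+\hat h-h_i-h_j$ with $\hat h\in\{\{ij\},\varnothing\}$, $h_i\subset L_i(g)$ and $h_j\subset L_j(g)$ is realizable as such a joint deviation from any supporting profile $\ell$.

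For the ``only if'' direction I would take a pairwise Nash equilibrium $\ell$ with $g(\ell)=g$; being a Nash equilibrium it makes $g$ an M-network, so $g$ is strong link deletion proof by Theorem \ref{9:prop:Mlargeness}(a), giving the first clause of bilateral stability, while the second clause follows because each admissible $g'=g+\hat h-h_i-h_j$ is realizable as a joint deviation and the pairwise Nash inequality, rewritten through $\pi^m=\varphi\circ g$, delivers $\varphi_i(g')>\varphi_i(g)\Rightarrow\varphi_j(g')<\varphi_j(g)$. For the converse I would not use an arbitrary supporting profile but the \emph{canonical} one $\ell^g$ with $\ell^g_{ij}=1$ if and only if $ij\in g$; then $g(\ell^g)=g$, strong link deletion proofness makes every $\ell^g_i$ a best response (a unilateral deviation can only delete some of $i$'s links, which is non-improving), so $\ell^g$ is a Nash equilibrium, and because $\ell^g_{ki}=0$ whenever $ik\notin g$ the networks reachable by a deviating pair are \emph{exactly} the sets $g+\hat h-h_i-h_j$, so the bilateral stability inequality translates verbatim into the pairwise Nash condition.

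The main obstacle is this asymmetry in reachable networks: from a non-canonical supporting profile a pair could profitably add a link to a consenting third party, a deviation that bilateral stability never constrains, so the converse cannot be run with an arbitrary Nash profile. Choosing $\ell^g$ is precisely what shrinks the reachable-deviation set to coincide with the bilateral stability set, and recognizing that this matching is required only for the converse---the forward direction needing merely the easy inclusion---is the heart of the argument.
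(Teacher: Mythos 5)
The paper states this proposition explicitly without proof (``The proofs are rather transparent and therefore omitted'' applies in spirit; the text says it is given ``without proof''), so there is no printed argument to compare against; your proof is correct and supplies exactly the intended translation between the strategic and network formulations, via $\pi^m = \varphi \circ g$ and Theorem~\ref{9:prop:Mlargeness}(a). In particular, your observation that an arbitrary supporting Nash profile may allow a deviating pair to \emph{add} links to third parties who superfluously signal consent---deviations that bilateral stability never constrains---and your consequent use of the canonical non-superfluous profile $\ell^g$ in the converse direction is precisely the subtlety that makes the ``transparent'' equivalence actually go through, and you have handled both directions (arbitrary profile for the forward inclusion, canonical profile for the converse) correctly.
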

Although these concepts are quite natural within the context of network formation, the additional benefits are rather limited. Coordinated pairwise activity is well captured by the three pairwise stability concepts that have been introduced in this survey. The notion of unilateral stability (See Section 5) also captures coordinated action in the sense that it is assumed that players respond positively to a player's proposal to change the network if that is to their benefit. Bilateral stability does not extend this to pairs of players, but reverts back to the normal best response rationality principle that others keep their actions unchanged.

\paragraph{Stability of higher orders.}

The notion of bilateral stability can easily be extended to stability of higher orders. Indeed, under bilateral stability it is assumed that coalitions of two players can modify the network as proposed above. This can be extended to coalitions of at most $r$ members, where $r \in \mathbb N$ is the assumed maximum size of the coalition under consideration. This is referred to as ``stability of order $r$'' in \citet{GCS2011}. In particular, if $r=n$, we arrive at the strong stability notion of \citet{JacksonNouweland2005}. This shows that these concepts represent intermediate stability notion between M-networks and strongly stable networks.\footnote{For results concerning these intermediate stability concepts, I refer to the quoted papers.}

\subsection{Two-sided link formation costs}

Example \ref{ex:StrongStab} introduced the idea that there are normally link formation costs. In this particular case the costs of network formation are borne equally among all players that participate in the network. This signifies a collective approach to the allocation of network formation costs. It is more natural to assume that players only bear the costs of the links that they participate in. Next, I develop the idea of link formation costs further and refine the notion of M-networks to capture this.

 In particular, I consider a modification of the Myerson model where the ``intent to form links'' is costly in the sense that approaching another player to form a link involves \emph{explicit} investment of time, effort and energy. Hence, the act of sending a signal is costly. However, if the other player does not reciprocate and the link does not materialise, the player choosing to ``reach out'' still incurs this cost.\footnote{This model of two-sided link formation costs was introduced in \citet{Consent2006} and developed further by \citet{GillesSarangi2010} and \citet{GillesChakrabarti2012}.} This means that if player $i \in N$ contemplates building a link $ij$ with player $j \in N$ and sends a message $\ell_{ij}=1$, she incurs a cost of $c_{ij} >0$. On the other hand, $\ell_{ij} =0$ signifies no link is attempted to be made, which imposes \emph{no} costs on player $i$.

Formally, a \emph{link formation cost structure} can therefore be represented by a function $c \colon N \times N \to \mathbb R_+$ where $c (i,j) = c_{ij}\geqslant 0$ is the cost that player $i \in N$ incurs for sending a message to player $j \in N$, using the convention that $c(i,i) =0$ for all $i \in N$. Hence, player $i$ incurs a cost $c_{ij} \geqslant 0$ when communicating to player $j$ that she wants to form a link. In particular, this cost refers to the effort to respond to messages sent by others. Obviously, if $c_{ij}=0$, then there is no cost to communicating and sending messages from $i$ to $j$.

This construction introduces the \emph{consent model with two-sided link formation costs} as a modification of the (basic) Myerson model $\Gamma^m_\varphi$ given as a non-cooperative game $\Gamma^a_\varphi (c) = ( \mathcal A^{a},\pi^{a})$, where player $i$'s strategy set is given by $A^a_i = A^m_i$ and player $i$'s payoff for any strategy tuple $\ell \in A^a$ is given by
\begin{equation}
\pi_{i}^{a}( \ell )=\varphi _{i}(g( \ell ))-\sum_{j\neq i} \ell_{ij}\cdot c_{ij} = \pi^m_i ( \ell ) -\sum_{j\neq i} \ell_{ij}\cdot c_{ij} ,
\end{equation}
where $\varphi \colon \mathbb{G}^N \to \mathbb{R}^N$ is the network payoff function representing the gross benefits from network formation without taking into account the costs of link formation.

Our first result develops a complete characterisation of the Nash equilibria in the consent model with two-sided link formation costs. Part of this equivalence theorem was already stated without proof in \citet{GillesSarangi2010} and as stated here is taken from \citet{GillesChakrabarti2012}. There are some preliminaries that need to be developed before stating the main assertion.
\begin{definition}
Let $\varphi$ be a network payoff function on player set $N$ and let $c \colon N \times N \to \mathbb R_+$ a link formation cost structure on $N$. Furthermore, let $\Gamma^a_\varphi (c) = ( \mathcal A^{a},\pi^{a})$ be the associated consent model with two-sided link formation costs.
\\
A strategy tuple $\ell \in A^a = A^m$ is \textbf{non-superfluous} in the consent model with two-sided link formation costs $\Gamma^a_\varphi = ( \mathcal A^{a},\pi ^{a})$ if for all pairs of players $i,j \in N$, $\ell_{ij}=1$ if and only if $\ell_{ji}=1$.
\\
We call a non-superfluous strategy tuple $\ell \in A^a$ that is a Nash equilibrium a \textbf{non-superfluous Nash equilibrium}.
\end{definition}
The main theorem states that in $\Gamma^a_\varphi (c)$ the networks that are supported by Nash equilibria are exactly the strong link deletion proof networks for a network payoff function that takes account of the link formation costs. For a proof of the next theorem I refer to Appendix A.4.
\begin{theorem} \label{9:thm:2-equiv}
Let $\varphi$ be a network payoff function on player set $N$ and let $c \colon N \times N \to \mathbb R_+$ be a link formation cost structure on $N$. Furthermore, let $\Gamma^a_\varphi (c) = ( \mathcal A^{a},\pi^{a})$ be the associated consent model with two-sided link formation costs.
\\
Then for every network $g \in \mathbb G^N$ the following three statements are equivalent:
\begin{abet}
\item Network $g$ is supported by a Nash equilibrium of the consent model with two-sided link formation costs $\Gamma^a_\varphi (c)$.

\item Network $g$ is supported by a non-superfluous Nash equilibrium of the consent model with two-sided link formation costs $\Gamma^a_\varphi (c)$.

\item Network $g$ is strong link deletion proof with regard to the network payoff function $\varphi^{a}\colon \mathbb{G}^{N}\rightarrow \mathbb{R}^{N}$ given by 
\begin{equation}
\varphi _{i}^{a}(g)=\varphi_{i}(g)-\underset{j\in N_{i}(g)}{\sum }c_{ij}
\end{equation}
\end{abet}
\end{theorem}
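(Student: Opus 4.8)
The plan is to establish the cycle of implications $(a) \Rightarrow (c) \Rightarrow (b) \Rightarrow (a)$. The implication $(b) \Rightarrow (a)$ is immediate, since a non-superfluous Nash equilibrium is in particular a Nash equilibrium. The conceptual anchor for the remaining two implications is a bookkeeping identity: whenever a strategy tuple $\ell \in A^a$ is non-superfluous, the signals that player $i$ actually pays for are exactly those corresponding to her links in $g(\ell)$, so that $\pi^a_i(\ell) = \varphi^a_i(g(\ell))$ for every $i \in N$. This identity ties the game-theoretic payoff $\pi^a$ directly to the cost-adjusted network payoff $\varphi^a$, and it is what converts an equilibrium condition into a statement about $\varphi^a$.

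For $(a) \Rightarrow (c)$ I would take a Nash equilibrium $\ell$ with $g(\ell) = g$ and verify strong link deletion proofness of $\varphi^a$ directly, \emph{without} assuming $\ell$ is non-superfluous. Fix a player $i \in N$ and a link set $h \subset L_i(g)$, and consider the deviation $\ell'_i$ obtained from $\ell_i$ by switching the signal to $0$ for every $ij \in h$ while leaving all of $i$'s remaining signals (and everyone else's) unchanged. Because each $ij \in h$ lies in $g$, the partner's signal $\ell_{ji}$ equals $1$, so switching $i$'s signal off severs precisely these links and no others; hence $g(\ell'_i, \ell_{-i}) = g - h$. The only change in $i$'s signalling cost is the removal of the terms $c_{ij}$ for $ij \in h$, so any superfluous signals of $\ell_i$ are retained and cancel on both sides of the Nash inequality $\pi^a_i(\ell'_i, \ell_{-i}) \leqslant \pi^a_i(\ell)$; after cancellation this reduces to $\varphi_i(g-h) + \sum_{ij \in h} c_{ij} \leqslant \varphi_i(g)$. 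Since $N_i(g)$ splits as the disjoint union of $N_i(g-h)$ and $\{ j \mid ij \in h \}$, this last inequality is exactly $\varphi^a_i(g-h) \leqslant \varphi^a_i(g)$, which is strong link deletion proofness.

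For $(c) \Rightarrow (b)$ I would exhibit the canonical profile $\ell^g$ defined by $\ell^g_{ij} = 1$ if and only if $ij \in g$ (with $\ell^g_{ii} = 1$). This profile is non-superfluous and supports $g$, so the bookkeeping identity gives $\pi^a_i(\ell^g) = \varphi^a_i(g)$. The crux is that any unilateral deviation $\ell'_i$ can only \emph{delete} links: since every other player signals to $i$ only along links of $g$, a link $ij$ survives in $g(\ell'_i, \ell^g_{-i})$ only if $ij \in g$ and $\ell'_{ij} = 1$, so the resulting network is $g - h$ for some $h \subset L_i(g)$ and no new link can be formed. Player $i$'s deviation payoff is then $\varphi_i(g-h)$ minus her total signalling cost, which is at least $\sum_{j \in N_i(g-h)} c_{ij}$ because signals sent to non-neighbours only add cost; hence $\pi^a_i(\ell'_i, \ell^g_{-i}) \leqslant \varphi^a_i(g-h)$. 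Strong link deletion proofness of $\varphi^a$ then yields $\varphi^a_i(g-h) \leqslant \varphi^a_i(g) = \pi^a_i(\ell^g)$, so $\ell^g$ is a non-superfluous Nash equilibrium supporting $g$.

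The main obstacle, and the real content of the theorem, is the asymmetry built into mutual consent: a player can always withdraw a signal to dissolve a link unilaterally, but can never create one unilaterally, since the partner's reciprocating signal is required. This is precisely why the characterising condition is strong link \emph{deletion} proofness rather than any condition involving link addition, and it is the reason the unilateral best responses against $\ell^g$ range exactly over all simultaneous deletions of $i$'s links. The one technical wrinkle is the treatment of superfluous signals, which cost money yet form no link: I must confirm that they never help a deviator and that their costs cancel in the equilibrium comparison. Both hold because $c \geqslant 0$ and because the canonical equilibrium $\ell^g$ carries no superfluous signals at all, so the argument goes through for the full range $c_{ij} \geqslant 0$.
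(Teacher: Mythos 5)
Your proposal is correct and follows essentially the same route as the paper's proof: the cycle $(a)\Rightarrow(c)\Rightarrow(b)\Rightarrow(a)$, with $(c)\Rightarrow(b)$ built on the same canonical non-superfluous profile $\ell^g_{ij}=1$ iff $ij\in g$ and the same identity $\pi^a_i(\ell^g)=\varphi^a_i(g)$, and with deviations from it recognised as pure deletions $g-h$ whose extra superfluous signals only add cost. The sole cosmetic difference is in $(a)\Rightarrow(c)$, where you retain the equilibrium's superfluous signals in the deviation so their costs cancel on both sides of the Nash inequality, whereas the paper's deviation drops them and absorbs their cost into the chain $\varphi^a_i(g^\star)\geqslant\pi^a_i(\ell^\star)$ --- the two bookkeeping choices are interchangeable and yield the identical reduction to $\varphi_i(g-h)+\sum_{ij\in h}c_{ij}\leqslant\varphi_i(g)$.
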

Theorem \ref{9:thm:2-equiv} provides a complete and detailed characterisation of the set of all Nash equilibria of the consent model with two-sided link formation costs. Furthermore, Theorem \ref{9:thm:2-equiv} clearly generalises the insight that the class of M-networks in the basic Myerson model is exactly the class of strong deletion proof networks under network payoff function $\varphi$.

In particular, each Nash equilibrium network is actually supported by a \emph{unique} non-superfluous strategy profile if the cost structure is non-trivial in the sense that all link formation costs are positive. \citet{GillesChakrabarti2012} also discuss that there actually exist superfluous Nash equilibria if costs of link formation are zero for one of the players.
\begin{example} \textbf{\citep{GillesChakrabarti2012}} \\
Consider the binary network formation situation with $N=\{ 1,2 \}$ and the network payoff function given by $\varphi_1 (g^0) = \varphi_2 (g^0) =\varphi_1 (g^N)=0$ and $\varphi_2 (g^N) =1$. Link formation costs are given by $c_{12}=0$ and $c_{21}=1$. Hence, we can derive that under two-sided link formation costs that $\varphi^a _{i}(g^0)=0$ as well as $\varphi^a_{i}(g^N)=0$, for $i=1,2$.
\\
Clearly, the empty network $g^0$ is both (strong) link deletion proof for the net payoff function $\varphi^{a}$ and supported by the superfluous Nash equilibrium characterised by $\ell_{12}=1$ and $\ell_{21}=0$. Of course, $g^0$ is also supported as a Nash equilibrium through its non-superfluous strategy profile $\ell^0_{12} = \ell^0_{21} =0$ in $(A^a, \pi^a)$.
\end{example}

\subsection{One-sided link formation costs}

It is a natural extension to consider a network formation process under a one-sided cost structure. In this approach, one of the two linking players acts as the \emph{initiator\/} and sends an initiation message to the other. If the other player, called the  \emph{responder\/}, chooses to reciprocate positively, the link materialises; otherwise, not. This link formation process has a similar nature as the process considered in \citet{BalaGoyal2000a}, except that here the responder has to consent to the formation of the link, while in Bala-Goyal's model this is not required. There the initiator can create a link with the respondent in the absence of consent.

The decision making process is more complex than that under two-sided link formation costs. Consequently, the action set has to be constructed differently. Following \citet{GillesChakrabarti2012}, for each player $i$, we introduce a strategy set given by
\begin{equation}
A_{i}^{b}= \left\{\, \left. (l_{ij},r_{ij})_{j\neq i} \, \right| \, l_{ij},r_{ij}\in \{0,1\}\, \right\}.
\end{equation}%
This means that player $i$ chooses to act as an initiator in forming a link with $j$ if she initiates a message to $j$ indicated as $l_{ij}=1$. In this case, player $j$ acts as the respondent and responds positively to this initiative if $r_{ji}=1$. On the other hand, player $j$ rejects the initiated link with $i$ if $r_{ji}=0$. Therefore, a link is only established if the initiated link is accepted, i.e., if $l_{ij}=r_{ji}=1$. This is formalised as follows.

Let $A^{b}=\prod_{i\in N}A_{i}^{b}$ be the set of such communication profiles. Given the link formation process set out above, for any profile $(l,r)\in A^{b}$, the resulting network is now given by 
\begin{equation}
g^{b}(l,r)=\{ ij\in g^N \mid l_{ij}=r_{ji}=1\}.
\end{equation}%
To delineate the one-sided model from the two-sided model, it is preferred to use a different notation for the incurred link formation costs. Instead, I introduce the function $\gamma \colon N \times N \to \mathbb R_+$ as the one-sided link formation cost structure. Here, when $i$ initiates a link with $j$---represented by $l_{ij}=1$---$i$ incurs a cost of $\gamma_{ij}\geqslant 0$, regardless of whether the initialised link is accepted by $j$ or not. On the other hand, responding to a link initialisation message is costless, i.e., $j$ incurs no cost in responding to any message $\ell_{ij}$ sent by $i$ in the link formation process. 

For a given network payoff function $\varphi$ on $N$ this now results in the following net payoff function for player $i$: 
\begin{equation}
\pi _{i}^{b}(l,r) = \varphi_{i}\left( g^{b}(l,r) \right) -\sum_{j\neq i} l_{ij} \cdot \gamma_{ij}.
\end{equation}
Formally, let $\varphi$ be a network payoff function on $N$ and let $\gamma \colon N \times N \to \mathbb R_+$ be a given one-sided link formation cost structure. Then we refer to the non-cooperative game in strategic form $\Gamma^b_\varphi ( \gamma ) = ( \mathcal A^{b},\pi ^{b})$ as the \emph{consent model of network formation with one-sided link formation costs}. 

\paragraph{Nash equilibria of the consent model with one-sided link formation costs.}

As before, we can now introduce a non-superfluous strategy tuples in the consent model with one-sided link formation costs:
\begin{definition}
Let $\varphi$ be a network payoff function on $N$ and let $\gamma \colon N \times N \to \mathbb R_+$ be a given one-sided link formation cost structure. Consider the corresponding consent model with one-sided link formation costs $\Gamma^b_\varphi ( \gamma ) = ( \mathcal A^{b},\pi ^{b})$.
\\
Then a strategy profile $(l,r) \in A^b$ is \textbf{non-superfluous} if for all pairs $i,j\in N$ it holds that 
\begin{align}
l_{ij}=1\mbox{ implies that }r_{ji}=1& \mbox{ as well as } l_{ji}=r_{ij}=0,\mbox{ and} \\
r_{ij}=1\mbox{ implies that }l_{ji}=1& \mbox{ as well as } l_{ij}=r_{ji}=0.
\end{align}
\end{definition}
Unlike for the consent model with two-sided link formation costs, each network is no longer supported by a unique non-superfluous strategy profile. Indeed, it depends on who of the two players involved initiates and who responds in the link formation process.

On the other hand, under a non-superfluous strategy profile, only one player bears the establishment cost of each existing link, and every initialisation is responded to positively. As a first step in the analysis of this one-sided approach, I explore the relationship between the Nash equilibria of the two-sided and the one-sided model. Secondly, I present a full characterisation of the Nash equilibria of the one-sided model in terms of network stability properties. These results are taken from \citet{GCS2011}.

The main question to be considered here is whether there is a network payoff function which would provide equivalence between Nash equilibria of the one-sided model and strong link deletion proofness with regard to a payoff function in a similar fashion as Theorem \ref{9:thm:2-equiv} for two-sided link formation costs. In particular, I follow efficiency logic and consider a payoff function which only assigns link formation costs to the player with the lower cost of link formation. If link formation costs are equal, a tie-breaking rule is applied.

Let $M_{i}(g)=\{j\in N_{i}(g) \mid \gamma_{ij}< \gamma_{ji}$ or $\gamma_{ij}=\gamma_{ji}, \, i<j \} \subset N_{i}(g)$ be the potential links that player $i$ should finance based on incurring the lowest link formation costs. The corresponding payoff function $\varphi ^{b}$ is defined for $i\in N$ by
\begin{equation*}
\varphi _{i}^{b}(g) = \varphi _{i}(g)-\underset{j\in M _{i}(g)}{\sum } \gamma_{ij}
\end{equation*}%
given the network payoff function $\varphi$ representing benefits without taking into account costs of link formation. We can show the following implication, which proof is relegated to Appendix A.5.
\begin{theorem} \label{9:thm:1-compare}
Let $\varphi$ be a network payoff function on $N$ and let $\gamma \colon N \times N \to \mathbb R_+$ be a given one-sided link formation cost structure. If network $g \in \mathbb{G}^N$ is strong link deletion proof for the net payoff function $\varphi^{b}$, then $g$ can be supported by a non-superfluous Nash equilibrium in the consent model with one-sided link formation costs $\Gamma^b_\varphi ( \gamma ) = (A^b, \pi^b)$.
\end{theorem}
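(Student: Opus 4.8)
The plan is to prove the theorem constructively: from the given network $g$ I would build an explicit non-superfluous strategy profile $(l,r) \in A^b$ that supports $g$, and then show that it is a Nash equilibrium precisely because $g$ is strong link deletion proof for $\varphi^b$. The construction is dictated by the definition of $\varphi^b$: for every link $ij \in g$ I let the \emph{designated financier}---the unique endpoint in the cheaper position, i.e. $i$ when $j \in M_i(g)$ and $j$ when $i \in M_j(g)$---be the initiator and the other endpoint be the responder. Concretely, when $j \in M_i(g)$ I set $l_{ij}=r_{ji}=1$ and $l_{ji}=r_{ij}=0$; for every non-link $ij \notin g$ I set all of $l_{ij},l_{ji},r_{ij},r_{ji}$ to $0$. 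The tie-breaking built into $M_i(g)$ guarantees that exactly one endpoint of each link is the financier, so the assignment is unambiguous, the two defining implications of non-superfluousness hold by inspection, and $g^b(l,r)=g$.

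The first computation is to record each player's payoff. Since $i$ initiates exactly the links $ij$ with $j \in M_i(g)$, she pays $\sum_{j \in M_i(g)} \gamma_{ij}$, so $\pi^b_i(l,r) = \varphi_i(g) - \sum_{j \in M_i(g)} \gamma_{ij} = \varphi^b_i(g)$. The heart of the argument is then to classify the networks $i$ can induce by a unilateral deviation while the others hold their signals fixed. For a non-link $ij \notin g$ both $l_{ji}$ and $r_{ji}$ are off, so neither $i$'s initiation nor $i$'s response can bring $ij$ into existence: \emph{$i$ can never add a link outside $g$}. For a link $ij \in g$ financed by $i$ the responder signal $r_{ji}=1$ is fixed, so $i$ toggles the link only through $l_{ij}$; for a link financed by $j$ the initiator signal $l_{ji}=1$ is fixed, so $i$ toggles it only through $r_{ij}$. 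Hence the set of networks reachable by $i$ is exactly $\{\, g-h \mid h \subseteq L_i(g)\,\}$, and any initiation signal $i$ switches on without producing a link merely wastes a nonnegative cost. Consequently the best payoff $i$ can secure while inducing $g-h$ is $\varphi_i(g-h) - \sum_{j \in M_i(g-h)} \gamma_{ij} = \varphi^b_i(g-h)$, using that the financier designation $M_i$ depends on $g$ only through neighbourhood membership and so is inherited unchanged by the surviving links of $g-h$.

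Assembling these facts, player $i$'s best-response payoff is $\max_{h \subseteq L_i(g)} \varphi^b_i(g-h)$, so $(l,r)$ is a Nash equilibrium if and only if $\varphi^b_i(g) \geqslant \varphi^b_i(g-h)$ for every $i \in N$ and every $h \subseteq L_i(g)$---which is exactly the strong link deletion proofness of $g$ for $\varphi^b$ that we are given. I expect the main obstacle to be the deviation analysis of the middle paragraph rather than any estimation: one must argue carefully that the one-sided protocol, together with the ``off'' signals on non-links and on the non-financing endpoint of each link, genuinely blocks every link addition and leaves \emph{only} deletions of $i$'s own links available, and that the cost accounting after such a deletion collapses exactly to $\varphi^b_i(g-h)$ with no residual wasted cost. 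Once this reduction is in place, the Nash condition and the SLDP condition become literally the same family of inequalities, and the theorem follows.
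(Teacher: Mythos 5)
Your proof is correct and follows essentially the same route as the paper's: the same non-superfluous profile in which the cheaper-cost endpoint (as encoded by $M_i(g)$, including the tie-break) initiates each link, the same observation that unilateral deviations can only delete links in $L_i(g)$ and never add links, and the same cost accounting showing the deviator's payoff is at most $\varphi^b_i(g-h)$, so that the Nash conditions reduce exactly to strong link deletion proofness for $\varphi^b$. Your remark that $M_i$ depends on the network only through neighbourhood membership, so $M_i(g-h)=M_i(g)\cap N_i(g-h)$, is precisely the content of the paper's key claim (its inequality bounding $\sum_{j\in M_i(g^\star-h_i)}\gamma_{ij}$ by the deviator's actual initiation costs), which the paper instead proves by contradiction.
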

The converse of Theorem \ref{9:thm:1-compare} does not hold as shown by the following counter-example.
\begin{example} \label{9:ex:simplo}
Consider the minimal binary network formation situation with $N= \{ 1,2 \}$ and network payoffs given by $\varphi_1 (g^0)= \varphi_2 (g^0) =0$, $\varphi_1 (g^N) =2$ and $\varphi_2 (g^N)=10$. Link formation costs are given by $\gamma_{12}=5$ and $\gamma_{21}=7$.
\\
Hence for $i=1,2$, $\varphi_{i}^{b}(g^0)=0$, $\varphi_{1}^{b}(g^N)=-3$ and $\varphi_{2}^{b}(g^N)=3$. Clearly, the complete network $g^N$ is not link deletion proof for the network payoff function $\varphi ^{b}$, since player 1 would benefit from severing the unique link $12$.
\\
However, there is a Nash equilibrium of the one-sided consent model $\Gamma^b_\varphi ( \gamma ) = (A^b, \pi^b )$ that supports the complete network $g^N \colon l_{12}=0; \, r_{12}=1; \, l_{21}=1; \, r_{21}=0$.\footnote{Note that in the case of two-sided link formation costs, the cost of link formation is a total of $\gamma_{12} + \gamma_{21} = 7+5=12$, which clearly makes the complete network $g^N$ not being supported by a Nash equilibrium in $\Gamma^a_\varphi ( \gamma )$. This indicates the underlying reason why two-sided link formation costs shrink the set of supported networks in comparison with the case of one-sided link formation costs.}
\end{example}
One might expect that a network payoff function that assigns a link initiator role to the player with the higher marginal net benefits as a result of formation of the link in question might resolve the issue of characterising the supported equilibrium networks in $\Gamma^b_\varphi ( \gamma ) = (A^b , \pi^b )$. Below it is shown that this is actually not the case.
\begin{example} \label{9:ex:two-step}
Consider a situation with three players, $N= \{ 1,2,3 \}$. The following table gives the benefits for each of the three players in the case of the formation of one of only three relevant networks:

\begin{center}
\begin{tabular}{|c|ccc|}
\hline
\textbf{Network $g$} & $\varphi _{1}(g)$ & $\varphi _{2}(g)$ & $\varphi_{3}(g)$ \\ \hline
$\{12\}$ & 10 & 10 & 0 \\ 
$\{13\}$ & 10 & 0 & 10 \\ 
$\{12,13\}$ & 15 & 20 & 20 \\ \hline
\end{tabular}
\end{center}

\noindent All other networks generate no benefits to any of the three players, i.e., $\varphi _{i}(g)=0$ for all other networks $g$ not listed in the table.
\\
Consider the following one-sided link formation cost structure: $\gamma_{12} = \gamma_{13}=9$, $\gamma_{21}=10$, $\gamma_{31}=10$, and $\gamma_{23}= \gamma_{32}= 10$. Within this context, player 1 has the highest marginal net benefit from forming links $12$ as well as $13$, namely $\varphi _{1}(\{12\})-\gamma_{12}=\varphi _{1}(\{13\})-\gamma_{13}=1$, while the other players have no positive marginal benefits from forming links $12$ and $13$.
\\
Now, the network $\{12,13\}$ is not link deletion proof for the network payoff function that is based on the property that the player with the highest net marginal benefit is assumed to finance the formation of a link. Indeed, player 1---who has the highest net marginal benefits from both links---has a negative net return from forming network $\{12,13\}$ and would prefer to sever one of the two links to increase her net benefit to $1$.
\\
On the other hand, $\{12,13\}$ is supported by a non-superfluous Nash equilibrium strategy profile under one-sided link formation costs with $l_{21}=r_{12}=1$ and $l_{31}=r_{13}=1$.
\end{example}
These examples show that the problem of finding a reasonable payoff function that completely characterises all Nash equilibria of the one-sided consent model in terms of network stability remains open. The issues are such that it can be argued that there is actually no reasonable network payoff function that characterises all supported equilibrium networks in the consent model under one-sided link formation costs.

\paragraph{Multi-stage network formation under one-sided link formation costs.}

One can ask whether certain other approaches can resolve the coordination and free riding issues that are indicated in the discussion of the converse of Theorem \ref{9:thm:1-compare} above.\footnote{This discussion requires knowledge of multi-stage, sequential games and the notion of subgame perfection. This discussion can be skipped without any difficulty. For more elaborate discussion of multi-stage and sequential games I refer to \citet{Osborne2004}, \citet{Harrington2008} and \citet{Maschler2013}.} 

Here, I consider a two-stage network formation process to restore equivalence between equilibria of that model under one-sided costs and strong link deletion proofness with respect to some well-constructed network payoff function. This is motivated by the fact that often sequential decision making solves coordination problems. With this in mind, consider the following natural two-stage process:
\begin{numm}
\item In the first stage, every players $i \in N$ initiates links by selecting initiation messages $(l_{ij})_{j\neq i}$.

\item In the second stage, all players respond to links initiated in the first stage and select $(r_{ij} \colon l_{ji} =1)_{j\neq i}$.
\end{numm}
The question is whether the subgame perfect Nash equilibria of this game are strong link deletion proof with regard to $\varphi ^{b}$. We show that this is not necessarily the case.
\begin{example}
Reconsider the simple binary linking situation in Example \ref{9:ex:simplo}. We showed earlier that the complete network $g^N = \{ 12 \}$ is not (strong) link deletion proof for the net payoff function $\varphi ^{b}$ but that there is a Nash equilibrium communication profile of the one-sided model that supports it, namely, $l_{12}=0;\, r_{12}=1;\, l_{21}=1;\, r_{21}=0$.
\\
We now show that in the two-stage network formation process described above, this communication profile is subgame perfect as well. Consider the reduced game in the second stage, given that $l_{12}=0$ and $l_{21}=1$ has been chosen in the first stage. In normal form it can now be represented as the matrix game

\begin{center}
\begin{tabular}{|c|c|c|}
\hline
\begin{tabular}{cc}
& $r_{21}$ \\ 
$r_{12}$ & 
\end{tabular}
& 0 & 1 \\
\hline
0 & $0,-7$ & $0,-7$ \\
\hline
1 & $2,3$ & $2,3$ \\
\hline
\end{tabular}
\end{center}

\noindent There are two Nash equilibria in this game, one of which is $r_{12}=1$ and $r_{21}=0$. This is exactly the second part of the indicated communication profile. Thus, the given communication profile is indeed a subgame perfect equilibrium in the two-stage link formation process.
\end{example}
The reason why sequential decision making cannot resolve the coordination problem is that here the problem stems from costs not being transferable. Complete transferability of costs and benefits would take us into the framework of \citet{JacksonWolinsky1996} and, in particular, \citet{BlochJackson2006,BlochJackson2007}.

\paragraph{A formal comparison of one-sided and two-sided link formation costs.}

Since the two models that we considered in this section have different philosophical bases, we must make some simplifying assumptions to enable a more formal comparison. In particular, we have to address how the two different link formation cost formulations are related. This simply requires us to formulate the one-sided cost structure $\gamma$ in terms of the two-sided cost structure $c$. Hence, we consider $\gamma$ to be a particular functional form of $c$.

I look at two simplified cases that facilitate this comparison.

\paragraph{\textsc{Case A:} The initiator bears all.}

Suppose that the initiator in the model with one-sided costs bears both his cost and the cost of the responder in the context of the two-sided consent model. So, initiation is tantamount to bearing the total cost of link formation, i.e., $\gamma_{ij}=c_{ij}+c_{ji}$ for all $i\neq j$. Benefits described by $\varphi$ remain individualised and are not transferable.

In this case, it is quite obvious that the Nash equilibria of the two models are not comparable, which is shown in the next simple example.
\begin{example}
Consider again a binary link formation situation with $N=\{1,2\}$ and $\varphi _{i}(g^N)=51$, $\varphi _{i}(g^0)=0$, $i=1,2 $. Moreover, let $c_{12}=c_{21}=50$. Hence, $\gamma_{12} = \gamma_{21} = 100$. Then, $g^N = \{ 12 \}$ is supported by a Nash equilibria of the two-sided model, namely through $\ell_{12} = \ell_{21}=1$. But there is no Nash equilibrium in the one-sided model that would support it because no one would be willing to pay a cost of $100$ in order to sustain this link.
\\
Next, modify the situation to let $\varphi _{1}(g^N)=12$, $\varphi _{2}(g^N)=2$, $\varphi _{i}(g^0)=0$, $i=1,2$ and $c_{12}=c_{21}=5$. Hence, $\gamma_{12} = \gamma_{21} = 10$. Then, $g^N = \{ 12 \}$ is now supported by a Nash equilibrium of the one-sided model, namely through $l_{12}=r_{21}=1$, $l_{21}=r_{12}=0$. The strategy supporting this network is not a Nash equilibrium in the two-sided model.
\end{example}

\paragraph{\textsc{Case B:} A sunk cost formulation.}

Next, we consider the case in which the link formation costs are not transferable and that the initiator has to bear only his own cost. This corresponds to a scenario where the costs of the responding party are sunk and, thus, not relevant to the decision making process.

Hence, we assume that $\gamma_{ij}=c_{ij}$ for all $i\neq j$. In this case, it can be shown that networks supported by Nash equilibria of the two two-sided model are also supported by some Nash equilibrium of the one-sided model, while the converse does not hold. For a proof of the next theorem I refer to Appendix A.6.
\begin{theorem} \label{9:thm:2-compare}
Let $\varphi$ be a network payoff function on player set $N$ and let $c \colon N \times N \to \mathbb R_+$ a two-sided link formation cost structure on $N$. 
\\
If a network $g \in \mathbb G^N$ is supported by a Nash equilibrium of the consent model with two-sided link formation costs $\Gamma^a_\varphi (c)$, then there exists a non-superfluous Nash equilibrium supporting network $g$ in the consent model with one-sided link formation costs $\Gamma^b_\varphi (c)$, i.e., for one-sided link formation cost structure $\gamma$ given by $\gamma_{ij} = c_{ij}$ for all $i,j \in N$.
\end{theorem}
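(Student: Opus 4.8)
The plan is to reduce the hypothesis to a strong link deletion proofness statement via Theorem~\ref{9:thm:2-equiv}, then to exhibit one explicit non-superfluous profile of the one-sided model supporting $g$ and to verify the Nash property by a single cost comparison. First I would apply Theorem~\ref{9:thm:2-equiv}: since $g$ is supported by a Nash equilibrium of $\Gamma^a_\varphi(c)$, the network $g$ is strong link deletion proof for $\varphi^a_i(g) = \varphi_i(g) - \sum_{j \in N_i(g)} c_{ij}$. Because $N_i(g-h) = N_i(g) \setminus \{ j \mid ij \in h \}$ for any $h \subset L_i(g)$, this SLDP condition rearranges to the single clean inequality
\[
\varphi_i(g-h) \leqslant \varphi_i(g) - \sum_{ij \in h} c_{ij} \qquad \text{for all } i \in N \text{ and all } h \subset L_i(g),
\]
and this is the only property of $g$ I will use.

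Next I would build the candidate profile $(l,r) \in A^b$. I orient each link of $g$ arbitrarily; writing $i \to j$ for the chosen orientation of $ij \in g$, I set $l_{ij} = 1$ and $r_{ji} = 1$ and leave every other coordinate of $l$ and $r$ equal to $0$. By construction $(l,r)$ is non-superfluous and $g^b(l,r) = g$, so each player $i$ initiates the set $I_i = \{ j \mid i \to j \}$ of her links and merely responds to $R_i = N_i(g) \setminus I_i$, incurring the cost $\sum_{j \in I_i} c_{ij}$ (recall $\gamma_{ij} = c_{ij}$).

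It then remains to rule out a profitable unilateral deviation by any player $i$. The key structural observation is that $i$ cannot create a new link: for $ik \notin g$ the coordinates $r_{ki}$ and $l_{ki}$ are controlled by $k$ and are both fixed at $0$, so neither $l'_{ik}=1$ nor $r'_{ik}=1$ forms the link $ik$ and only risks the futile cost $c_{ik} \geqslant 0$. Hence every deviation of $i$ yields a network $g-h$ with $h \subset L_i(g)$, and her payoff is maximised by dropping all futile initiations, giving $\varphi_i(g-h) - \sum_{j \in I_i \setminus h_I} c_{ij}$, where $h_I = \{ j \in I_i \mid ij \in h \}$ collects the deleted links $i$ had initiated. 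For this not to beat her current payoff $\varphi_i(g) - \sum_{j \in I_i} c_{ij}$ I need $\varphi_i(g-h) \leqslant \varphi_i(g) - \sum_{j \in h_I} c_{ij}$, and this is exactly what the inequality from the first step delivers: the links indexed by $h_I$ form a subset of $h$ and all costs are nonnegative, so $\sum_{ij \in h} c_{ij} \geqslant \sum_{j \in h_I} c_{ij}$, whence $\varphi_i(g-h) \leqslant \varphi_i(g) - \sum_{ij \in h} c_{ij} \leqslant \varphi_i(g) - \sum_{j \in h_I} c_{ij}$.

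The substance of the argument, and the point I would flag as the main obstacle, is precisely this cost comparison together with its interpretation. In the one-sided model a player who severs a link she merely \emph{responded} to saves nothing, so such a deletion must be justified by the gross payoff $\varphi_i$ alone; strong link deletion proofness of $g$ for $\varphi^a$, which charges the full two-sided cost $c_{ij}$ against \emph{every} deleted link, is therefore more than strong enough. Crucially, the inequality goes through for an arbitrary orientation of the links of $g$, so no clever choice of initiators is required. The only delicate bookkeeping is confirming that no deviation can add a link and that futile initiations can only lower a deviator's payoff, so that the achievable deviation networks are exactly the $g-h$ with $h \subset L_i(g)$; once that is pinned down the verification is immediate.
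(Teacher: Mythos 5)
Your proof is correct and follows essentially the same route as the paper's: reduce the hypothesis via Theorem \ref{9:thm:2-equiv} to strong link deletion proofness for $\varphi^a$, construct a non-superfluous one-sided profile supporting $g$, observe that no unilateral deviation can create a new link, and conclude because a deviator in the one-sided model saves at most the two-sided cost of each deleted link. The only differences are minor refinements on your side: the paper orients each link towards the cheaper initiator (with an index tie-break, mirroring the construction used for Theorem \ref{9:thm:1-compare}) and explicitly leaves the final verification as a sketch, whereas you correctly note that an arbitrary orientation suffices and you supply the bookkeeping the paper omits, namely that $h_I \subset h$ and $c \geqslant 0$ give $\sum_{j \in h_I} c_{ij} \leqslant \sum_{ij \in h} c_{ij}$.
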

We show that the converse of Theorem \ref{9:thm:2-compare} does not hold.
\begin{example}
Consider again the binary link formation situation with $N = \{ 1,2 \}$. Furthermore, assume now that $\varphi_1(g^0) = \varphi_2 (g^0) =0$, $\varphi_1 (g^N) =6$ and $\varphi_2 (g^N)=4$. Let two-sided costs of link formation be uniform, given by $c_{ij} = 5$ for all $i,j \in N$.
\\
The complete network $g^N = \{ 12 \}$ initiated by player 1 is supported by a Nash equilibrium in the one-sided model for $\gamma_{ij} = c_{ij}$. But the strategy tuple $\ell_{12} = \ell_{21}=1$ in the two-sided model that supports this network is not a Nash equilibrium in that model.
\end{example}
This discussion shows that one-sided link formation processes require a very careful analysis and do not necessarily result in very delineated conclusions.

\section{Trust and network formation}

In this section I review some concepts that try to capture the fundamental idea that ``trust builds networks''. These concepts go beyond the approaches that I have reviewed thus far, being Myerson's model and its variations as well as the Jackson-Wolinsky approach to incorporate cooperative conceptions into a network formation setting.

I discuss two different implementations of trusting behaviour into network formation. First, \citet{Buskens2005} consider the notion of \emph{unilateral stability} that is founded on the principle that players attempt the formation of links even if their correspondents did not signal that they would necessarily agree to the formation of these links. Thus, players follow the rule that one should certainly try to form links if one expects the correspondent to benefit from its formation. This leads to a refinement of the class of M-networks.

A very similar conception has been developed by \citet{GillesSarangi2010}. Within the consent model under two-sided link formation costs \citet{GillesSarangi2010} developed a belief-based stability concept denoted as \emph{monadic stability} for understanding a purely non-cooperative process of network formation based on trusting behaviour. Again players are assumed to pursue the formation of links if they perceive the correspondents to benefit from their creation. However, monadic stability is defined as a self-confirming equilibrium \citep{FudenbergLevine1993} based on these belief systems, deviating considerably from \citet{Buskens2005}'s conception of trusting behaviour.

\subsection{Unilateral stability}

The mathematical sociologists \citet{Buskens2005} proposed a refinement of the Nash equilibrium concept that considers expanding a player's ability to affect the network that is formed in a broader way than allowed through best response rationality underlying the Nash equilibrium concept. They recognised that the multitude of Nash equilibria in the Myerson model is due to a simple (mis-)coordination problem: Players are indifferent between proposing or not proposing a link if the other player actually does not propose the link herself already. This resulted in a refinement of the Nash equilibrium concept that takes account of the idea that players trust that mutually beneficial link formation will indeed be pursued by other players.
\begin{definition}
Let $\varphi$ be a network payoff function on $N$ and consider the corresponding Myerson model $\Gamma^m_\varphi = (\mathcal A^m,\pi^m)$. A network $g \in \mathbb{G}^N$ is \textbf{unilaterally stable} if there exists a strategy profile $\ell \in A^m$ in the Myerson model with $g (\ell) =g$ such that
\begin{numm}
\item for all $i \in N$ and $\ell'_i \in A^m_i \colon \pi^m_i (\ell) \geqslant \pi^m_i (\ell'_i , \ell_{-i} )$ \emph{(Nash equilibrium condition)}, and

\item for every $i \in N$ and every alternative strategy $\ell'_i \in A^m_i$, it holds that
\[
\pi^m_i (\ell^\star ) > \pi^m_i (\ell )
\]
implies that there is some $j \in N$ with $\ell'_{ij} =1$ and $\ell_{ij}=0$ for whom
\[
\pi^m_j (\ell^\star ) < \pi^m_j (\ell ) ,
\]
where $\ell^\star \in A^m$ is given by $\ell^\star_i =\ell'_i$, $\ell^\star_{jk} = \ell_{jk}$ for $j \neq i \neq k$ and $\ell^\star_{ji} = \ell'_{ij} =1$ for $j \neq i$.
\end{numm}
\end{definition}
A network is unilaterally stable if it is supported through a Nash equilibrium in the Myerson model under the additional provision that every player can modify her direct neighbourhood provided that this modification can be constructed with the consent of her chosen neighbours. So, if $i$'s proposal would make herself better off, then all newly selected neighbours would have no objections and would not receive lower payoffs as a consequence of this modification of the network.

Unilateral stability introduces a form of trusting behaviour into the Myerson approach to network formation under mutual consent. The consent of any player's neighbours is reasoned by that player is conducted in such a way that it reflects trusting behaviour by that particular player. In some sense it introduces a \emph{bounded} form of rationality of any player in her consideration of how other players respond to changes in her behaviour. As such the notion of unilateral stability can be categorised as a model of trusting behaviour in network formation under mutual consent.

An alternative definition of unilateral stability is also possible as captured in the proposition below. It reflects the idea to add trusting behaviour to the M-network concept.

\begin{proposition}
\textbf{\emph{(An alternative definition of unilaterally stable networks)}}
\\
A network $g \in \mathbb{G}^N$ is unilaterally stability if and only if $g$ is an M-network such that for every player $i \in N$ and all link sets $h^-_i \subset L_i (g)$ and $h^+_i \subset L_i (g^N \setminus g)$ it holds that \emph{either} $\varphi_i (g-h^-_i+h^+_i) \leqslant \varphi_i (g)$ \emph{or} $\varphi_i (g-h^-_i+h^+_i) > \varphi_i (g)$ implies there is some $j \in N$ such that $ij \in h^+_i$ and $\varphi_j (g-h^-_i+h^+_i) < \varphi_j (g)$.
\end{proposition}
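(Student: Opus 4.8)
The plan is to build an explicit dictionary between the trusting deviations $\ell'_i$ of the original definition (together with the accompanying response profile $\ell^\star$) and the neighbourhood reconfigurations $g - h^-_i + h^+_i$ of the alternative characterisation, and then to read each side off the other. The engine of both implications is the computation of $g(\ell^\star)$. Since $\ell^\star_i = \ell'_i$, $\ell^\star_{ji} = \ell'_{ij}$ for all $j \neq i$, and $\ell^\star_{jk} = \ell_{jk}$ for $j \neq i \neq k$, a link $ij$ incident to $i$ lies in $g(\ell^\star)$ exactly when $\ell'_{ij} = 1$, while every link not incident to $i$ is inherited unchanged from $g = g(\ell)$. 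Hence $N_i(g(\ell^\star)) = \{\, j : \ell'_{ij} = 1 \,\}$ and $g(\ell^\star) = g - h^-_i + h^+_i$, where $h^-_i = \{\, ij \in L_i(g) : \ell'_{ij} = 0 \,\} \subseteq L_i(g)$ and $h^+_i = \{\, ij \notin g : \ell'_{ij} = 1 \,\} \subseteq L_i(g^N \setminus g)$; conversely, any admissible $h^-_i$ and $h^+_i$ are realised by setting $\ell'_{ij} = 1$ precisely for $j \in N_i(g - h^-_i + h^+_i)$. I would state this correspondence first, as it is used verbatim in both directions.

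For the forward direction I would start from a profile $\ell$ with $g(\ell) = g$ satisfying conditions (i) and (ii). Condition (i) makes $\ell$ a Nash equilibrium supporting $g$, so $g$ is an M-network by definition. I then fix $i$ and arbitrary admissible $h^-_i, h^+_i$, realise $g^\star := g - h^-_i + h^+_i$ by the $\ell'_i$ above, and suppose $\varphi_i(g^\star) > \varphi_i(g)$, i.e.\ $\pi^m_i(\ell^\star) > \pi^m_i(\ell)$. Condition (ii) then yields a player $j$ with $\ell'_{ij} = 1$, $\ell_{ij} = 0$ and $\pi^m_j(\ell^\star) < \pi^m_j(\ell)$. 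Here $\ell'_{ij} = 1$ gives $ij \in g^\star$ and $\ell_{ij} = 0$ gives $ij \notin g(\ell) = g$, so $ij \in g^\star \setminus g = h^+_i$, while $\pi^m_j(\ell^\star) < \pi^m_j(\ell)$ reads $\varphi_j(g^\star) < \varphi_j(g)$. This is exactly the requirement in the alternative definition.

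For the reverse direction I would use the canonical profile $\ell^g$ defined by $\ell^g_{ij} = 1$ iff $ij \in g$. Under $\ell^g$ a deviating player $i$ can realise a link $ij$ only if $\ell^g_{ji} = 1$, i.e.\ only if $ij \in g$, so her feasible deviations are precisely the deletions of subsets $h \subseteq L_i(g)$. Because $g$ is an M-network, Theorem \ref{9:prop:Mlargeness} makes it strong link deletion proof, so no such deletion raises $\varphi_i$ and $\ell^g$ is a Nash equilibrium, giving condition (i). For condition (ii) I fix $i$ and any $\ell'_i$, form $\ell^\star$ and the associated $g^\star = g - h^-_i + h^+_i$, and assume $\pi^m_i(\ell^\star) > \pi^m_i(\ell^g)$, i.e.\ $\varphi_i(g^\star) > \varphi_i(g)$. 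The alternative hypothesis supplies some $ij \in h^+_i$ with $\varphi_j(g^\star) < \varphi_j(g)$; since $ij \in h^+_i$ forces $ij \notin g$ and hence $\ell^g_{ij} = 0$, while $\ell'_{ij} = 1$ because $j \in N_i(g^\star)$, this is precisely the witness demanded by condition (ii).

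The only genuinely delicate point, and the step I would flag as the main obstacle, is aligning the signal conditions with membership in $h^+_i$. In the original definition the witness $j$ must satisfy $\ell_{ij} = 0$ (not merely $ij \notin g$), and this is what pins $ij$ to the added set $h^+_i$ rather than to an absent link on which $i$ was already signalling unreciprocated. This is exactly why the canonical profile $\ell^g$ is essential in the reverse direction: it guarantees $ij \notin g \Rightarrow \ell^g_{ij} = 0$, whereas an arbitrary supporting Nash equilibrium could have $\ell_{ij} = 1$ and $\ell_{ji} = 0$ on an absent link, breaking the literal match. Everything else reduces to the bookkeeping behind the identity $g(\ell^\star) = g - h^-_i + h^+_i$.
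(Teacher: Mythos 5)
Your proof is correct, and it is worth noting that the paper itself states this proposition \emph{without} proof, so there is no official argument to diverge from; yours supplies exactly the missing details in the natural way. The dictionary $g(\ell^\star) = g - h^-_i + h^+_i$ with $h^-_i = \{ij \in L_i(g) : \ell'_{ij} = 0\}$ and $h^+_i = \{ij \notin g : \ell'_{ij} = 1\}$ is verified correctly (and is robust to the slightly ambiguous phrasing of $\ell^\star_{ji} = \ell'_{ij} = 1$ in the paper's definition, since a link $ij$ lies in $g(\ell^\star)$ iff $\ell'_{ij} = 1$ under either reading), the forward direction correctly exploits that condition (ii)'s witness has $\ell_{ij} = 0$, hence $ij \notin g$, hence $ij \in g^\star \setminus g = h^+_i$ for \emph{any} supporting profile, and the reverse direction correctly invokes Theorem \ref{9:prop:Mlargeness}(a) to get strong link deletion proofness and hence condition (i) for the canonical profile $\ell^g$. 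Your flagged subtlety is genuine and well resolved: an arbitrary supporting Nash equilibrium may carry an unreciprocated signal $\ell_{ij} = 1$ on an absent link, in which case the witness $j \in h^+_i$ delivered by the alternative condition would fail the literal requirement $\ell_{ij} = 0$ of condition (ii); restricting to the canonical profile, which the existential quantifier in the definition of unilateral stability permits, is precisely what makes the equivalence go through.
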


\smallskip\noindent
Unilateral stability is the strongest individualistic or ``monadic'' network formation concept that has been proposed in the literature. Indeed, going beyond the unilateral formation of links under consent as formulated here would actually involve active participation of multiple players.

Next, we turn to discussing some simple properties of unilateral stability.
\begin{proposition} \label{9:prop:UniStab}
Let $\varphi$ be a network payoff function on $N$ and consider the corresponding Myerson model $\Gamma^m_\varphi = (\mathcal A^m,\pi^m)$. Then the following properties hold:
\begin{abet}
\item Every unilaterally stable network is strongly pairwise stable.

\item There exist strictly pairwise stable networks that are not unilaterally stable.

\item If the network payoff structure $\varphi$ is link monotone, then $g^N \in \mathbb{G}^N$ is the unique unilaterally stable network for $\varphi$.
\end{abet}
\end{proposition}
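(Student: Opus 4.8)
The plan is to treat the three parts separately, leaning throughout on the alternative characterisation of unilateral stability stated just above: a network $g$ is unilaterally stable if and only if it is an M-network and, for every player $i$ and all $h^-_i \subset L_i(g)$, $h^+_i \subset L_i(g^N \setminus g)$, a strict gain $\varphi_i(g-h^-_i+h^+_i) > \varphi_i(g)$ forces some newly proposed neighbour $j$ (with $ij \in h^+_i$) to suffer a strict loss $\varphi_j(g-h^-_i+h^+_i) < \varphi_j(g)$. Parts (a) and (c) will follow by specialising this condition and combining it with Theorem \ref{9:prop:Mlargeness}; part (b) requires an explicit counterexample.

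For part (a), first note that any unilaterally stable $g$ is an M-network, so by Theorem \ref{9:prop:Mlargeness}(a) it is strong link deletion proof---this already supplies the $\mathcal{D}_s$ half of strong pairwise stability. To obtain link addition proofness I would specialise the trust condition to a single added link: for any $ij \notin g$ set $h^-_i = \varnothing$ and $h^+_i = \{ij\}$, so that $g - h^-_i + h^+_i = g + ij$. The condition then reads that $\varphi_i(g+ij) > \varphi_i(g)$ implies $\varphi_j(g+ij) < \varphi_j(g)$, which is exactly the LAP requirement. Hence $g \in \mathcal{D}_s(\varphi) \cap \mathcal{A}(\varphi) = \mathcal{P}_\star(\varphi)$.

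For part (c), I would first verify that $g^N$ is unilaterally stable. Under link monotonicity every network is an M-network by Theorem \ref{9:prop:Mlargeness}(b), and monotonicity also gives $\varphi_i(g^N - h) \leqslant \varphi_i(g^N)$ for every $h \subset L_i(g^N)$, so $g^N$ is strong link deletion proof. The trust condition is then vacuous at $g^N$: since $g^N$ is complete, $L_i(g^N \setminus g^N) = \varnothing$, forcing $h^+_i = \varnothing$, and any admissible deviation $g^N - h^-_i$ is weakly worse for $i$, so the strict-gain hypothesis never fires. For uniqueness, suppose $g$ is unilaterally stable with some $ij \notin g$. By part (a) $g$ is link addition proof, while monotonicity gives $\varphi_i(g+ij) \geqslant \varphi_i(g)$; together these force $\varphi_i(g+ij) = \varphi_i(g)$ and, symmetrically, $\varphi_j(g+ij)=\varphi_j(g)$, so every absent link is non-discerning. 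Ruling out such ties---so that any missing link would hand its initiator a strict gain and contradict link addition proofness---is where the strict form of monotonicity must be invoked; this is the one delicate point in (c), and once ties are excluded no link can be absent and $g = g^N$.

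The main work lies in part (b): exhibiting a payoff structure with a strictly pairwise stable network that fails unilateral stability. The conceptual point to exploit is that strict pairwise stability tests single-link additions (SLAP) and link deletions (SLDP) only separately, whereas the trust condition lets a player simultaneously delete and add links. I would build a three-player example, $N = \{1,2,3\}$ and $g = \{12\}$, engineered so that (i) severing $12$ is not profitable and adding either missing link $13$ or $23$ strictly hurts both its endpoints (securing SLDP and SLAP, hence strict pairwise stability), yet (ii) player $1$ strictly prefers to \emph{swap} $12$ for $13$---deleting $12$ and adding $13$---while the new neighbour $3$ is left no worse off. Concretely one sets $\varphi_1(\{13\}) > \varphi_1(\{12\})$ and $\varphi_3(\{13\}) \geqslant \varphi_3(\{12\})$ while keeping $\varphi_1(\{12,13\}) < \varphi_1(\{12\})$ and $\varphi_3(\{12,13\}) < \varphi_3(\{12\})$; the deviation $h^-_1 = \{12\}$, $h^+_1 = \{13\}$ then violates the trust condition. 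Checking the full eight-network payoff table for consistency---that the profitable swap coexists with genuine SLAP, which demands the ``both lose'' inequalities on single additions---is the main obstacle, since these requirements pull the payoffs of overlapping networks in opposite directions and must be balanced with care.
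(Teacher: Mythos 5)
Your proposal is correct, and for parts (a) and (c) it follows essentially the same route as the paper's own (deliberately informal) proof: strong link deletion proofness from the M-network property via Theorem \ref{9:prop:Mlargeness}(a), LAP by specialising the trust condition to $h^-_i = \varnothing$, $h^+_i = \{ ij \}$, and for (c) link monotonicity killing all objections to additions. Where you genuinely diverge is part (b): the paper's Example \ref{9:ex:unil} destabilises the strictly pairwise stable network $g^1 = \{ 12 \}$ by letting player 3 \emph{add two links simultaneously} (moving to $g^N$, where all three players gain), whereas you destabilise $\{ 12 \}$ by a \emph{swap}---player 1 deletes $12$ and adds $13$. Both mechanisms exploit exactly the feature that the single-link tests underlying SLDP and SLAP cannot detect, namely multi-link unilateral deviations, so either works; the paper's version has the mild advantage that the destabilising move is a pure addition, while yours illustrates that the combination of deletion and addition already suffices. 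Your worry that the inequalities in (b) ``pull in opposite directions'' is unfounded: the SLAP constraints concern the networks $\{ 12,13 \}$ and $\{ 12,23 \}$ while the swap concerns $\{ 13 \}$, and these are distinct networks whose payoffs can be assigned independently. For instance, set $\varphi_1 (\{ 12 \}) = \varphi_2 (\{ 12 \}) = \varphi_3 (\{ 12 \}) = 1$, $\varphi_1 (\{ 13 \}) = 2$, $\varphi_3 (\{ 13 \}) = 1$, and let every other payoff of every player equal $0$; then $\{ 12 \}$ is SLDP and SLAP (every single-link addition strictly hurts both endpoints), yet the deviation $h^-_1 = \{ 12 \}$, $h^+_1 = \{ 13 \}$ gives player 1 a strict gain with no strict loss to the only newly proposed neighbour, player 3. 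Finally, your flagged ``delicate point'' in (c) is a genuine observation rather than a defect of your argument: with the paper's weak definition of link monotonicity ($\varphi_i (g+ij) \geqslant \varphi_i (g)$), ties cannot be excluded---a constant payoff function is link monotone and renders \emph{every} network unilaterally stable, so uniqueness of $g^N$ as stated really does require a strict form of monotonicity or the exclusion of non-discerning links, a point the paper's informal proof of (c) passes over silently.
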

I prove all three assertions in Proposition \ref{9:prop:UniStab} in an informal fashion, rather than a rigorous mathematical way.

First, from Proposition \ref{9:prop:Mlargeness} it follows that every M-network $g$ is strong link deletion proof. Furthermore, applying the unilateral stability condition to a single link $ij \in g$ reduces to the LAP property. This immediately shows Proposition \ref{9:prop:UniStab}(a).

Next, if the network payoff structure is link monotone, then there are no objections of any player to add more links to an existing network. Hence, the complete network $g^N$ is the only M-network that satisfies the unilateral stability condition, implying the assertion stated as Proposition \ref{9:prop:UniStab}(c).

Finally, to show Proposition \ref{9:prop:UniStab}(b), I device an example for the case of three players. This example also has an important role to assess the relationship between unilateral stability and other stability concepts, introduced further down in these lecture notes.
\begin{example} \label{9:ex:unil}
Here, consider three players $N = \{ 1,2,3 \}$ and a network payoff structure $\varphi$ given in the next table.

\begin{center}
\begin{tabular}{|l|c|c|c||c|}
\hline
\textbf{Network} $g$ & $\varphi_1 (g)$ & $\varphi_2 (g)$ & $\varphi_3 (g)$ & \textbf{Stability} \\
\hline
$g^0 = \varnothing$ & 0 & 0 & 0 & Strongly PS \\
$g^1 = \{ 12 \}$ & 0 & 0 & 2 & Strictly PS \\
$g^2 = \{ 13 \}$ & 0 & 0 & 0 & \\
$g^3 = \{ 23 \}$ & 0 & 0 & 0 & \\
$g^4 = \{ 12,13 \}$ & -1 & 0 & 0 & \\
$g^5 = \{ 12,23 \}$ & 0 & -1 & 0 & \\
$g^6 = \{ 13,23 \}$ & 0 & 1 & 1 & \\
$g^7 = g^N$ & 3 & 3 & 3 & U-stable \\
\hline
\end{tabular}

\end{center}
Here, $g^0$ is strongly pairwise stable, but is not unilaterally stable. Indeed, player 3 can add both links 13 and 23 to make $g^6$ without objection of the other players.
\\
Furthermore, $g^1$ is strictly pairwise stable and again not unilaterally stable. As before, player 3 can add links 13 and 23 to move to $g^N$ without any objections of the other two players. This shows assertion \ref{9:prop:UniStab}(b).
\\
Also, it is clear from the table that the complete network $g^N$ is unilaterally stable, since it is strong link deletion proof. Note that in this case $g^N$ is strictly pairwise stable as well.
\\
Finally, I refer to Example \ref{9:ex:unil-mon} for a detailed discussion of an example in which assertion of Proposition \ref{9:prop:UniStab}(b) is strengthened in the sense that the class of strictly pairwise stable networks is completely disjoint from the class of unilaterally stable networks.
\end{example}
To assess unilateral stability, it is clear that \citet{Buskens2005} introduce it as an expression of firmly methodological individualistic behavioural principles:  Players act selfishly only, but conjecture that other players will consent to the creation of links that directly benefit them. It builds on the hypothesis that players offer no objections to the formation of links that directly benefit them.

However, an alternative interpretation can easily be applied here as well. Indeed, the unilateral stability concept can be interpreted to be an application of a principle of trusting behaviour: players trust others to consent to forming links if it does not hurt them. This is closely akin to the model of trusting behaviour. An alternative model of trusting behaviour founded on belief systems in Myerson's framework is discussed next.

\subsection{Monadic stability}

\citet{GillesSarangi2010} introduced a belief-based conception of trusting behaviour in the setting of the consent model with two-sided link formation costs. Their approach imposes minimal informational requirements. Unlike other models of strategic network formation, players need not be aware of the payoffs associated with every network. For any given network $g \in \mathbb{G}^N$ to emerge in such a setting, a player is required to know the payoffs associated with any change (creation or deletion) only involving their own direct links $ij \in L_i (g)$. 

This results in an amendment of Myerson's consent game such that, based on their information, players form simple, myopic beliefs about the direct benefits other players will receive from establishing links with them. According to these myopic beliefs, each player $i \in N$ assumes that another player $j \in N$ is willing to form a new link with $i$ if $j$ stands to benefit from it in the prevailing network. Similarly $i$ also assumes that $j$ will break an existing link $ij$ in the prevailing network if $j$ does not benefit from having this link. Thus, in this process player $i$ assumes that all other links in the prevailing network remain unchanged.

Therefore, these monadic beliefs are indeed ``myopic'' in the sense that they only pertain to direct effects of the addition or removal of a link in the network. Hence, these beliefs disregard higher order effects on the payoffs of all players in the network due to the addition or removal of such a link. As such these behavioural standards reflect a \emph{bounded} form of rationality in decision making, implying that the boundedly rational foundation of monadic stability is fundamentally different from the rational standard imposed by unilateral stability.

Such myopic beliefs essentially capture the idea that network formation primarily occurs between acquaintances with sufficiently large an amount of information about each other to assess first order effects of network changes.\footnote{That social relations are mainly formed between acquaintances is confirmed empirically by \citet{WellmanEtAl1988} using data from the East York area. This principle also forms the foundation of the model in \citet{Brueckner2006}, who models friendship as building links between players chosen from a given set of acquaintances.} This concept is a normal form implementation of the self-confirming equilibrium concept introduced by \citet{FudenbergLevine1993} within the setting of the Myerson model and its variations.

One can assess these myopic belief systems as reflecting a certain form of ``confidence'' on the part of each player to engage in communication to form links with other players that have an obvious (first-order) benefit from the addition of such a link. This confidence suffices to form non-trivial social networks. As stated, a certain commonality is assumed among the players in order to formulate such common priors and beliefs on which this confidence is founded. In this regard we assume that players are acquaintances and build relationships through beliefs about actions undertaken by other players.\footnote{It is clear that this approach is akin to the notion of unilateral stability introduced before. A comparison of monadic stability with unilateral stability is, therefore, called for. This is further developed here as well.}

We now formalise these myopic belief systems for the consent model under two-sided link formation costs.

\paragraph{Defining monadic stability.}

Throughout we assume there is a given network payoff function $\varphi \colon \mathbb{G}^N \to \mathbb{R}^N$ and we impose a two-sided link formation cost structure $c = ( c_{ij} )_{i,j \in N}$. Based on this data, consider the corresponding consent model under two-sided link formation costs $\Gamma^a_\varphi (c) = ( \mathcal A^a, \pi^a )$. We can introduce specific belief systems in this setting that represent the trusting behavioural principle as discussed above.
\begin{definition}
Let $\ell \in A^a$ be an arbitrary communication profile resulting in network $g = g( \ell )$. For every player $i \in N$ we define $i$'s \textbf{monadic belief system} concerning $\ell$ as a communication profile $\ell^{i \star} \in A^a$ given by
\begin{numm}
\item for every $j \neq i$ with $ij \in g$ let
\begin{itemize}
\item $\ell^{i \star}_{ji} =0$ if $\varphi_j \left( g-ij \right) +c_{ji} > \varphi_j (g)$ and
\item $\ell^{i \star}_{ji} =1$ if $\varphi_j \left( g-ij \right) +c_{ji} \leqslant \varphi_j (g)$;
\end{itemize}

\item for every $j \neq i$ with $ij \notin g$ let
\begin{itemize}
\item $\ell^{i \star}_{ji} =0$ if $\varphi_j \left( g+ij \right) -c_{ji} < \varphi_j (g)$ and
\item $\ell^{i \star}_{ji} =1$ if $\varphi_j \left( g+ij \right) -c_{ji} \geqslant \varphi_j (g)$;
\end{itemize}

\item and for all $j,k \in N$ with $j \neq i \neq k$ let $\ell^{i \star}_{jk} = \ell_{jk}$.
\end{numm}
\end{definition}
A monadic belief system reflects that a player believes that other players are myopically selfish and will act in their myopic self-interest. Hence, links are consented to if that directly benefits the other player and are refused if deleting that link benefits the other player.

Now monadic stability simply requires that each player acts rationally in view of these beliefs.
\begin{definition}
Let $\varphi$ and $c$ be given with the corresponding consent model under two-sided link formation costs $\Gamma^a_\varphi (c) = ( \mathcal A^a, \pi^a )$.
\begin{abet}
\item A network $g \in \mathbb{G}^N$ is \textbf{weakly monadically stable} for $(\varphi ,c)$ if there exists some communication profile $\ell \in A^a$ with $g = g(\ell )$ such that for every $i \in N \colon \ell_i \in A^a_i$ is a best response to her monadic beliefs $\ell^{i \star}_{-i} \in A^a_{-i}$ for payoff function $\pi^a$; thus,
\begin{equation}
	\pi^a_i \left( \, g ( \ell'_i , \ell^{i \star}_{-i}) \, \right) \leqslant \pi^a_i \left( \, g ( \ell_i , \ell^{i \star}_{-i} ) \, \right)
\end{equation}
for all $\ell'_i \in A^a_i$.

\item A network $g \in \mathbb{G}^N$ is \textbf{monadically stable} for $(\varphi ,c)$ if there exists some communication profile $\ell \in A^a$ with $g = g(\ell )$ such that for every $i \in N \colon \ell_i \in A^a_i$ is a best response to her monadic beliefs $\ell^{i \star}_{-i} \in A^a_{-i}$ for payoff function $\pi^a$ and player $i$'s monadic belief system $\ell^{i \star}$ is confirmed in the sense that for every $j \neq i$ it holds that $\ell^{i \star}_{ji} = \ell_{ji}$. 
\end{abet}
\end{definition}
Weak monadic stability of a network is founded on the principle that every player $i \in N$ anticipates---as captured by her (monadic) expectations about direct links---that other players will respond myopically selfishly to her attempts to form a link with them. Note that $\ell_{-i}$ is fully replaced by the player's belief system $\ell^{i \star}_{-i}$ in the standard best-response formulation of Nash equilibrium for player $i$ and is therefore irrelevant for the decision making process of $i$.

Monadic stability strengthens the above concept by requiring that the beliefs of each player are confirmed in the resulting equilibrium. Hence, monadic stability imposes a self-confirming condition on the weakly monadic equilibrium. This describes the situation that all players are fully satisfied with their beliefs; the observations that they make about the resulting network confirm their beliefs about the other players' payoffs. This amounts to updating one's initial beliefs. As such, monadic stability is an implementation of a \emph{self-confirming equilibrium} based on the monadic belief system in the context of consent model with two-sided link formation costs \citep{FudenbergLevine1993}.

To delineate the two monadic stability concepts for networks, we discuss a three player example. This example shows that the class of monadically stable networks is usually strictly larger than the class of the weakly monadically stable networks.
\begin{example} \label{9:ex:mon-wmon}
Consider $N = \{ 1,2,3 \}$ and assume uniform link formation costs with $c_{ij} = 1$ for all $i,j \in N$. Let the network payoff function $\varphi$ be given in the table below:

\begin{center}
\begin{tabular}{|l|c|c|c||c|}
\hline
\textbf{Network} $g$ & $\varphi_1 (g)$ & $\varphi_2 (g)$ & $\varphi_3 (g)$ & \textbf{Stability} \\
\hline
$g^0 = \varnothing$ & 0 & 0 & 0 & $M_w$ \\
$g^1 = \{ 12 \}$ & 0 & 1 & 0 & \\
$g^2 = \{ 13 \}$ & 0 & 0 & 3 & \\
$g^3 = \{ 23 \}$ & 0 & 0 & 0 & \\
$g^4 = \{ 12,13 \}$ & 3 & 0 & 0 & \\
$g^5 = \{ 12,23 \}$ & 1 & 3 & 3 & \\
$g^6 = \{ 13,23 \}$ & 2 & 2 & 5 & $M_w$ \\
$g^7 = g^N$ & 3 & 5 & 6 & $M_w$ and $M$ \\
\hline
\end{tabular}
\end{center}

\smallskip\noindent
This table identifies whether the network in question is weak monadically stable---indicated by $M_w$---or whether it is monadically stable---indicated by $M$.
\\
Within this example we now consider some of the networks given and analyse their stability properties.
\begin{description}
\item[Network $g^0$:]
We show that this network is weakly monadically stable for a supporting communication profile that is superfluous. Indeed, select $\ell_0 = ( \, (1,1) , (0,0) , (0,0) \, ) \in A^a$ with $g (\ell_0) = g^0 = \varnothing$. Observe here that player 1 incurs link formation costs with $\pi^a_1 (\ell_0) =-2$, while $\pi^a_2 (\ell_0) = \pi^a_3 (\ell_0) =0$. Then we can determine the monadic belief systems for all players as
\begin{align*}
\ell^{1 \star}_0 & = ( - , (1,0) , (1,0) \, ) \\
\ell^{2 \star}_0 & = ( (0,1), - , (0,0) \, ) \\
\ell^{3 \star}_0 & = ( (1,0) , (0,0) ,- \, )
\end{align*}
It should be emphasised that in this case player 1 believes that both other players are willing to make links with her, because there are direct benefits from forming such links. However, the other players believe that player 1 will not attempt to make a link with them, because she has no direct (net) benefits from doing so. This refers to a classical coordination problem.
\\
Now we determine that the best responses for all players are given by
\begin{itemize}
\item $\beta_1 \left( \ell^{1 \star}_0 \right) = (1,1)$ is the unique best response to $\ell^{1 \star}_0$ for player 1.

\item $\beta_2 \left( \ell^{2 \star}_0 \right) = (0,0)$ is the unique best response to $\ell^{2 \star}_0$ for player 2.

\item $\beta_3 \left( \ell^{3 \star}_0 \right) = (0,0)$ is the unique best response to $\ell^{3 \star}_0$ for player 3.
\end{itemize}
This confirms that $g^0$ is indeed weakly monadically stable for $\ell_0$. However, $g^0$ is not monadically stable, since in the communication profile $\ell_0$, player 1's beliefs are not confirmed. She expects the other two players to be willing to form links with her, although they do not do so.

\item[Network $g^5$:]
This network is neither weakly monadically stable, nor monadically stable. The non-superfluous communication profile $\ell_5 = (\, (1,0) , (1,1) , (0,1) \, )$ is an obvious candidate to support this network. For this profile we compute that
\begin{align*}
\ell^{1 \star}_5 & = ( - , (1,1) , (1,1) \, ) \\
\ell^{2 \star}_5 & = ( (1,0), - , (0,1) \, ) \\
\ell^{3 \star}_5 & = ( (1,1) , (1,1) ,- \, )
\end{align*}
This results into the following best response configuration:
\begin{itemize}
\item $\beta_1 \left( \ell^{1 \star}_5 \right) = (1,1)$ is the unique best response to $\ell^{1 \star}_5$ for player 1.

\item $\beta_2 \left( \ell^{2 \star}_5 \right) = (1,1)$ is the unique best response to $\ell^{2 \star}_5$ for player 2.

\item $\beta_3 \left( \ell^{3 \star}_5 \right) = (1,1)$ is the unique best response to $\ell^{3 \star}_5$ for player 3.
\end{itemize}
From this it is clear that $g^5$ cannot be supported by $\ell_5$. This illustrates that weak monadic stability requires selecting a best response to a \emph{specific} set of beliefs for each player $i \in N$. Without such a restriction on the beliefs it would be possible to support any strategy as weakly monadic stable. Moreover, observe that players only form beliefs about the behaviour of their acquaintances with regard to direct links, making it myopic but realistic. In fact, because of this, it is possible that monadically stable equilibria do not exist.
\\
Finally, we can complete the argument by checking that other communication profiles can be ruled out in similar fashion.

\item[Network $g^6$:]
We argue that this network is weakly monadically stable as well. We can show that $g^6$ is supported by the action tuple $\ell_6 = \left( \, (0, 1), (1, 1, ), (1, 1) \, \right)$. Again we compute
\begin{align*}
\ell^{1 \star}_6 & = ( - , (1,1) , (1,1) \, ) \\
\ell^{2 \star}_6 & = ( (1,1), - , (1,1) \, ) \\
\ell^{3 \star}_6 & = ( (0,1) , (1,1) ,- \, )
\end{align*}
Note here that player 1 is indifferent between $g^6$ and $g^7$ in terms of her net payoff $\pi^a$. Thus, in the computation of $\ell^{2 \star}_6$ we use the bias of player 1 towards having more links rather than fewer in player 2's belief system.
\\
This results into the following best response configuration:
\begin{itemize}
\item $\beta_1 \left( \ell^{1 \star}_6 \right) = \{ \, (0,1),(1,1) \, \}$ is the set of best responses to $\ell^{1 \star}_6$ for player 1, i.e., $(0,1)$ and $(1,1)$ are both best responses for this player.

\item $\beta_2 \left( \ell^{2 \star}_6 \right) = (1,1)$ is the unique best response to $\ell^{2 \star}_6$ for player 2.

\item $\beta_3 \left( \ell^{3 \star}_6 \right) = (1,1)$ is the unique best response to $\ell^{3 \star}_6$ for player 3.
\end{itemize}
This shows that $\ell_6$ is indeed supported as a weak monadically stable communication profile. On the other hand, $g^6$ is not monadically stable, since the beliefs of player 2 are not confirmed.

\item[Network $g^7$:]
First, we claim that this network is strictly pairwise stable. Strong link deletion proofness follows trivially from the payoffs listed. Indeed, the net payoffs in other networks ($g^0, \ldots ,g^6$) are at most the net payoff in $g^7$ for all players. Second, strict link addition proofness is trivially satisfied since there are no links that are not part of $g^7 = g^N$.
\\
Furthermore, the complete network $g^7 = g^N$ is weakly monadically stable. We claim that $g^7$ is supported by the only communication profile supporting this network, $\ell_7 = ( \, (1, 1), (1, 1, ), (1, 1) \, )$. We can determine that the monadic belief systems are given by
\begin{align*}
\ell^{1 \star}_7 & = ( - , (1,1) , (1,1) \, ) \\
\ell^{2 \star}_7 & = ( (1,1), - , (1,1) \, ) \\
\ell^{3 \star}_7 & = ( (1,1) , (1,1) ,- \, )
\end{align*}
From this we conclude that
\begin{itemize}
\item $\beta_1 \left( \ell^{1 \star}_7 \right) = \{ \, (0,1),(1,1) \, \}$ is the set of best responses to $\ell^{1 \star}_7$ for player 1.

\item $\beta_2 \left( \ell^{2 \star}_7 \right) = (1,1)$ is the unique best response to $\ell^{2 \star}_7$ for player 2.

\item $\beta_3 \left( \ell^{3 \star}_7 \right) = (1,1)$ is the unique best response to $\ell^{3 \star}_7$ for player 3.
\end{itemize}
So, $\ell_7$ is indeed a best response profile with regard to the generated monadic belief systems. Hence, $g^7$ is indeed weakly monadically stable.
\\
Finally, all players' monadic belief systems are confirmed here. So, in fact, $g^7$ is monadically stable.
\end{description}
In this example, it is made clear that the introduced monadic belief systems require only that players use minimal information about each other's payoffs to formulate appropriate expectations about each other's linking behaviour. Indeed, monadic stability only considers players to use first-order effects of forming new links and deleting existing links to formulate their monadic beliefs.
\end{example}
This example clarifies the relationship between the notion of weak monadic stability and the monadic stability concept. Next, I provide a more general characterisation.
\begin{proposition} \label{9:prop:mon-wmon}
Let the network payoff function $\varphi$ and the link formation cost structure $c$ be given. Every monadically stable network $g \in \mathbb{G}^N$ for $(\varphi ,c)$ satisfies the following two properties:
\begin{numm}
	\item $g$ is weakly monadically stable, and
	\item $g$ is supported by a monadic belief system $\ell^g$ that is non-superfluous in the sense that $\ell^g_{ij} = \ell^g_{ji}$ for all pairs $i,j \in N$.
\end{numm}
\end{proposition}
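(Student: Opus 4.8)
The first property is immediate from the definitions, so I would dispatch it in one line. The condition defining monadic stability is literally a strengthening of the condition defining weak monadic stability: both ask for a supporting profile $\ell \in A^a$ with $g(\ell) = g$ for which each $\ell_i$ is a best response against the monadic beliefs $\ell^{i \star}_{-i}$ under $\pi^a$, and monadic stability merely superimposes the self-confirming requirement $\ell^{i \star}_{ji} = \ell_{ji}$ for all $j \neq i$. Hence any profile witnessing monadic stability of $g$ also witnesses its weak monadic stability, and I would simply record this containment of requirements.

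For the second property I would fix a communication profile $\ell^g = \ell$ that witnesses monadic stability of $g$, so that $g(\ell) = g$, each $\ell_i$ best-responds to $\ell^{i \star}_{-i}$, and the beliefs are confirmed, and then prove $\ell_{ij} = \ell_{ji}$ for every pair. The plan is to split on whether $ij \in g$. If $ij \in g$, then by the definition of the supported network $g(\ell)$ we already have $\ell_{ij} = \ell_{ji} = 1$, so nothing is to be shown. All the work sits in the case $ij \notin g$, where the definition of $g(\ell)$ forces at least one of $\ell_{ij}, \ell_{ji}$ to vanish, and I must exclude a one-sided, ``wasted'' signal so that in fact both vanish.

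The key lever is the interaction between confirmation and best response. Suppose, toward a contradiction, that $ij \notin g$ but $\ell_{ij} = 1$ while $\ell_{ji} = 0$. Confirmation for player $i$ gives $\ell^{i \star}_{ji} = \ell_{ji} = 0$, so under $i$'s own beliefs the link $ij$ can never form regardless of her signal $\ell_{ij}$, since forming $ij$ would require $\ell^{i \star}_{ji} = 1$. Consequently the network $g(\ell'_i, \ell^{i \star}_{-i})$ is insensitive to the coordinate $\ell_{ij}$, which contributes to $i$'s payoff only through the separable term $-\ell_{ij} c_{ij}$; best response of $\ell_i$ therefore forces $\ell_{ij} = 0$, contradicting $\ell_{ij} = 1$. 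The symmetric argument, swapping the roles of $i$ and $j$ and invoking confirmation and best response for $j$ (so that $\ell^{j \star}_{ij} = \ell_{ij} = 0$ makes $\ell_{ji} = 1$ a pure loss of $c_{ji}$), rules out $\ell_{ji} = 1, \ell_{ij} = 0$. Hence $\ell_{ij} = \ell_{ji} = 0$ and $\ell^g$ is non-superfluous.

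The main obstacle is the degenerate case $c_{ij} = 0$. There the wasted signal costs nothing, so best response no longer strictly penalises $\ell_{ij} = 1$; and since confirmation pins each received signal $\ell_{ji}$ to the belief value $\ell^{i \star}_{ji}$ that is determined by $g$ alone, one cannot excise the superfluous signal without violating the self-confirming condition. I would handle this exactly as the surrounding text does, namely by working in the non-trivial cost regime $c_{ij} > 0$ for all $i \neq j$ (the survey already flags that zero costs are precisely what manufacture superfluous equilibria). In that regime the strict-penalty step above applies to every ordered pair and yields $\ell^g_{ij} = \ell^g_{ji}$ throughout.
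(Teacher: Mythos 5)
Your proof is correct and takes essentially the same route as the paper's: the paper also derives a contradiction for a one-sided signal $\ell^g_{ij}=1$, $\ell^g_{ji}=0$ on a pair $ij \notin g$ by playing the best-response property off against the self-confirmation condition $\ell^{i\star}_{ji}=\ell_{ji}$, merely in the opposite order (best response first forces $\ell^{g\,i\star}_{ji}=1$, then confirmation yields the contradiction), with part (i) being the same definitional containment you record. Your explicit flagging that the strict-penalty step requires $c_{ij}>0$ is a hypothesis the paper's own proof uses silently (it is exactly the content of its Lemma \ref{9:lem:61}), so your version is, if anything, slightly more careful.
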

\begin{proof}
Let $g \in \mathbb{G}^N$ be monadically stable and let action tuple $\ell^g \in A^a$ support $g$ as such. Suppose that $ij \notin g$ with $\ell^g_{ij} = 1$ and $\ell^g_{ji} = 0$. Then from the property that $\ell^g_i \in A^a_i$ is a best response to the belief system $\ell^{g \, i \star}_{-i}$ it can be concluded that $\ell^g_{ij} = 1$ implies that $\ell^{g \, i \star}_{ji} = 1$. But this would then imply that $\ell^g_{ji} \neq \ell^{g \, i \star}_{ji}$, violating the monadic stability self-confirmation condition.
\end{proof}

\medskip\noindent
The reverse of the assertion of Proposition \ref{9:prop:mon-wmon} is not true. Simple examples can be constructed in which weakly monadically stable networks exist that satisfy the stated property, but which are not monadically stable.

A few comments regarding the relationship between weak monadic stability and network-based stability concepts are in order here. First, weakly monadically stable networks are not necessarily strong link deletion proof or link addition proof. Second, a network that is strong link deletion proof as well as link addition proof is not necessarily weakly monadically stable. We refer to network $g^6$ in Example \ref{9:ex:mon-wmon}, which is weakly monadically stable, but not link addition proof. The other comparisons can also be shown by properly constructed counterexamples.

\paragraph{An equivalence result.}

The main insight from this approach is that trust indeed builds very strong networks. This is exemplified by the equivalence of the class of monadically stable and strictly pairwise stable networks. For a proof I refer to Appendix A.7.
\begin{theorem} \label{9:equiv:mon-sps}
Let the network payoff function $\varphi$ and the link formation cost structure $c = (c_{ij} )_{i,j \in N}$ be given such that $c_{ij} >0$ for all $i,j \in N$ with $i \neq j$. Then a network $g \in \mathbb G^N$ is monadically stable for $(\varphi ,c)$ if and only if $g$ is strictly pairwise stable for the network payoff function $\varphi^a$ given by
\begin{equation}
	\varphi^a_i (g) = \varphi_i (g) - \sum_{ij \in L_i (g)} c_{ij}
\end{equation}
\end{theorem}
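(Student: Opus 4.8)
The plan is to reduce both implications to a single algebraic translation between the monadic belief thresholds and differences of the net payoff function $\varphi^a$, after which the equivalence follows from unwinding the definitions. The observation I would establish first is that the cost term in $\varphi^a$ exactly absorbs the $c_{ji}$ appearing in the belief system: since $L_j(g-ij) = L_j(g)\setminus\{ij\}$ when $ij \in g$ and $L_j(g+ij) = L_j(g)\cup\{ij\}$ when $ij \notin g$, one has
\[
\varphi^a_j(g - ij) - \varphi^a_j(g) = \varphi_j(g - ij) + c_{ji} - \varphi_j(g)
\quad\text{and}\quad
\varphi^a_j(g + ij) - \varphi^a_j(g) = \varphi_j(g + ij) - c_{ji} - \varphi_j(g).
\]
Hence, for a neighbour $j$, $\ell^{i\star}_{ji}=1 \iff \varphi^a_j(g-ij)\leqslant\varphi^a_j(g)$ ($j$ has no net gain from severing $ij$), and for a non-neighbour $j$, $\ell^{i\star}_{ji}=1 \iff \varphi^a_j(g+ij)\geqslant\varphi^a_j(g)$ ($j$ has a net gain from adding $ij$). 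I would then analyse player $i$'s best response against fixed beliefs $\ell^{i\star}_{-i}$: since $\ell^{i\star}_{jk}=\ell_{jk}$ for $j\neq i\neq k$, all links not incident to $i$ are frozen at their values in $g$, and an incident link $ij$ materialises iff $\ell'_{ij}=1$ and $\ell^{i\star}_{ji}=1$. Writing $S_i=\{j:\ell^{i\star}_{ji}=1\}$, the reachable networks are exactly those obtained by choosing an incident set $T\subseteq S_i$; because $c_{ij}>0$, signalling to any $j\notin S_i$ is strictly wasteful, and a short computation shows $i$'s payoff at such a network equals $\varphi^a_i$ of it, so her best-response value is $\max_{T\subseteq S_i}\varphi^a_i(\,\cdot\,)$.

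For the forward direction, suppose $g$ is monadically stable, supported by $\ell$. By Proposition \ref{9:prop:mon-wmon} the profile $\ell$ is non-superfluous, so $\ell_{ij}=\ell_{ji}$, and together with the confirmation condition this gives $S_i=N_i(g)$. Reading the confirmation equalities $\ell^{i\star}_{ji}=\ell_{ji}$ (for every player, hence from both endpoints of each link) through the translation above yields, for every $ij\in g$, that $\varphi^a_i(g-ij)\leqslant\varphi^a_i(g)$ and $\varphi^a_j(g-ij)\leqslant\varphi^a_j(g)$, and for every $ij\notin g$, that $\varphi^a_i(g+ij)<\varphi^a_i(g)$ and $\varphi^a_j(g+ij)<\varphi^a_j(g)$ --- the latter being precisely strict link addition proofness for $\varphi^a$. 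The best-response condition then strengthens the deletion statement: with $S_i=N_i(g)$, the reachable networks are exactly $g-h$ for $h\subseteq L_i(g)$ with value $\varphi^a_i(g-h)$, and optimality of retaining every link forces $\varphi^a_i(g-h)\leqslant\varphi^a_i(g)$ for all $h\subseteq L_i(g)$, i.e. strong link deletion proofness. These two properties are exactly strict pairwise stability for $\varphi^a$.

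For the converse, assume $g$ is strictly pairwise stable for $\varphi^a$ and take the canonical non-superfluous profile $\ell$ with $\ell_{ij}=\ell_{ji}=1$ iff $ij\in g$, so $g(\ell)=g$. Confirmation follows from the translation: for $ij\in g$, single-link deletion proofness (implied by the strong form) gives $\varphi^a_j(g-ij)\leqslant\varphi^a_j(g)$, whence $\ell^{i\star}_{ji}=1=\ell_{ji}$; for $ij\notin g$, strict link addition proofness gives $\varphi^a_j(g+ij)<\varphi^a_j(g)$, whence $\ell^{i\star}_{ji}=0=\ell_{ji}$. Thus again $S_i=N_i(g)$, the reachable networks are $g-h$ for $h\subseteq L_i(g)$ with value $\varphi^a_i(g-h)$, and strong link deletion proofness makes $\ell_i$ a best response, while $c_{ij}>0$ discards wasteful signals to non-neighbours. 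Hence $\ell$ satisfies both monadic-stability conditions.

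The step I expect to be the main obstacle is making the best-response analysis airtight: pinning down $S_i=N_i(g)$ and verifying that $i$'s in-game payoff at each reachable network coincides with $\varphi^a_i$ of that network, since this is precisely where the \emph{strong} (arbitrary-subset) form of deletion proofness, rather than the single-link form, is forced out of the single simultaneous best-response move. The hypothesis $c_{ij}>0$ must be invoked carefully twice --- once, via Proposition \ref{9:prop:mon-wmon}, to guarantee non-superfluousness, and once to eliminate dominated signals to non-consenting players --- because without it superfluous best responses reappear and the clean correspondence with $\varphi^a$ breaks down.
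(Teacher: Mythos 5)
Your proof is correct and follows essentially the same route as the paper's: your translation identity between the monadic belief thresholds and differences of $\varphi^a$, your use of $c_{ij}>0$ to discard signals to non-consenting players, and your reduction of player $i$'s best response to maximising $\varphi^a_i(g-h)$ over $h \subseteq L_i(g)$ correspond directly to the paper's displayed inequality for $ij \notin g$, its Lemma \ref{9:lem:61}, and its $\ell^h$-deviation argument in Appendix A.7. The only organisational difference is that you invoke Proposition \ref{9:prop:mon-wmon} for non-superfluousness of the supporting profile, where the paper re-derives this inside the proof from confirmation and Lemma \ref{9:lem:61}.
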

Through the monadic stability concept we have considered the notion of confidence---as a form of mutual trust---into an advanced equilibrium concept, specifically designed for network formation. Confidence is introduced as an \emph{internalised} feature into the behaviour of the players in network formation. Thus, trusting behaviour is as such a individualised feature rather than a social normative phenomenon. 

The strength as well as the weakness of the monadic stability approach is the myopic nature of the belief systems. Players do not apply very sophisticated reasoning; they only look at the first order effects of link formation. Natural future extensions of this line of theoretical research should explore the possibility of introducing forward looking behaviour to understand how farsightedly stable networks arise.\footnote{This can be compared with existing models of farsighted network formation developed in \citet{Deroan:2003p1371}, \citet{Dutta:2005p1105}, \citet{PageWooders2002}, \citet{Herings:2009p1144}, \citet{Navarro2014}, \citet{LimitedFarsighted2016}, \citet{Forster2016} and \citet{Schaar2020}.}

\subsection{A comparison of unilateral and monadic stability}

As mentioned in the introduction to this section, unilateral and monadic stability seem to be founded on the same principles of trusting behaviour: Players attempt to form links with other players if they perceive these players to benefit from these links. 

Recall that a network is unilaterally stable if there is no player who can induce changes to the network based on the belief that other players will consent to these changes if they are not harmful to them. Note here that unilateral stability assumes a fully rational form of farsightedness in the decision making process: All proposed changes to the network---as made by a single player---are fully taken into account by all involved players before consent is granted. Thus, unilateral stability assumes a sophisticated level of rational forecasting by all players, who need to consent to the proposed changes to the network.

This implies that unilateral stability is indeed founded on the principle of trusting behaviour. Implicitly, players are indeed acting on beliefs that other players will act in their self-interest when confronted with proposed changes to their link sets. As such, unilateral stability is a trust equilibrium concept.

On the other hand, monadic stability assumes a much less sophisticated form of rational decision making. Indeed, players are actually assumed to be boundedly rational: Players form monadic beliefs that only take first-order changes to the payoffs of other individualised into account. So, if a player proposes to add multiple links, her beliefs are founded on payoff changes per addition of a single link rather than the complete set of links. Beliefs are, thus, founded on a bounded form of reasoning by these players.

Moreover, only after beliefs are formed, all players base their actions on maximising their payoffs given these boundedly rational monadic beliefs. There can arise a build-in mismatch of beliefs and actual outcomes in the form of realised changes to the network. However, actual actions need to confirm the monadic beliefs of players. This pushes the decision making process from unrealistic to justified, since these beliefs are observed by the player decision makers.

Therefore, monadic stability is a trust equilibrium concept as well and is designed explicitly to be based on an embedded form of trusting behaviour in the disguise of belief formation on trusting principles. These trusting principles are not violated due to the confirmation condition in the monadic stability concept---in contrast to the weak monadic stability notion.

In summary, monadic stability is founded on a boundedly rational form of trusting behaviour. This contrasts with unilateral stability in which all decisions are based on a more farsighted, rational implementation of similar ideas.

\paragraph{A formal comparison.}

Next I consider a more technical comparison of the two concepts. From the discussion above it cannot be expected that the application of monadic stability and unilateral stability results in exactly the same class of stable network. The next example shows that these two conceptions can lead to completely different sets of stable networks.
\begin{example} \label{9:ex:unil-mon}
Again consider the by-now familiar case of three players $N = \{ 1,2,3 \}$. Let the network payoff function $\varphi$ be given in the table below and assume that link formation is costless, i.e., $c_{ij} =0$ for all $i,j \in N$.

\begin{center}
\begin{tabular}{|l|c|c|c||c|}
\hline
\textbf{Network} $g$ & $\varphi_1 (g)$ & $\varphi_2 (g)$ & $\varphi_3 (g)$ & \textbf{Stability} \\
\hline
$g^0 = \varnothing$ & 0 & 0 & 0 & \\
$g^1 = \{ 12 \}$ & 1 & 1 & 2 & M-stable \\
$g^2 = \{ 13 \}$ & 0 & 0 & 0 & \\
$g^3 = \{ 23 \}$ & 0 & 0 & 0 & \\
$g^4 = \{ 12,13 \}$ & 0 & 0 & 1 & \\
$g^5 = \{ 12,23 \}$ & 0 & 0 & 1 & \\
$g^6 = \{ 13,23 \}$ & 3 & 3 & 3 & U-stable \\
$g^7 = g^N$ & 4 & 2 & 4 & \\
\hline
\end{tabular}
\end{center}

\noindent
The table reports the stability properties of the various networks. There emerge three interesting networks to be investigated, namely $g^1$, $g^6$ and $g^7 =g^N$. I discuss these in detail below:
\begin{description}
\item[Network $g^1$:] We investigate the stability properties of this network. First, note that $g^1$ is not unilaterally stable. Indeed, player 3 prefers to propose the formation of links $13$ and $23$ to create network $g^N$, which represents a strict Pareto improvement for all players in $N$.
\\
Second, network $g^1$ is supported by a non-superfluous communication profile that is represented as $\ell^1 = ( \, (1,0), (1,0) , (0,0) \, )$. This results into a monadic belief system given by
\begin{align*}
\ell^{1 \star}_1 & = ( - , (1,0) , (0,0) \, ) \\
\ell^{2 \star}_1 & = ( (1,0), - , (0,0) \, ) \\
\ell^{3 \star}_1 & = ( (1,0) , (1,0) ,- \, )
\end{align*}
Clearly $\ell^1$ constitutes a best response profile to the given monadic belief system and the monadic belief system is confirmed through $\ell^1$, showing that $g^1$ is supported as a monadically stable network.\footnote{Similarly, note that $g^1$ is actually a strictly pairwise stable network. The equivalence theorem shows that, therefore, $g^1$ has to be monadically stable.}

\item[Network $g^6$:] First, note that $g^6$ is strongly pairwise stable as well as unilaterally stable. Indeed, only player 1 has an incentive to add link $12$ to form the complete network $g^7 = g^N$, which is rejected by player 2 due to a loss in payoff. There are no players who have incentives to sever any of the two existing links.
\\
Next, $g^6$ is not monadically stable. Indeed, take the non-superfluous communication profile that supports it, given by $\ell^6 = ( \, (0,1), (0,1), (1,1) \, )$. Then the corresponding monadic belief system is
\begin{align*}
\ell^{1 \star}_6 & = ( - , (0,1) , (1,1) \, ) \\
\ell^{2 \star}_6 & = ( (1,1), - , (1,1) \, ) \\
\ell^{3 \star}_6 & = ( (0,1) , (0,1) ,- \, )
\end{align*}
Obviously, the communication profile $\ell^6$ is a best response to the monadic belief system above. This implies that $g^6$ is weakly monadically stable. However, it is \emph{not} monadically stable. Indeed, player 2 believes that player 1 would pursue the creation of a link with her---as represented by $\ell^{2 \star}_{12} =1$. This is not as described by $\ell^6$; player 1 does not propose a link to player 2 and, as such, the belief system of player 2 is not confirmed in the equilibrium communication profile.

\item[Network $g^7 = g^N$:] To conclude the discussion of the situation described in this example, we consider the complete network $g^7 = g^N$, which is uniquely supported by the communication profile $\ell^7 = ( \, (1,1),(1,1),(1,1) \, )$. The resulting monadic belief systems can now be represented by
\begin{align*}
\ell^{1 \star}_7 & = ( - , (0,1) , (1,1) \, ) \\
\ell^{2 \star}_7 & = ( (1,1), - , (1,1) \, ) \\
\ell^{3 \star}_7 & = ( (1,1) , (1,1) ,- \, )
\end{align*}
Obviously, the communication strategy $\ell^7_1 = (1,1)$ is not a best response to $\ell^{1 \star}_7$, since player 1 expects player 2 not to form a link with her. Therefore, $\ell^7$ is not supported as a monadically stable communication profile. Thus, $g^7$ is not weakly monadically stable.
\\
Furthermore, this network is neither unilaterally stable; in particular, it is not link deletion proof. Indeed, player 2 has an incentive to break the link with player 1 to move to network $g^6$.
\end{description}
This example clearly shows that the class of unilaterally stable networks can be completely disjoint from the class of monadically stable networks. In this example, however, the unilaterally stable network is weakly monadically stable. This implies that in a unilaterally stable network monadic beliefs can destabilise the network, leading to unending improvement attempts by the players in the network. Thus, boundedly rational belief formation can undermine a farsightedly rational foundation for the network; as such, it represents an example of a direct conflict between farsighted or full and boundedly rational behaviour.
\end{example}

\subsection{Existence of monadically stable networks}

The question of existence of monadically stable networks is an important one. The previous discussion already identified the class of monadically stable networks to be exactly equal to the class of strictly pairwise stable networks. Obviously, this class is empty for a large collection of network payoff structures. Here I investigate certain conditions under which the class of monadic networks is non-empty. 

These conditions are related to the notion of a network potential as seminally developed by \citet{GillesChakrabarti2007}. There it is explored what the consequences are of founding network payoffs on an underlying link-based payoff function---denoted as a \emph{network potential}. Network payoff functions that admit a potential impose a payoff structure in which players assess the value of links in a similar fashion. It can be shown that for network payoff structures that are founded on such potentials, there exist strictly pairwise stable networks.

In the subsequent discussion, I summarise the main insights from \citet{GillesChakrabarti2007}. For details of the proofs of the main theorems I also refer to that paper and its appendices. Before stating the main definitions and the resulting properties, I recall the definition of two potential concepts in the context of a non-cooperative game $( \mathcal A, \pi )$ on the player set $N$ as seminally introduced by \citet{MondererShapley1996}.
\begin{definition}
	Let $( \mathcal A, \pi )$ be a non-cooperative game on player set $N$. Then: 
	\begin{abet}
	\item The game $( \mathcal A, \pi )$ \textbf{admits an exact potential} in the sense of \citet{MondererShapley1996} if there exists a function $P \colon A \to \mathbb R$ such that
	\begin{equation}
		\pi_i (a) - \pi_i (b_i , a_{-i}) = P (a) - P (b_i , a_{-i})
	\end{equation}
	for every player $i \in N$, every strategy tuple $a \in A$ and every strategy $b_i \in A_i$.
	
	\item The game $( \mathcal A, \pi )$ \textbf{admits an ordinal potential} in the sense of \citet{MondererShapley1996} if there exists a function $P \colon A \to \mathbb R$ such that
	\begin{equation}
		\pi_i (a) > \pi_i (b_i , a_{-i}) \quad \mbox{if and only if} \quad P (a) > P (b_i , a_{-i})
	\end{equation}
	for every player $i \in N$, every strategy tuple $a \in A$ and every strategy $b_i \in A_i$.
	\end{abet}
\end{definition}
Based on these two notions of game-theoretic potentials, we can now consider how network payoff structures might be founded on similar constructs. 

\paragraph{Network potentials.}

There are two main conceptions of the notion of a potential as a founding device in the determination of network payoffs. Again we refer to these notions as an ``exact potential'' and an ``ordinal potential'', following the accepted terminology in the literature. The next definition introduces these two notions.
\begin{definition}
	Let $\varphi \colon \mathbb G^N \to \mathbb R^N$ be a network payoff function.
	\begin{abet}
		\item The network payoff function $\varphi$ \textbf{admits an exact potential} if there exists a function $\Lambda \colon \mathbb G^N \to \mathbb R$ such that
		\begin{equation}
			\varphi_i (g) - \varphi_i (g-ij) = \Lambda (g) - \Lambda (g-ij)
		\end{equation}
		for every network $g \in \mathbb G^N$, every player $i \in N$ and every link $ij \in L_i (g)$.
		\item The network payoff function $\varphi$ \textbf{admits an ordinal potential} if there exists a function $\Lambda \colon \mathbb G^N \to \mathbb R$ such that the following conditions hold:
		\begin{align}
			\varphi_i (g) > \varphi_i (g-ij) \quad \mbox{if and only if} \quad \Lambda (g) > \Lambda (g-ij) \\
			\varphi_i (g) < \varphi_i (g-ij) \quad \mbox{if and only if} \quad \Lambda (g) < \Lambda (g-ij) \\
			\varphi_i (g) = \varphi_i (g-ij) \quad \mbox{if and only if} \quad \Lambda (g) = \Lambda (g-ij)
		\end{align}
		for every network $g \in \mathbb G^N$, every player $i \in N$ and every link $ij \in L_i (g)$.
	\end{abet}
\end{definition}
An exact potential imposes that the network payoff structure exhibits a \emph{cardinally uniform} way of how players assess the addition or deletion of a link to a network. It is clear that the admittance of an exact potential is a very strong condition on the network payoff structure. This is confirmed by the following insight from \citet[Theorem 3.3]{GillesChakrabarti2007}:
\begin{lemma}
	A network payoff function $\varphi$ admits an exact potential if and only if the corresponding Myerson model $\Gamma^m_\varphi$ admits an exact potential in the sense of \citet{MondererShapley1996}.
\end{lemma}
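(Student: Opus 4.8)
The plan is to exploit the correspondence between strategy profiles and supported networks, using the fact that a unilateral deviation in the Myerson model alters only the links incident to the deviating player. Throughout I write $\ell^g \in A^m$ for the \emph{canonical non-superfluous profile} of a network $g$, defined by $\ell^g_{ij} = 1$ whenever $ij \in g$ and $\ell^g_{ij} = 0$ whenever $ij \notin g$ (with $\ell^g_{ii} = 1$); clearly $g(\ell^g) = g$.

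\textbf{Forward direction.} Suppose $\varphi$ admits an exact potential $\Lambda$ and define $P \colon A^m \to \mathbb{R}$ by $P(\ell) = \Lambda(g(\ell))$. To verify the Monderer--Shapley condition, fix a player $i$, a profile $\ell$, and a deviation $\ell'_i$. Since only the signals $\ell_{ij}$ with $j \neq i$ change, the networks $g(\ell)$ and $g(\ell'_i,\ell_{-i})$ agree on every link not incident to $i$ and differ only among the links $ij \in L_i$. I would connect $g(\ell)$ to $g(\ell'_i,\ell_{-i})$ by a path of single-link modifications, each incident to $i$ (first deleting the links present only in $g(\ell)$, then adding those present only in $g(\ell'_i,\ell_{-i})$). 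Each step is governed by the defining identity $\varphi_i(h) - \varphi_i(h-ij) = \Lambda(h) - \Lambda(h-ij)$ for $ij \in L_i(h)$ (used directly for deletions and in reverse for additions). Telescoping along the path gives $\varphi_i(g(\ell)) - \varphi_i(g(\ell'_i,\ell_{-i})) = \Lambda(g(\ell)) - \Lambda(g(\ell'_i,\ell_{-i}))$, which is precisely $\pi^m_i(\ell) - \pi^m_i(\ell'_i,\ell_{-i}) = P(\ell) - P(\ell'_i,\ell_{-i})$.

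\textbf{Reverse direction.} Suppose the Myerson model admits an exact potential $P$ and define $\Lambda(g) = P(\ell^g)$. Fix $g$ and a link $ij \in L_i(g)$; the goal is $\varphi_i(g) - \varphi_i(g-ij) = \Lambda(g) - \Lambda(g-ij)$. Introduce the intermediate profile $\hat\ell$ obtained from $\ell^g$ by flipping only player $i$'s signal $\ell_{ij}$ from $1$ to $0$, leaving everything else unchanged (in particular $\ell_{ji}=1$), so that $g(\hat\ell) = g-ij$. Applying the potential identity to the unilateral deviation $\ell^g_i \to \hat\ell_i$ gives $\varphi_i(g) - \varphi_i(g-ij) = P(\ell^g) - P(\hat\ell)$. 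It then remains to identify $P(\hat\ell)$ with $\Lambda(g-ij) = P(\ell^{g-ij})$: the profiles $\hat\ell$ and $\ell^{g-ij}$ differ only in player $j$'s signal $\ell_{ji}$, and since both support the same network $g-ij$, applying the potential identity to player $j$'s deviation $\hat\ell_j \to \ell^{g-ij}_j$ yields $P(\hat\ell) - P(\ell^{g-ij}) = \pi^m_j(\hat\ell) - \pi^m_j(\ell^{g-ij}) = \varphi_j(g-ij) - \varphi_j(g-ij) = 0$. Hence $P(\hat\ell) = \Lambda(g-ij)$, closing the identity.

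The main obstacle, and the only genuinely delicate point, is in the reverse direction: the canonical profiles $\ell^g$ and $\ell^{g-ij}$ differ in \emph{two} coordinates, $\ell_{ij}$ and $\ell_{ji}$, whereas the Monderer--Shapley condition compares profiles differing in a single player's strategy. The device of the intermediate (superfluous) profile $\hat\ell$ splits this two-coordinate change into two admissible unilateral deviations, and the key observation that flipping $\ell_{ji}$ while $\ell_{ij}=0$ leaves the supported network fixed forces the corresponding potential increment to vanish. Once this is handled, both directions reduce to routine telescoping of the single-link potential identities.
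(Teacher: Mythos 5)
Your proof is correct. A preliminary remark: the survey itself contains no proof of this lemma---it is stated as an import of Theorem~3.3 of \citet{GillesChakrabarti2007}, with the proof deferred to that paper---so there is no internal argument to measure yours against; on its own terms, however, your argument is sound and complete, and it is essentially the natural argument for this equivalence. In the forward direction, the key observation that a unilateral deviation by player $i$ changes the supported network only on links in $L_i$ is exactly what licenses the telescoping: every intermediate single-link step along your path is incident to the \emph{same} player $i$, so the network-potential identity can be invoked with $\varphi_i$ throughout, and the increments of $\varphi_i$ and $\Lambda$ cancel in pairs to give $\pi^m_i(\ell)-\pi^m_i(\ell'_i,\ell_{-i}) = P(\ell)-P(\ell'_i,\ell_{-i})$ with $P = \Lambda \circ g(\cdot)$. (Note also that this $P$ is constant across all profiles---superfluous or not---supporting a given network, which is consistent with the Monderer--Shapley condition precisely because $\pi^m$ itself depends on $\ell$ only through $g(\ell)$.) In the reverse direction you correctly identify the one genuinely delicate point: $\ell^g$ and $\ell^{g-ij}$ differ in two coordinates owned by two \emph{different} players, which the exact-potential condition cannot compare directly. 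Your bridge profile $\hat\ell$ (with $\hat\ell_{ij}=0$, $\hat\ell_{ji}=1$) splits the change into two unilateral deviations, and the second one---player $j$ flipping $\ell_{ji}$ while $\ell_{ij}=0$---leaves the supported network at $g-ij$, so $\pi^m_j$ is unchanged and the potential identity forces $P(\hat\ell)=P(\ell^{g-ij})$; combined with player $i$'s deviation $\ell^g_i \to \hat\ell_i$ this yields $\varphi_i(g)-\varphi_i(g-ij) = \Lambda(g)-\Lambda(g-ij)$ as required. Both directions are gap-free.
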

The admittance of an ordinal potential in a network payoff structure imposes a uniform assessment of deleting and adding links to networks by all players in purely \emph{ordinal} terms. Although this property is significantly weaker than the admittance of an exact potential, it remain a rather demanding condition on the network payoff structure. The next lemma makes clear that there is again a relationship with the notion of an ordinal potential in the sense of \citet{MondererShapley1996}. The next lemma is stated as Theorem 4.3 in \citet{GillesChakrabarti2007}. For a proof I refer to that source.
\begin{lemma} \label{lemma:OrdinalP}
	Let $\varphi$ be some network payoff structure. If the corresponding Myerson model $\Gamma^m_\varphi$ admits an ordinal   potential in the sense of \citet{MondererShapley1996}, then $\varphi$ admits an ordinal potential.
\end{lemma}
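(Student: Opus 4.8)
The plan is to manufacture the network ordinal potential $\Lambda$ directly from the game-theoretic ordinal potential $P$ of $\Gamma^m_\varphi$ by evaluating $P$ on a canonical strategy profile attached to each network. For every network $g \in \mathbb{G}^N$ let $\ell^g \in A^m$ denote the profile defined by $\ell^g_{ij} = 1$ if and only if $ij \in g$ (together with the obligatory $\ell^g_{ii} = 1$). This profile supports $g$, i.e.\ $g(\ell^g) = g$, since a link materialises precisely when both endpoints signal $1$. I would then set $\Lambda(g) = P(\ell^g)$ and verify the three ordinal-potential conditions for $\varphi$.

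Fix a network $g$, a player $i$, and a link $ij \in L_i(g)$, so $ij \in g$. The first step is to realise the passage from $g$ to $g-ij$ as a single unilateral deviation. Let $\bar\ell = (\bar\ell_i, \ell^g_{-i})$ be obtained from $\ell^g$ by letting player $i$ switch her signal $\ell_{ij}$ from $1$ to $0$ while leaving her remaining signals fixed. This breaks the mutual agreement on $ij$ but touches no other link, so a direct check gives $g(\bar\ell) = g - ij$. As $\bar\ell$ and $\ell^g$ differ only in player $i$'s strategy, the ordinal-potential property of $P$, combined with $\pi^m_i(\ell^g) = \varphi_i(g)$ and $\pi^m_i(\bar\ell) = \varphi_i(g-ij)$, yields that $\varphi_i(g) > \varphi_i(g-ij)$ holds if and only if $P(\ell^g) > P(\bar\ell)$, and likewise for the reverse strict inequality and hence for equality.

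The remaining, slightly delicate, step is to identify $P(\bar\ell)$ with $\Lambda(g-ij) = P(\ell^{g-ij})$. These two profiles are \emph{not} equal: $\bar\ell$ still carries player $j$'s signal $\ell^g_{ji} = 1$, whereas in $\ell^{g-ij}$ that signal has been switched off. However, $\bar\ell$ and $\ell^{g-ij}$ differ only in player $j$'s strategy, and both support the same network $g-ij$, since the signal $\ell_{ji}$ is now superfluous with $i$ no longer reciprocating. Consequently $\pi^m_j(\bar\ell) = \varphi_j(g-ij) = \pi^m_j(\ell^{g-ij})$. Applying the ordinal-potential property of $P$ to this single-player ($j$) deviation in the equality regime forces $P(\bar\ell) = P(\ell^{g-ij})$. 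Chaining the two steps gives $\varphi_i(g) > \varphi_i(g-ij)$ if and only if $\Lambda(g) > \Lambda(g-ij)$, and identically for the strict reverse inequality and for equality, which is exactly the assertion that $\Lambda$ is an ordinal potential for $\varphi$.

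I expect the main obstacle to be precisely this last piece of bookkeeping: deleting a link at the network level corresponds to flipping \emph{two} coordinates, one per endpoint, at the strategy level, so one cannot invoke the ordinal-potential condition in a single stroke. The resolution is to route the comparison through the intermediate profile $\bar\ell$, using one genuine deviation by $i$ that carries the payoff comparison followed by a payoff-neutral deviation by $j$ that leaves $P$ unchanged because the ordinal potential respects ties. It is worth noting that the argument uses only that $P$ is an ordinal potential and that the implication is genuinely one-directional: an ordinal network potential need not lift back to an ordinal potential of the game, which is why the lemma is stated as an implication rather than an equivalence (in contrast to the exact-potential case).
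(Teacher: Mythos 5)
Your proof is correct. Note that the paper itself gives no proof of this lemma---it defers to \citet[Theorem 4.3]{GillesChakrabarti2007}---so there is no internal argument to compare against; but your construction is exactly the natural one and, as far as the published source goes, essentially the standard argument: define $\Lambda(g) = P(\ell^g)$ via the canonical non-superfluous profile, and handle the fact that deleting a link flips \emph{two} coordinates by routing through the intermediate profile $\bar\ell$. Your two steps are both sound: the $i$-deviation from $\ell^g$ to $\bar\ell$ carries the payoff comparison $\varphi_i(g)$ versus $\varphi_i(g-ij)$ directly through the ordinal-potential equivalence, and the $j$-deviation from $\bar\ell$ to $\ell^{g-ij}$ is payoff-neutral for $j$ (both profiles support $g-ij$), so $P(\bar\ell) = P(\ell^{g-ij})$. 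The only point worth making explicit is that tie-preservation of $P$ under unilateral deviations, which you invoke for the second step, is itself a (two-line) consequence of applying the biconditional in the ordinal-potential definition with the roles of the two profiles exchanged: $\pi_j(a) > \pi_j(b_j,a_{-j}) \Leftrightarrow P(a) > P(b_j,a_{-j})$ together with the same statement from the other base profile rules out both strict inequalities for $P$ when the payoffs tie. Your closing remark on one-directionality also matches the paper, which notes via \citet[Example 4.4]{GillesChakrabarti2007} that the converse implication fails.
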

The reverse of the assertion stated in Lemma \ref{lemma:OrdinalP} is not true, as shown in \citet[Example 4.4]{GillesChakrabarti2007}.

\paragraph{Properties of network payoff structures that admit potentials.}

Using the introduced notions of game-theoretic and network potentials, we can now distinguish three essential classes of network payoff structures. First, those network payoff structures that admit an exact potential; second, those network payoff structures for which the corresponding Myerson game admits an ordinal potential; and, finally, those network payoff structures that admit an ordinal potential. Each of these classes is larger than the previous.

The next propositions collect some properties of the third class, namely those network payoff structures that admit an ordinal potential. For proofs of these assertions I again refer to \citet{GillesChakrabarti2007}.
\begin{proposition}
	Let $\varphi$ be some network payoff structure that admits an ordinal potential $\Lambda$. Then the following properties hold:
	\begin{numm}
		\item There exists at least one pairwise stable network.
		\item The sets of strongly pairwise stable and strictly pairwise stable networks coincide.
	\end{numm}
\end{proposition}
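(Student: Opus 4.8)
The plan is to exploit the one structural fact that an ordinal potential $\Lambda$ delivers: for any single link $ij \in L_i(g)$ the two incident players must agree on the \emph{sign} of its marginal payoff, since both $\varphi_i(g) - \varphi_i(g-ij)$ and $\varphi_j(g) - \varphi_j(g-ij)$ carry the same sign as $\Lambda(g) - \Lambda(g-ij)$. Every step below is a direct consequence of this single observation.

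For assertion (i) I would show that any maximiser of the potential is pairwise stable. As $\mathbb{G}^N$ is finite, pick a network $g^\ast$ that maximises $\Lambda$ over $\mathbb{G}^N$. If some player $i$ strictly gained by deleting a neighbouring link $ij \in L_i(g^\ast)$, the defining property of $\Lambda$ would give $\Lambda(g^\ast - ij) > \Lambda(g^\ast)$, contradicting maximality; hence $g^\ast$ is link deletion proof. Likewise, if $\varphi_i(g^\ast + ij) > \varphi_i(g^\ast)$ for some missing link $ij \notin g^\ast$, then reading the potential along the link $ij$ inside $g^\ast + ij$ yields $\Lambda(g^\ast + ij) > \Lambda(g^\ast)$, again impossible. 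Thus no player can strictly benefit from adding any link, so the premise of link addition proofness is never triggered and $g^\ast \in \mathcal{P}(\varphi)$.

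For assertion (ii) the inclusion $\mathcal{P}_s(\varphi) \subseteq \mathcal{P}_\star(\varphi)$ is free: both classes share the factor $\mathcal{D}_s(\varphi)$, and $\mathcal{A}_s(\varphi) \subseteq \mathcal{A}(\varphi)$ by Theorem \ref{9:equiv:Addition}(a). For the reverse inclusion I would invoke the corollary on pairwise stability (its clause relating $\mathcal{P}_\star$ and $\mathcal{P}_s$), which reduces $\mathcal{P}_\star(\varphi) = \mathcal{P}_s(\varphi)$ to checking that $\varphi$ is \emph{uniform} and \emph{discerning} on $\mathcal{P}_\star(\varphi)$. Uniformity is immediate from sign-agreement: for $ij \notin g$, $\varphi_i(g+ij) \geqslant \varphi_i(g)$ forces $\Lambda(g+ij) \geqslant \Lambda(g)$ and hence $\varphi_j(g+ij) \geqslant \varphi_j(g)$, so $\varphi$ is in fact uniform on all of $\mathbb{G}^N$. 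In particular, for a link addition proof network every missing link is either a strict mutual loss or a mutual indifference, and excluding the latter is precisely strict link addition proofness.

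I expect the discerning condition to be the genuine obstacle, and the place where sign-agreement is not by itself enough. A non-discerning missing link $ij$ with $\varphi_i(g+ij) = \varphi_i(g)$ and $\varphi_j(g+ij) = \varphi_j(g)$---equivalently $\Lambda(g+ij) = \Lambda(g)$---is perfectly compatible with an ordinal potential, keeps $g$ inside $\mathcal{P}_\star(\varphi)$, yet bars it from $\mathcal{P}_s(\varphi)$. The core of the proof is therefore to exclude such payoff-irrelevant links on $\mathcal{P}_\star(\varphi)$; I would carry this out under the non-degeneracy convention, implicit in the potential framework of \citet{GillesChakrabarti2007}, that no single-link change leaves a player exactly indifferent, which renders $\varphi$ discerning. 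Granting discerningness, the cited corollary closes the equivalence and with it assertion (ii).
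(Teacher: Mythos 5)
Your argument for assertion (i) is correct and is exactly the route the cited source takes: since $\mathbb{G}^N$ is finite, $\Lambda$ attains a maximum, and the sign-agreement between $\varphi_i$ and $\Lambda$ along single-link changes (applied at $g^\ast$ for deletions and at $g^\ast + ij$ for additions) makes any maximiser link deletion proof and, vacuously, link addition proof. Note that the survey itself gives no proof of this proposition---it defers entirely to \citet{GillesChakrabarti2007}---so your attempt must stand on its own, and for (i) it does. Your derivation of uniformity on all of $\mathbb{G}^N$ from the ordinal potential is likewise correct, and via Theorem \ref{9:equiv:Addition}(c) and the corollary on pairwise stability it correctly reduces assertion (ii) to the discerning condition on $\mathcal{P}_\star (\varphi )$.

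The gap is precisely where you suspected it, and it is not closable by the move you make. Invoking a ``non-degeneracy convention, implicit in the potential framework'' is not a proof step: it is the insertion of an unstated hypothesis, and nothing in the assumption that $\varphi$ admits an ordinal potential delivers it. Indeed, the hypothesis as stated does not rule out non-discerning links at all: take $\varphi_i (g) = 0$ for every player $i$ and every network $g$, with $\Lambda \equiv 0$. This $\varphi$ admits an ordinal potential (all three equivalences in the definition hold as ``false iff false'' and ``true iff true''). Every network is then strong link deletion proof and link addition proof, so $\mathcal{P}_\star (\varphi ) = \mathbb{G}^N$; but for any $g \neq g^N$ and $ij \notin g$ the requirement $\varphi_i (g+ij) < \varphi_i (g)$ of strict link addition proofness fails, so $\mathcal{P}_s (\varphi ) = \{ g^N \}$. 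Hence assertion (ii), read literally with the survey's definitions, actually fails in degenerate cases, and any correct proof must rest on an \emph{explicit} additional condition (e.g., that no single-link change leaves a player exactly indifferent, equivalently that $\Lambda (g) \neq \Lambda (g-ij)$ for all adjacent networks, which by sign-agreement renders $\varphi$ discerning everywhere) or on the cited source's possibly different definitions. You diagnosed the obstacle correctly; the fix is to state the non-degeneracy assumption as a hypothesis and flag the discrepancy with the proposition as stated, not to treat it as implicitly granted.
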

The class of network payoff structures for which the corresponding Myerson game admits an ordinal potential is particularly interesting. Indeed, \citet[Theorem 5.7]{GillesChakrabarti2007} show that for this class of network payoff structures there exist strictly pairwise stable networks. I state for completeness the complete assertion:
\begin{proposition}
	Let $\varphi$ be a network payoff function for which the corresponding Myerson model $\Gamma^m_\varphi$ admits an ordinal   potential in the sense of \citet{MondererShapley1996}. Then there exists at least one strictly pairwise stable network for $\varphi$.
\end{proposition}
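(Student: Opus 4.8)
The plan is to produce a single network that is simultaneously strong link deletion proof and strict link addition proof, i.e.\ an element of $\mathcal{D}_s(\varphi)\cap\mathcal{A}_s(\varphi)=\mathcal{P}_s(\varphi)$, by exploiting a maximiser of the game-theoretic potential. Since each strategy set $A^m_i$ is finite, so is $A^m$, and the ordinal potential $P\colon A^m\to\mathbb R$ attains a maximum at some profile $\ell^\ast$. First I would record the standard \citet{MondererShapley1996} fact that such a maximiser is a Nash equilibrium: for any $i$ and any $\ell'_i$ the ordinal property gives $\pi^m_i(\ell'_i,\ell^\ast_{-i})>\pi^m_i(\ell^\ast)\iff P(\ell'_i,\ell^\ast_{-i})>P(\ell^\ast)$, and the right-hand inequality is impossible at a maximiser, so no unilateral deviation is profitable. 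Hence $g^\ast:=g(\ell^\ast)$ is an M-network, and by Theorem~\ref{9:prop:Mlargeness}(a) it is strong link deletion proof. This already disposes of the ``strong deletion'' half, and it is precisely the point where the hypothesis on the \emph{game} potential (rather than merely a network potential) does the work: the single-link comparisons controlled by a network potential would not by themselves govern the simultaneous deletion of several of a player's links.

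It remains to secure strict link addition proofness, and for this I would first normalise the choice of $\ell^\ast$. Given any maximiser, a coordinate with $\ell_{ij}=1$ but $\ell_{ji}=0$ leaves the supported network unchanged when $i$ resets $\ell_{ij}=0$; then $\pi^m_i$ is unchanged, so by the ordinal property $P$ is unchanged and the profile remains a maximiser. Iterating removes all such mismatches, so I may assume $\ell^\ast$ is non-superfluous, and among non-superfluous maximisers I would select one for which $g^\ast$ has the \emph{largest possible number of links}.

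Now fix an absent link $ij\notin g^\ast$; by non-superfluousness $\ell^\ast_{ij}=\ell^\ast_{ji}=0$. I would reach the profile forming $ij$ in two single-player steps. Letting $i$ set $\ell_{ij}=1$ leaves the network (and $\pi^m_i$) unchanged, so the intermediate profile $\ell^{(1)}$ is again a maximiser; letting $j$ then set $\ell_{ji}=1$ yields a non-superfluous profile $\ell^{(2)}$ supporting $g^\ast+ij$. Were $\ell^{(2)}$ a maximiser, it would be a non-superfluous maximiser whose network has one more link than $g^\ast$, contradicting the choice of $g^\ast$; hence $P(\ell^{(2)})<P(\ell^{(1)})$, and the ordinal property applied to $j$'s deviation forces $\varphi_j(g^\ast+ij)<\varphi_j(g^\ast)$. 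The symmetric chain, with $j$ moving first, gives $\varphi_i(g^\ast+ij)<\varphi_i(g^\ast)$. Thus every absent link strictly hurts both endpoints, which is exactly strict link addition proofness (the converse implication in its definition is automatic, since adding a link already present changes nothing). Combining the two halves, $g^\ast$ is strictly pairwise stable.

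The main obstacle I anticipate is the \emph{strictness} demanded by $\mathcal{A}_s$: a bare potential maximiser only yields the weak inequalities $\varphi_i(g^\ast+ij)\leqslant\varphi_i(g^\ast)$, and excluding the tie $\varphi_i(g^\ast+ij)=\varphi_i(g^\ast)$ is the crux. The maximal-link selection is engineered precisely to rule out this tie, since a tie would surface as a strictly larger non-superfluous maximiser network. A secondary point requiring care is the bookkeeping around superfluous signals, which the non-superfluous normalisation handles, ensuring that forming a mutually consented link genuinely raises the link count.
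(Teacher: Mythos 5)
Your proof is correct, but note that on this statement there is no in-paper argument to compare against: the survey explicitly defers the proof of this proposition to \citet[Theorem 5.7]{GillesChakrabarti2007}. Judged on its own terms, your potential-maximiser route goes through. Finiteness of $A^m$ gives a maximiser $\ell^\ast$ of the ordinal potential $P$; the ordinal biconditional makes $\ell^\ast$ a Nash equilibrium, so $g(\ell^\ast)$ is an M-network and hence strong link deletion proof by Theorem \ref{9:prop:Mlargeness}(a) --- and you are right that this is exactly where the hypothesis of a potential for the \emph{game} $\Gamma^m_\varphi$, rather than merely a network potential for $\varphi$, does real work, since Lemma \ref{lemma:OrdinalP} is not reversible. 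The two refinements you add are precisely what a bare maximiser argument lacks. First, a one-coordinate reset that leaves $g(\ell)$ unchanged leaves $\pi^m_i$ unchanged, and since applying the ordinal-potential equivalence in both directions shows that equal payoffs along a unilateral deviation force equal potential values, $P$ is unchanged; this legitimises the non-superfluous normalisation. Second, selecting, among non-superfluous maximisers, one whose supported network has maximal link count converts $P(\ell^{(2)}) \leqslant P(\ell^{(1)})$ into a strict inequality, and the two-step chain through the superfluous intermediate profile $\ell^{(1)}$ correctly isolates each endpoint's payoff change as a single unilateral deviation, yielding the strict inequalities $\varphi_i(g^\ast+ij) < \varphi_i(g^\ast)$ and $\varphi_j(g^\ast+ij) < \varphi_j(g^\ast)$ that strict link addition proofness demands --- the crux you correctly identified, since a maximiser alone only rules out strict gains, not ties. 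The converse half of the SLAP biconditional is, as you say, vacuous. I see no gap.
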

This property gives rise to the main conclusion regarding the existence of a monadically stable network. Indeed, the admittance of an ordinal potential in the Myerson model gives rise to the existence of a strictly pairwise stable network, which in turn is monadically stable due to the fundamental equivalence theorem. As a consequence, we can formulate the following main existence theorem:
\begin{theorem}
	Let $\varphi \colon \mathbb G^N \to \mathbb R^N$ be a network payoff structure and let $c \colon N \times N \to \mathbb R_+$ be a link formation cost structure. If the corresponding consent model with two-sided link formation costs $\Gamma^a_\varphi (c)$ admits an ordinal potential in the sense of \citet{MondererShapley1996}, then there exists at least one monadically stable network for $(\varphi ,c)$.
\end{theorem}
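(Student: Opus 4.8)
The plan is to combine the two results already in hand and to spend the real effort on bridging them. By Theorem~\ref{9:equiv:mon-sps} (which I apply with all $c_{ij}>0$; the degenerate case of some vanishing costs I would treat separately, since the equivalence there can fail), a network is monadically stable for $(\varphi,c)$ if and only if it is strictly pairwise stable for the net payoff function $\varphi^a$. Hence the theorem is equivalent to the assertion that a strictly pairwise stable network for $\varphi^a$ exists. For this I would invoke the preceding existence result of \citet{GillesChakrabarti2007}: if the Myerson model $\Gamma^m_{\varphi^a}$ admits an ordinal potential in the sense of \citet{MondererShapley1996}, then $\varphi^a$ possesses a strictly pairwise stable network. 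I stress that this finer result is genuinely needed, since a plain maximiser of a potential of $\Gamma^a_\varphi(c)$ is only a Nash equilibrium and hence, by Theorem~\ref{9:thm:2-equiv}, yields merely a strong link deletion proof network; that secures the deletion side but not the strict link addition proofness required for strict pairwise stability. The task therefore reduces to a single implication about games: \emph{if the consent model $\Gamma^a_\varphi(c)$ admits an ordinal potential, then so does the cost-free Myerson model $\Gamma^m_{\varphi^a}$ of the net function}.

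To prove this implication I would work with non-superfluous representatives. Each network $g$ has a unique non-superfluous profile $\ell^g$ with $\ell^g_{ij}=1$ exactly when $ij\in g$, and because $c_{ij}>0$ every maximiser of an ordinal potential $P$ of $\Gamma^a_\varphi(c)$ is non-superfluous. Given such a $P$, I would set $\Lambda(g)=P(\ell^g)$ and propose $Q=\Lambda\circ g$ as an ordinal potential for $\Gamma^m_{\varphi^a}$; note that $Q$ is automatically constant across the payoff-irrelevant superfluous signal toggles of $\Gamma^m_{\varphi^a}$, which is precisely the degree of freedom distinguishing the two games. Verifying that $Q$ satisfies the Monderer--Shapley ordinal condition amounts to comparing $\Lambda(g)$ with $\Lambda(g')$ whenever $g'$ arises from $g$ by a single player $i$ redrawing her neighbourhood, and matching the sign of that comparison to the sign of $\varphi^a_i(g')-\varphi^a_i(g)$.

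The obstacle --- and the heart of the argument --- is that $\ell^g$ and $\ell^{g'}$ differ in the row of $i$ \emph{and} in the row of every partner whose link to $i$ was changed, so they are not related by a single-player deviation in $\Gamma^a_\varphi(c)$, whereas the ordinal property of $P$ only governs single-player moves. I would therefore factor the passage $\ell^g\rightsquigarrow\ell^{g'}$ into a path of single-player signal deviations (first $i$ severs the links she wishes to drop and posts her new offers, then each new partner reciprocates, then stray offers are cleaned), applying the ordinal property of $P$ step by step. Here the subtlety is real: posting an offer in $\Gamma^a_\varphi(c)$ costs $c_{ij}>0$ and is a strict \emph{loss}, while in $\Gamma^m_{\varphi^a}$ the same offer is free, and ordinal potentials --- unlike exact potentials, for which the separable cost term would merely shift the potential --- are not preserved under such non-separable modifications. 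I expect the cleanest route is the finite-improvement-property characterisation of \citet{MondererShapley1996}: assume $\Gamma^m_{\varphi^a}$ has an improvement cycle and lift it to an improvement cycle of $\Gamma^a_\varphi(c)$ by bundling each free offer with the beneficial deletion that finances it, contradicting the finite improvement property guaranteed by the ordinal potential of $\Gamma^a_\varphi(c)$. Showing that every such cycle can indeed be financed in this way is the step I expect to demand the most care, and it is exactly where positivity of the link formation costs is used decisively.
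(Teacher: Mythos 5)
Your top-level route is exactly the paper's: the survey gives this theorem no appendix proof and justifies it in two lines of text, combining the proposition quoted from \citet[Theorem~5.7]{GillesChakrabarti2007} (an ordinal potential for the Myerson model yields a strictly pairwise stable network) with the equivalence Theorem~\ref{9:equiv:mon-sps}. To your credit, you also spot two things the paper silently elides: that Theorem~\ref{9:equiv:mon-sps} is proved only for $c \gg 0$ while the statement here permits zero costs (you flag the degenerate case but never return to it, so under the stated hypothesis it remains uncovered --- a looseness you share with the paper), and, more substantively, that $\Gamma^a_\varphi (c)$ is \emph{not} the game $\Gamma^m_{\varphi^a}$: they share strategy sets but differ precisely on superfluous profiles, where $\pi^a$ charges $c_{ij}>0$ for an unreciprocated signal while the Myerson payoff built on $\varphi^a$ charges only for realised links. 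If one insists on invoking the quoted proposition literally, your bridging lemma is genuinely needed.

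The gap is that your bridge does not close, for two concrete reasons. First, the finite-improvement-property detour lands in the wrong place: by \citet{MondererShapley1996}, the FIP of a finite game is equivalent to the existence of a \emph{generalized} ordinal potential and is strictly weaker than the existence of an ordinal potential in the ``if and only if'' sense defined in this survey; the proposition you hand over to is stated for the latter, so even a fully successful lifting of improvement cycles would not license the citation. Second, the financing step is not merely delicate but obstructed: an improvement cycle of $\Gamma^m_{\varphi^a}$ may contain a step in which the mover gains strictly by reciprocating standing offers while simultaneously posting fresh unreciprocated offers that the cycle consumes later; in $\Gamma^a_\varphi (c)$ that same move can be strictly worsening (no beneficial deletion may be available to the mover, or its gain may fall short of the posting costs), and the move cannot be split into stages, because posting an unreciprocated offer in isolation is strictly losing in $\Gamma^a_\varphi (c)$ and hence can never be an improvement step. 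You acknowledge that this is where the work lies, but it is the load-bearing wall of your argument and it is missing. Since the paper supplies no argument here either, the honest repair is not to transport a potential from $\Gamma^a_\varphi (c)$ to $\Gamma^m_{\varphi^a}$, but to verify that \citet{GillesChakrabarti2007} establish (or to prove directly) the existence of strictly pairwise stable networks for $\varphi^a$ from an ordinal potential of the costly consent game itself.
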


\section{Correlated network formation}

The previous section focussed mainly on the internalisation of trust in the behaviour of players to result into so-called ``trusting behaviour'' in link formation. We chose to internalise trusting behaviour in the form of belief systems (monadic stability) or through stability concepts themselves (unilateral stability). However, there is rather different an approach possible in which trusting behaviour is explicitly modelled through an externally determined institutional arrangement. These institutional arrangements are implemented collectively and are endowed with a form of collectively accepted self-enforcement.

In my discussion I mainly considered behavioural rules that can be viewed as being part of a trusted governance system. All players are assumed to be embedded in such a governance system, expressing this in the formulated monadic stability concept as embedded monadic belief systems. Hence, we use game theoretic concepts to give this embeddedness an explicit, institutional form as a generally accepted behavioural rule, to behave according to the stated monadic belief system.

\paragraph{Correlation devices.}

Next, I turn to a much more explicit conception of behavioural sociality. One can model guiding behavioural norms also as being \emph{external} to the players, rather than fully internalised---as is the case for the notion of monadic belief systems. This refers to the possibility to let external ``devices'' guide and coordinate decision-making in a game theoretic setting. In particular, one can consider the question: ``Can external guidance let decision makers achieve a higher payoff than that is achieved through the set of supported Nash equilibria?''

A seminal study by \citet{Aumann1974} introduced an innovative way to exactly introduce a formal way to establish mutually beneficial coordination among players. These external arrangements are denoted as \emph{correlation devices}. The basic idea is that the decisions made by players are influenced by things that are external to the decision problem itself, but are situated in their immediate surrounding. The classical example is that of a traffic light.\footnote{The following discussion is mainly based on the excellent account of correlated equilibrium in Chapter 9 of \citet{Maschler2013}. I recommend the interested reader to look at their presentation.}

The game theoretic representation is a form of the \emph{Game of Chicken} as explored extensively in the literature. Two drivers approach a road crossing. At the crossing, each driver can either ``stop'' (action $S$) or ``continue'' (action $C$). If both continue there will result a crash; if both stop, both look foolish and need to coordinate their passing through prolonged negotiation (with hand gestures); and if one stops and the other continues, there is regret of the stopper and maximal payoff to the one who continues. The resulting payoffs can be captured by the following game-theoretic payoff matrix:

\begin{center}
\begin{tabular}{l|c|c|}
 & \textbf{S} & \textbf{C} \\
 \hline
\textbf{S} & 5,5 & 2,7 \\
\textbf{C} & 7,2 & 0,0 \\
\hline
\end{tabular}
\end{center}

\noindent
There result three Nash equilibria in mixed strategies here, namely one driver stops and the other continues---resulting in payoffs $(7,2)$ and $(2,7)$ depending on who actually stops---and the case in which both players stop or continue with equal probability---resulting into the expected payoff vector $\left( 3 \tfrac{1}{2}, 3 \tfrac{1}{2} \right)$. The latter includes a probability of $\tfrac{1}{4}$ of a crash, due to both players continuing.

Now consider that there is an outside regulator---represented as a correlation device---added to this situation in the form of a traffic light. The most important assumption of this arrangement is that both drivers are fully informed about what fraction of time the traffic light is in what colour. Hence, both drivers know the probability distribution that is implemented through the traffic light. We investigate two traffic light arrangements:
\begin{itemize}
\item First, consider that with equal probability the traffic light gives a red light to one player and a green light to the other. Adopting the normal rule to stop for red and to continue for green, we actually coordinate between the two Nash equilibria $(S,C)$ and $(C,S)$, resulting into an expected payoff computed as
\[
\mathbb{E} \, \pi_1 = \tfrac{1}{2} (7,2) + \tfrac{1}{2} (2,7) = \left( 4 \tfrac{1}{2}, 4 \tfrac{1}{2} \right) .
\]
Here there no positive probability of a crash and both drivers are reasonably content with their expected payoff.
\\
Would this traffic light be \emph{self-enforceable} within the given social decision situation? We need to check whether this traffic light arrangement is indeed beneficial to both player drivers if it is implemented as suggested by these two drivers. Obviously, if any of these drivers deviates from the recommendation, while the other follows it, there is a crash---resulting in zero payoffs. So, the suggested arrangement is indeed self-enforcing.

\item In comparison with our regular traffic light, we can even increase the expected payoff by introducing a more complicated coordination device. Indeed, consider a traffic light that can stop both drivers simultaneously with a given probability. In that case, the drivers negotiate themselves and proceed with caution. So, the traffic light can give both drivers simultaneously the signal ``red'', at which both drivers are suggested to stop and proceed with caution.
\\
This allows the mixing of three outcomes in this decision situation. Suppose now that the traffic light gives both drivers simultaneously ``red'' with probability $\tfrac{1}{2}$ and one driver ``red'' and the other driver ``green'' with equal probabilities $\tfrac{1}{4}$.\footnote{This means that both drivers get \emph{private} recommendations from the traffic light; they do not know what the colour to the other driver is. This is the usual arrangement in modern traffic law.} We can depict the resulting probability distribution over all outcomes in a probability matrix:

\begin{center}
\begin{tabular}{l|c|c|}
 & \textbf{S} & \textbf{C} \\
 \hline
\textbf{S} & $\tfrac{1}{2}$ & $\tfrac{1}{4}$ \\
\textbf{C} & $\tfrac{1}{4}$ & 0 \\
\hline
\end{tabular}
\end{center}

\noindent
The resulting expected payoffs can now be computed as
\[
\mathbb{E} \, \pi_2 = \tfrac{1}{2} (5,5) + \tfrac{1}{4} (7,2) + \tfrac{1}{4} (2,7) = \left( \, 4 \tfrac{3}{4} , 4 \tfrac{3}{4} \, \right) \gg \left( \, 4 \tfrac{1}{2}, 4 \tfrac{1}{2} \, \right) = \mathbb{E} \, \pi_1.
\]
Again we can ask whether this traffic light is self-enforcing. If one driver receives ``red'', he knows that the other driver receives ``red'' with probability $\tfrac{2}{3}$ and ``green'' with probability $\tfrac{1}{3}$. So, if he continues there is a crash with probability $\tfrac{1}{3}$ and he receives an expected payoff of $\tfrac{1}{3} \cdot 0 + \tfrac{2}{3} \cdot 7 = 4 \tfrac{2}{3} < 4 \tfrac{3}{4}$, the latter being the expected payoff if he follows the recommendation of the traffic light. Again, we conclude that the traffic light arrangement is indeed self-enforcing; no player has an incentive to deviate from the provided arrangement and recommendations.
\end{itemize}
One can ask whether this reasoning can be extended to even higher payoffs. Indeed, Aumann showed that this is the case up to payoff level 5. The arrangement that both drivers always face a red light---that is, ``red'' with probability 1---is, of course, not self-enforcing.

\paragraph{Using correlation devices in network formation.}

Correlation devices can also be introduced in the processes of network formation. I return to the network formation process under consent that we discussed thus far and consider how external correlation devices in the form of external recommender systems can guide players to form ``good'' networks. We first take a look at a by-now familiar network formation situation with three players.
\begin{example}
As before, let $N = \{ 1,2,3 \}$ be the set of three players. Also, we choose $\varphi$ to be a minor modification of the network payoff function studied in Example \ref{9:ex:unil-mon}, given in the table below, and again we assume that link formation is costless, i.e., $c_{ij} =0$ for all $i,j \in N$.
\\
As reported in the table below, there are actually five M-networks, namely all strong link deletion proof networks given by $\mathcal{M} = \{ g^0, g^1, g^2, g^3, g^6 \}$. These five M-networks correspond only to three payoff vectors, namely $(0,0,0)$, $(1,1,2)$ and $(3,3,3)$.

\begin{center}
\begin{tabular}{|l|c|c|c||c|}
\hline
\textbf{Network} $g$ & $\varphi_1 (g)$ & $\varphi_2 (g)$ & $\varphi_3 (g)$ & \textbf{M-network} \\
\hline
$g^0 = \varnothing$ & 0 & 0 & 0 & M \\
$g^1 = \{ 12 \}$ & 1 & 1 & 2 & M \\
$g^2 = \{ 13 \}$ & 0 & 0 & 0 & M \\
$g^3 = \{ 23 \}$ & 0 & 0 & 0 & M \\
$g^4 = \{ 12,13 \}$ & 8 & 8 & 1 & \\
$g^5 = \{ 12,23 \}$ & 0 & 0 & 1 & \\
$g^6 = \{ 13,23 \}$ & 3 & 3 & 3 & M \\
$g^7 = g^N$ & 4 & 2 & 4 & \\
\hline
\end{tabular}
\end{center}

\noindent
The main question I consider here is: Can we introduce a correlation device in this network formation situation that results in higher expected payoffs than those from the high-paying M-network $g^6$? Indeed, $g^6$ is the most obvious M-network that the players can aim for. Therefore, the payoff vector $(3,3,3)$ acts as a benchmark in relationship to any correlation device.
\\
Consider an external recommender system based on the three networks $g^4$, $g^6$, and $g^7=g^N$. In particular, assume that this correlation device recommends (i) all three players to execute signalling strategy $\ell^a = \left( \, (1,1), (1,0), (1,0) \, \right)$ resulting in network $g^4$ with probability $\alpha = \tfrac{1}{12}$; (ii) the signalling strategy $\ell^b = \left( \, (0,1), (0,1), (1,1) \, \right)$ resulting in network $g^6$ with probability $\beta = \tfrac{2}{3}$; and (iii) the signalling strategy $\ell^c = \left( \, (1,1), (1,1), (1,1) \, \right)$ resulting in network $g^7= g^N$ with probability $\gamma = \tfrac{1}{4}$. The expected payoffs under this system are now given by
\begin{align*}
	\mathbb{E} \, \pi  \left( \ell \right) & = \alpha \cdot \varphi (g^4) + \beta \cdot \varphi (g^6) + \gamma \cdot \varphi (g^7) \\
	& = \tfrac{1}{12} \cdot \varphi (g^4) + \tfrac{2}{3} \cdot \varphi (g^6) + \tfrac{1}{4} \cdot \varphi (g^7) = \left( \, 3 \tfrac{2}{3} , 3 \tfrac{1}{6} , 3 \tfrac{1}{12} \right) \gg (3,3,3) = \varphi (g^6)
\end{align*}
Hence, coordinating the link building actions through this recommender system results into a \emph{strict} Pareto improvement over the best M-network. It remains to show that all three players have no incentives to deviate from the recommended correlated strategy:
\begin{itemize}
	\item Player 1: The only plausible alternative signalling strategy is to play $\ell_1 = (1,1)$ to achieve the high paying network $g^4$. This results actually in no changes to the recommended networks, due to the recommended strategies executed by the two other players under the selected correlation device. Hence, player 1 has no gain from deviating from the recommended strategy.
	\item Player 2: The only plausible alternative signalling strategy for this player is to execute $\ell'_2 = (1,0)$ to establish network $g^4$. But this results in a lower expected payoff for player 2 if the other players follow the recommended strategies in $\ell \colon$
		\[
	           \mathbb E \, \varphi_2 \left( \ell'_2 \right) = \tfrac{1}{12} \cdot \varphi_2 (g^4) + \tfrac{2}{3} \cdot \varphi_2 (g^2) + \tfrac{1}{4} \cdot \varphi_2 (g^4) = \tfrac{1}{3} \cdot 8 + \tfrac{2}{3} \cdot 0 = 2 \tfrac{2}{3} < 3 \tfrac{1}{6} = \mathbb E \, \pi_2 \left( \ell \right) .
	   \]
\end{itemize}
This recommender system uses two non-M-networks, $g^4$ and $g^7 = g^N$. Therefore, this correlation device is founded on considerations outside the realm of the stability concepts that we have considered thus far. It shows that inefficient networks and non-stable networks play a role in network formation processes.
\end{example}
The example above shows just a single application of the correlated equilibrium concept to network formation analysis. The application of this concept opens the way to further exploration, even though the multitude of correlated equilibria is discouraging. Indeed, Aumann showed that the collection of expected payoff vectors supported by correlated equilibria includes the convex hull of all Nash equilibrium payoff vectors. This is rather daunting and discouraging from the perspective that correlation will not lead to a smaller class of supported networks.

However, the main research question that is still open is whether there exists a specific class of correlation devices that could guide players to highly productive networks. Throughout our history, humans have in fact found ways to implement very effective correlation devices to build effective and high-paying networks. This includes recommender systems such as job recommendation referrals and socio-economic recommendations through friendship networks. Further exploration of these systems from a Aumannian perspective is required to develop a theory that interprets these practical systems as correlation devices.

\singlespace
\bibliographystyle{econometrica}
\bibliography{RPDB}

\newpage
\appendix

\section{Proofs of the main theorems}

\subsection{Proof of Theorem \ref{9:equiv:Deletion}.}

\textbf{If:} Let $\varphi$ be  convex on $\mathcal{D}(\varphi )$. Obviously from the definitions and the discussions it follows that $\mathcal{D}_s(\varphi ) \subset \mathcal{D}(\varphi )$. Thus, we only have to show that $\mathcal{D}(\varphi ) \subset \mathcal{D}_s(\varphi )$.
\\
Now let $g \in \mathcal{D}(\varphi )$. Then for every player $i \in N$ and link $ij \in L_i(g)$ it has to hold that $\varphi_i(g) \geqslant \varphi_i (g - ij)$ due to link deletion proofness of $g$. In particular, for any link set $h \subset L_i(g) \colon \sum_{ij \in h} [ \varphi_i (g)-\varphi_i (g-ij) \, ] \geqslant 0$. Since $\varphi$ is convex on $\mathcal{D} (\varphi )$ and $g \in \mathcal{D}(\varphi )$, it follows that $\varphi_i(g) \geqslant \varphi_i(g - h)$ for every link set $h \subset L_i(g)$. In other words, $g$ is strong link deletion proof, i.e., $g \in \mathcal{D}_s(\varphi )$.
\\[1em]
\textbf{Only if:} Assume that $\mathcal{D}(\varphi ) = \mathcal{D}_s(\varphi )$. Suppose further to the contrary that the payoff structure $\varphi$ is not convex on $\mathcal{D}(\varphi )$. Then there exists some network $g \in \mathcal{D}(\varphi )$ and some player $i \in N$ such that for some link set $h \subset L_i(g)$ we have that $\sum_{ij \in h} [ \varphi_i (g)- \varphi_i (g-ij) \, ] \geqslant 0$ as well as $\varphi_i (g) < \varphi_i (g-h)$. But then this implies straightforwardly that player $i$ would prefer to sever all links in $h$, i.e., $g \notin \mathcal{D}_s(\varphi )$. Thus, $g$ cannot be strong link deletion proof giving us the necessary contradiction.
\\[1em]
This completes the proof of the assertion of Theorem \ref{9:equiv:Deletion}.

\subsection{Proof of Theorem \ref{9:equiv:Addition}.}

Assertion (a) is trivial and a proof is therefore omitted.
\\[1em]
\textsc{Proof of (b).}
\\
\textbf{If:} Let $\varphi$ be discerning on $\mathcal{A} (\varphi )$. Suppose that $g$ is LAP. Furthermore, assume that $i,j \in N$ with $ij \notin g$ are such that $\varphi_i (g+ij) \geqslant \varphi_i (g)$. Now, if $\varphi_j (g+ij) = \varphi_j (g)$, then by definition of $\varphi$ being discerning, $\varphi_i (g+ij) > \varphi_i (g)$. This contradicts the hypothesis that $g$ is LAP. Thus, $\varphi_j (g+ij) < \varphi_j (g)$, confirming that $g$ is indeed $\star$-LAP.
\\[0.5em]
\textbf{Only if:} Suppose that $\varphi$ is not discerning on $\mathcal{A} (\varphi )$. Then there exists some network $g$ that is LAP and for some $i,j \in N$ with $ij \notin g$ it holds that $\varphi_i (g+ij) = \varphi_i (g)$ as well as $\varphi_j (g+ij) = \varphi_j (g)$. But this immediately implies that $g$ can in fact not be $\star$-LAP, since the link $ij$ should be in $g$ for it to be $\star$-LAP. This is a contradiction.
\\[1em]
\textsc{Proof of (c).}
\\
\textbf{If:} Suppose that $\varphi$ is uniform on $\mathcal{A}_{\star} (\varphi )$ and take some $g \in \mathcal{A}_{\star} (\varphi )$. Assume that $i,j \in N$ with $ij \notin g$. Then first suppose that
\begin{equation}
\label{eqn:10-GS}
\varphi_i (g) \leqslant \varphi_i (g+ij) .
\end{equation}
Then by $g$ being $\star$-LAP it has to hold that
\begin{equation}
\label{eqn:11-GS}
\varphi_j (g) > \varphi_j (g+ij) .
\end{equation}
But also by uniformity of $\varphi$ it has to hold that
\begin{equation}
\label{eqn:12-GS}
\varphi_j (g) \leqslant \varphi_j (g+ij) .
\end{equation}
But (\ref{eqn:11-GS}) is in direct contradiction to (\ref{eqn:12-GS}). Thus, we conclude that (\ref{eqn:10-GS}) cannot hold. Therefore, for any $ij \notin g$ it has to hold that $\varphi_i (g) > \varphi_i (g+ij)$ as well as $\varphi_j (g) > \varphi_j (g+ij)$. Hence, we conclude that $g$ is actually SLAP, i.e., $g \in \mathcal{A}_s (\varphi )$.
\\[0.5em]
\textbf{Only if:} Assume that $\mathcal{A}_s (\varphi ) = \mathcal{A}_{\star} (\varphi )$. Now take $g \in \mathcal{A}_{\star} (\varphi )$ to be $\star$-LAP. Then from $g$ being SLAP, it follows that $\varphi_i (g) > \varphi_i (g+ij)$ as well as $\varphi_j (g) > \varphi_j (g+ij)$. This implies that $\varphi$ indeed has to be uniform for $g$.
\\[1em]
This proves the assertion of Theorem \ref{9:equiv:Addition}.

\subsection{Proof of Theorem \ref{9:prop:Mlargeness}}

First, we show assertion (a). \\ Suppose that there is an M-network $g \in \mathbb{G}^N$ supported by a Nash equilibrium strategy profile $\ell \in A^m$ that is not strong link deletion proof. Then there is some $i \in N$ and $h_i \subset L_i (g)$ with $\varphi_i (g-h_i) > \varphi_i (g)$. But then player $i$ can modify his linking strategy as $\ell'_{ij} = 0$ if $ij \in h_i$ and $\ell'_{ij} = \ell_{ij}$. Then $g ( \ell'_i , \ell_{-i} ) = g-h_i$ implying that $\pi^m_i ( \ell'_i , \ell_{-i} ) > \pi^m (\ell)$. Therefore, $\ell$ cannot be a Nash equilibrium in $(\mathcal A^m,\pi^m)$. This is a contradiction, showing that M-networks are strong link deletion proof.
\\
Next, let $g \in \mathcal D_s ( \varphi )$ be a strong link deletion proof network for the network payoff function $\varphi$ on $N$. Suppose that $g$ is not an M-network. Then the corresponding signalling tuple $\ell^g$---where $\ell^g_{ij} =1$ if $ij \in g$ and $\ell^g_{ij}=0$ otherwise---is not a Nash equilibrium strategy tuple in the Myerson model $\Gamma^m_\varphi$. Hence, there is a player $i \in N$ and an alternative strategy $\ell_i \in A_i$ with $\ell_i \neq \ell^g_i$ such that $\pi^m_i (\ell^g) < \pi^m_i ( \ell_i , \ell^g_{-i})$. If we denote by $h+i = \{ ij \mid \ell^g_{ij}=1$ and $\ell_{ij}=0 \}$, then it is clear that $g ( \ell_i , \ell^g_{-i}) = g-h_i \subset L_i (g)$. Using the definition of the Myerson payoff function $\pi^m$, we have established that $\varphi_i (g) < \varphi_i (g-h_i)$, which contradicts the hypothesis that $g$ is strong link deletion proof.
\\[1ex]
To show assertion (b), suppose that $\varphi$ is link monotone. Take any network $g \in \mathbb{G}^N$ and construct a strategy profile $\ell^g \in A^m$ by $\ell^g_{ij} =1$ if and only if $ij \in g$, for all $i,j \in N$. It is easy to see that $\ell^g$ is indeed a Nash equilibrium in $(\mathcal A^m,\pi^m)$ due to $\varphi$ being link monotone: For any $i \in N$, any deviation $\ell_i$ from $\ell^g_i$ induces the link set $L_i \left( \,  g ( \ell_i , \ell^g_{-i} ) \, \right) \subseteq L_i (g)$ for $i$. This implies by link monotonicity that $\pi^m_i ( \ell_i , \ell^g_{-i} ) = \varphi_i \left( \, g ( \ell_i , \ell^g_{-i} ) \, \right) \leqslant \varphi_i (g) = \pi^m (\ell^g)$.

\subsection{Proof of Theorem \ref{9:thm:2-equiv}.}

\textbf{(a) implies (c):} Let $\ell^{\star }$ be an arbitrary Nash equilibrium in $( \mathcal A^{a},\pi ^{a})$. Then denote $g^{\star } =  g^{m}(\ell^{\star })=\{ij\in g^N\mid \ell_{ij}^{\star } = \ell_{ji}^{\star}=1\}$. We show that $g^{\star }$ is strong link deletion proof for the derived network payoff function $\varphi^a$.
\\
Suppose player $i$ deletes a certain link set $h_{i}\subset L_{i}(g^{\star })$. Define $\ell_{i}\in A_{i}^{a}$ as $\ell_{ij}=1$ if $ij\in g^{\star }-h_{i}$ and $\ell_{ij}=0$ for $ij\notin g^{\star }-h_{i}$. Then by $\ell^{\star}$ being a Nash equilibrium in $(A^a, \pi^a)$ it follows that $g^{m}(\ell_{i},\ell_{-i}^{\star }) = g^{\star }-h_{i}$ and $\pi _{i}^{a}( \ell^{\star}) \geqslant \pi _{i}^{a}( \ell_{i}, \ell_{-i}^{\star })$. Hence, 
\begin{align*}
\varphi _{i}^{a}(g^{\star })& =\varphi _{i}(g^{\star })-\underset{j\in N_{i}(g^{\star })}{\sum }c_{ij}=\pi _{i}^{a}( \ell^{\star })+\underset{k\colon \ell_{ik}^{\star }=1, \ell_{ki}^{\star }=0}{\sum }c_{ik} \\
& \geqslant \pi _{i}^{a}( \ell^{\star })\geqslant \pi_{i}^{a}( \ell_{i}, \ell_{-i}^{\star })=\varphi _{i}(g^{m}( \ell_{i}, \ell_{-i}^{\star }))-\underset{k\neq i}{\sum } \ell_{ik} \cdot c_{ik} \\
& =\varphi _{i}(g^{\star }-h_{i})-\underset{k\in N_{i}(g^{\star }-h_{i})}{\sum }c_{ik}=\varphi _{i}^{a}(g^{\star }-h_{i}).
\end{align*}%
This proves that $g^{\star }$ is strong link deletion proof for $\varphi ^{a}$.

\medskip \noindent \textbf{(c) implies (b):} Suppose that $g^{\star}\subset g^N$ is a strong link deletion proof network for $\varphi ^{a}$. We show that it is supported by a non-superfluous Nash equilibrium strategy in $( \mathcal A^{a},\pi ^{a})$. Consider the unique non-superfluous strategy profile $\ell^{\star }\in A^{a}$ such that $g^{m} \left( \ell^{\star } \right)= g^{\star }$. We proceed to show that $\ell^{\star }$ is a Nash equilibrium in $( \mathcal A^{a},\pi ^{a})$ and $\ell^{\star}_{ij}=1$ if and only if $ij \in g^{\star}$. Indeed, 
\begin{equation*}
\pi _{i}^{a}( \ell^{\star })=\varphi _{i}(g^{m}( \ell^{\star }))-\underset{k\neq i}{\sum } \ell_{ik}^{\star }\cdot c_{ik} = \varphi _{i}(g^{\star })-\sum_{k \in N_{i}(g^{\star })}c_{ik}=\varphi _{i}^{a}(g^{\star })\text{.}
\end{equation*}%
Next, for some player $i$ consider some deviation $\ell_{i}\neq \ell_{i}^{\star }$. Define $h_{i}= \{ ik \in g^{\star }\mid \ell_{ik}=0\}$. Then, $g^{m}( \ell_{i}, \ell_{-i}^{\star})=g^{\star }-h_{i}.$ Since $g^{\star }$ is strong link deletion proof with respect to $\varphi ^{a}$, it follows that $\varphi _{i}^{a}(g^{\star}-h_{i})\leqslant \varphi _{i}^{a}(g^{\star })$. Thus, 
\begin{align*}
\pi _{i}^{a}(\ell_{i}, \ell_{-i}^{\star })& =\varphi _{i}(g^{m}( \ell_{i}, \ell_{-i}^{\star}))-\underset{k\neq i}{\sum } \ell_{ik}\cdot c_{ik} \\
& =\varphi _{i}(g^{\star }-h_{i})-\underset{k\in N_{i}(g^{\star }-h_{i})}{\sum }c_{ik}-\underset{k\colon \ell_{ik}=1, \ell_{ki}^{\star }=0}{\sum }c_{ik} \\
& \leqslant \varphi _{i}(g^{\star }-h_{i})-\underset{k\in N_{i}(g^{\star}-h_{i})}{\sum }c_{ik} \\
& =\varphi _{i}^{a}(g^{\star }-h_{i})\leqslant \varphi _{i}^{a}(g^{\star})=\pi _{i}^{a}(l^{\star })\text{.}
\end{align*}%
This proves that the non-superfluous signal profile $\ell^{\star }$ is indeed a Nash equilibrium.

\medskip\noindent Trivially (b) implies (a), which proves the assertion and completes the proof of Theorem \ref{9:thm:2-equiv}.

\subsection{Proof of Theorem \ref{9:thm:1-compare}}

Let $g^{\star }$ be strong link deletion proof under the net payoff function $\varphi ^{b}$. For $g^{\star }$, define a non-superfluous communication profile $\lambda ^{\star }=(l^{\star },r^{\star }) \in A^b$ as follows:
\begin{numm}
\item $l_{ij}^{\star }=r_{ji}^{\star }=1$ if $ij\in g^{\star}$ and $\gamma_{ij} < \gamma_{ji}$, or

\item $l_{ij}^{\star }=r_{ji}^{\star }=1$ if $ij\in g^{\star}$, $\gamma_{ij}=\gamma_{ji}$ and $i<j$, or 

\item $l_{ij}^{\star }=r_{ji}^{\star }=0$ if $ij\notin g^{\star}$.
\end{numm}
Obviously, $g^{b}\left( l^{\star },r^{\star } \right) =g^{\star }$ and 
\begin{equation*}
\pi _{i}^{b}(\lambda ^{\star })=\varphi _{i} \left( g^{b}(\lambda ^{\star }) \right) - \underset{j\neq i}{\sum }l_{ij}^{\star }\cdot \gamma_{ij}=\varphi _{i}(g^{\star })-\underset{j\in M _{i}(g^{\star })}{\sum } \gamma_{ij}=\varphi _{i}^{b}(g^{\star }).
\end{equation*}%
Now, for player $i \in N$ consider an arbitrary deviation $\widehat{\lambda }_{i}= \left( \, \widehat{l}_{i},\widehat{r}_{i} \right) \neq \left( l_{i}^{\star},r_{i}^{\star } \right) = \lambda _{i}^{\star }$. In any such deviation, no new links will be formed because if $ij \notin g^{\star },$ it follows that  $l_{ji}^{\star }=r_{ji}^{\star }=0$. However, links in $i$'s neighbourhood link set $L_i (g^{\star} )$ can be deleted. Hence, let $g^{b} \left( \, \widehat{\lambda }_{i},\lambda _{-i}^{\star } \right) =g^{\star }-h_{i}$ where $h_{i}\subset L_{i}(g^{\star })$.
\\
We prove that $j\in N_{i}(g^{\star }-h_{i})$ and $\left[ \,\gamma_{ij}<\gamma_{ji} \mbox{ or } \gamma_{ij}=\gamma_{ji}, \ i<j\,\right]$ implies that $\widehat{l}_{ij}=1$. In other words, $j\in M _{i}(g^{\star }-h_{i})\subset N_{i}(g^{\star }-h_{i})$ implies that $\widehat{l}_{ij}=1$.
\\
Now, assume by contradiction that for some $j\in M _{i}(g^{\star }-h_{i}) \colon \widehat{l}_{ij}=0$. Now, 
\begin{equation}
j\in N_{i}(g^{\star }-h_{i})\Leftrightarrow \widehat{l}_{ij}=1\text{
and }r_{ji}^{\star }=1\text{ or }\widehat{r}_{ij}=1\text{ and }l%
_{ji}^{\star }=1.  \label{eqnO}
\end{equation}%
But $l_{ji}^{\star }=1$ implies by construction that $\gamma_{ij} > \gamma_{ji}$ or $\gamma_{ij}=\gamma_{ji}, \ i>j$. Furthermore, $r_{ji}^{\star }=1$ implies by construction that $\gamma_{ij}<\gamma_{ji}$ or $\gamma_{ij}=\gamma_{ji}$, $i<j$. Since $\widehat{l}_{ij}=0,$ by (\ref{eqnO}), it follows that $\widehat{r}_{ij}=l_{ji}^{\star }=1$ which implies that $\gamma_{ij}>\gamma_{ji}$ or $\gamma_{ij}=\gamma_{ji}$ with $i>j$. This contradicts $j\in M _{i}(g^{\star }-h_{i})$ completing the proof of the claim stated above.
\\
Now, the proven claim implies that
\begin{equation} \label{eqnR}
\underset{j\in M _{i}(g^{\star }-h_{i})}{\sum }\ \gamma_{ij} \leqslant \underset{j\in N_{i}(g^{\star }-h_{i})}{\sum } \widehat{l}_{ij}\cdot \gamma_{ij} \leqslant \sum_{j\neq i} \widehat{l}_{ij} \cdot \gamma_{ij}.
\end{equation}
Hence, 
\begin{align*}
\pi _{i}^{b} \left( \, \widehat{\lambda }_{i},\lambda _{-i}^{\star } \, \right) & =\varphi_{i} \left( g^{b}(\widehat{\lambda }_{i},\lambda _{-i}^{\star }) \right) - \sum_{j \neq i} \widehat{l}_{ij} \cdot \gamma_{ij} = \varphi _{i} \left( g^{\star}-h_{i} \right) - \sum_{j \neq i} \widehat{l}_{ij} \cdot \gamma_{ij} \\
& \leqslant \varphi_{i} \left( g^{\star }-h_{i} \right) -\underset{j\in M_{i}(g^{\star }-h_{i})}{\sum } \gamma_{ij} = \varphi _{i}^{b}(g^{\star}-h_{i}) \\
& \leqslant \varphi _{i}^{b}(g^{\star })=\pi _{i}^{b} \left( l^{\star},r^{\star } \right) .
\end{align*}%
The first inequality follows from (\ref{eqnR}) and the second follows from the fact that $g^{\star }$ is strong link deletion proof with respect to $\varphi ^{b}$. This completes the proof of Theorem \ref{9:thm:1-compare}.

\subsection{Proof of Theorem \ref{9:thm:2-compare}}

Let $g^{\star }$ be supported by a Nash equilibrium signalling profile $\ell^{\star }\in A^{a}$ in the consent model with two-sided link formation costs $( \mathcal A^{a},\pi ^{a})$. We now construct a non-superfluous strategy tuple $\left( \, \widehat{l}, \widehat{r} \, \right) \in A^{b}$ in the consent model with one-sided link formation costs such that $g^{b} \left( \, \widehat{l}, \widehat{r} \, \right) = g^{\star }$ and $\left( \, \widehat{l}, \widehat{r} \, \right)$ is a Nash equilibrium in $( \mathcal A^{b},\pi ^{b})$.
\\
From Theorem \ref{9:thm:2-equiv}, we can assume without loss of generality
that $\ell^{\star } \in A^{a}$ is non-superfluous. Given $\ell^{\star }$, we define 
$\widehat{\lambda }= \left( \, \widehat{l},\widehat{r} \, \right) \in A^{b}$ by
\begin{numm}
\item $\widehat{l}_{ij} = \widehat{r}_{ji}=1$ and $\widehat{l}_{ji} = \widehat{r}_{ij}=0$ if and only if $\ell_{ij}^{\star } = \ell_{ji}^{\star }=1$, and either $c_{ij} < c_{ji}$, or $c_{ij} = c_{ji}$ with $i<j.$

\item $\widehat{l}_{ij} = \widehat{l}_{ji} = \widehat{r}_{ij} = \widehat{r}_{ji} =0$ if and only if $\ell_{ij}^{\star } = \ell_{ji}^{\star } =0$.
\end{numm}
It follows immediately that $\widehat{\lambda }= \left( \, \widehat{l},\widehat{r} \, \right)$ is a non-superfluous communication profile in $A^b$ supporting $g^{b} \left( \, \widehat{l}, \widehat{r} \, \right) =g^{\star }$.
\\
It remains to be shown that $\widehat{\lambda }$ is a Nash equilibrium of the consent model with one sided link formation costs. We sketch the proof of this assertion.
\\
Now, if $\widehat{\lambda }$ is not a Nash equilibrium, then it has to be because some player prefers to delete one or more of her links. Also, any link delivers the same benefit to the player as under two-sided link formation costs, while it would cost no more to establish the link. Thus, preferring to keep a link under two-sided link formation costs, implies that the player would prefer to keep the link under one-sided link formation costs. Mathematical details of this argument are left to the reader.
\\
This completes the proof of Theorem \ref{9:thm:2-compare}.

\subsection{Proof of Theorem \ref{9:equiv:mon-sps}}

We first develop some simple auxiliary insights for weakly monadically stable networks. Suppose that $g \in \mathbb{G}^N$ is weakly monadically stable relative to the data $\varphi$ and $c = ( c_{ij} )_{i,j \in N}$. Then there exists some action tuple $\hat{\ell} \in A^a$ such that $g = g (\hat{\ell})$ and for every player $i \in N \colon \hat{\ell}_i \in A^a_i$ is a best response to the monadic belief system $\hat{\ell}^{i \star}_{-i} \in A^a_{-i}$ for the payoff function $\pi^a$.

For this setting we state two auxiliary results.
\begin{lemma} \label{9:lem:61}
If $\hat{\ell}^{i \star}_{ji}=0$ and $c_{ij} > 0$, then $\ell_{ij} = 0$ is the unique best response to $\hat{\ell}^{i \star}_{-i}$.
\end{lemma}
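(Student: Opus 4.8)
The plan is to exploit the fact that player $i$'s signal coordinate $\ell_{ij}$ is payoff-relevant only through two channels: whether the link $ij$ forms, and the cost $c_{ij}$ that $i$ incurs by sending the signal $\ell_{ij}=1$. When $\hat{\ell}^{i \star}_{ji}=0$, the first channel is shut off, so the second channel alone decides the issue.

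Concretely, I would fix an arbitrary signal vector $\ell_i \in A^a_i$ with $\ell_{ij}=1$ and compare it with the vector $\ell'_i$ that agrees with $\ell_i$ in every coordinate except $\ell'_{ij}=0$. The first step is to verify that the two induced networks coincide, i.e. $g(\ell_i, \hat{\ell}^{i \star}_{-i}) = g(\ell'_i, \hat{\ell}^{i \star}_{-i})$. This is immediate from the formation rule $g(\ell)=\{ij \mid \ell_{ij}=\ell_{ji}=1\}$: the only link whose presence could depend on the coordinate $\ell_{ij}$ is $ij$ itself, and since $i$'s belief about $j$'s reciprocating signal is $\hat{\ell}^{i \star}_{ji}=0$, the link $ij$ is absent under both $\ell_i$ and $\ell'_i$; every other coordinate of $i$'s signal and the entire belief profile $\hat{\ell}^{i \star}_{-i}$ are held fixed. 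Hence the gross payoff term $\varphi_i\!\left(g(\,\cdot\,,\hat{\ell}^{i \star}_{-i})\right)$ takes the same value at $\ell_i$ and $\ell'_i$.

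The second step compares the cost terms. By the definition of $\pi^a$, the gross terms cancel and I obtain
\[
\pi^a_i\!\left(\ell_i, \hat{\ell}^{i \star}_{-i}\right) - \pi^a_i\!\left(\ell'_i, \hat{\ell}^{i \star}_{-i}\right) = -\,\ell_{ij}\cdot c_{ij} = -\,c_{ij} < 0 ,
\]
where the strict inequality uses the hypothesis $c_{ij}>0$. Thus any signal vector with $\ell_{ij}=1$ is strictly dominated by the otherwise identical vector with $\ell_{ij}=0$ and cannot be a best response. Since this holds for every candidate, every best response to $\hat{\ell}^{i \star}_{-i}$ must set $\ell_{ij}=0$, which is exactly the claimed uniqueness.

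I do not anticipate a genuine obstacle here; the argument is a one-line strict-dominance comparison. The only point requiring care is the verification that the two networks coincide, which rests entirely on $\hat{\ell}^{i \star}_{ji}=0$ eliminating the link $ij$ regardless of $i$'s own signal. It is worth emphasising that the assumption $c_{ij}>0$ is precisely what upgrades a weak best response into a \emph{strict} one, and so delivers uniqueness rather than mere optimality of $\ell_{ij}=0$.
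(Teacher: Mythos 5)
Your proposal is correct and takes essentially the same route as the paper: the paper's proof likewise observes that, given the belief $\hat{\ell}^{i \star}_{ji}=0$, setting $\ell_{ij}=1$ leaves the realised network unchanged while incurring the strictly positive cost $c_{ij}>0$, so it yields a strict loss. Your coordinate-wise comparison of $\ell_i$ with the otherwise identical $\ell'_i$ simply makes the paper's one-line strict-dominance reasoning explicit, including the correct reading of ``unique'' as the requirement that every best response must have $\ell_{ij}=0$.
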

\begin{proof}
Clearly, if player $i$ selects $\ell_{ij} = 1$, $i$ only incurs strictly positive costs $c_{i j} > 0$ and no benefits. This implies that player $i$ makes a loss from trying to establish link $ij$. Hence, $\ell_{ij} = 0$ is the unique best response to $\hat{\ell}^{i \star}_{-i}$.
\end{proof}
\begin{lemma} \label{9:lem:62}
If $ij \in g(\hat{\ell})$ with $c_{ij} >0$ as well as $c_{ji} >0$, then $\hat{\ell}^{i \star}_{ji} = \hat{\ell}^{j \star}_{ij} =1$.
\end{lemma}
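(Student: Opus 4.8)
The plan is to derive both equalities by contradiction, leaning entirely on the preceding Lemma \ref{9:lem:61} together with the best-response characterisation built into weak monadic stability. First I would unpack the hypothesis $ij \in g(\hat{\ell})$: by the definition of the network supported by a profile, namely $g(\ell) = \{ ij \in g^N \mid \ell_{ij} = \ell_{ji} = 1 \}$, this membership is equivalent to $\hat{\ell}_{ij} = \hat{\ell}_{ji} = 1$. So both players are in fact signalling to each other along this link, and I would record both of these facts for use on each side of the symmetric argument.

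Next I would invoke weak monadic stability: since $g = g(\hat{\ell})$ is weakly monadically stable with supporting profile $\hat{\ell}$, the component strategy $\hat{\ell}_i$ is a best response to player $i$'s monadic belief system $\hat{\ell}^{i \star}_{-i}$ under $\pi^a$. Suppose, for contradiction, that $\hat{\ell}^{i \star}_{ji} = 0$. Since $c_{ij} > 0$ by hypothesis, Lemma \ref{9:lem:61} applies and yields that $\hat{\ell}_{ij} = 0$ is the unique best response (in its $j$-component) to $\hat{\ell}^{i \star}_{-i}$. This contradicts $\hat{\ell}_{ij} = 1$ established above, so I conclude $\hat{\ell}^{i \star}_{ji} = 1$. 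The second equality then follows by the symmetric argument with the roles of $i$ and $j$ interchanged: from $\hat{\ell}_{ji} = 1$, the positivity $c_{ji} > 0$, and the fact that $\hat{\ell}_j$ is a best response to $\hat{\ell}^{j \star}_{-j}$, the same contrapositive use of Lemma \ref{9:lem:61} forces $\hat{\ell}^{j \star}_{ij} = 1$.

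The step I expect to require the most care is the correct deployment of Lemma \ref{9:lem:61} in contrapositive form. The monadic belief component $\hat{\ell}^{i \star}_{ji}$ is \emph{not} a free choice variable; it is pinned down by $\varphi$, $g$ and $c$ through the defining inequality $\varphi_j(g - ij) + c_{ji} \leqslant \varphi_j(g)$. The subtle point is that, rather than verifying this payoff inequality directly (which would demand extra structural information about $\varphi$), the argument infers the \emph{value} of the belief indirectly: were the belief equal to $0$, cost positivity would render the signal $\hat{\ell}_{ij} = 1$ strictly wasteful, so the best response would suppress it, contradicting the presence of the link in the supported network. I would therefore be careful to read the ``unique best response'' conclusion of Lemma \ref{9:lem:61} at the level of the $ij$-component, so that the clash with $\hat{\ell}_{ij} = 1$ is immediate, and to note that both invocations of the lemma use exactly the hypotheses $c_{ij} > 0$ and $c_{ji} > 0$ supplied in the statement.
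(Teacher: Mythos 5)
Your proposal is correct and follows essentially the same route as the paper: unpack $ij \in g(\hat{\ell})$ as $\hat{\ell}_{ij} = \hat{\ell}_{ji} = 1$, then apply the contrapositive (``negation'') of Lemma \ref{9:lem:61} to each of the two signals independently, using $c_{ij} > 0$ and $c_{ji} > 0$ together with the best-response property of $\hat{\ell}_i$ and $\hat{\ell}_j$ to force $\hat{\ell}^{i \star}_{ji} = \hat{\ell}^{j \star}_{ij} = 1$. Your added remark---that the belief components are pinned down by the data and are inferred indirectly via best-response uniqueness rather than verified through the defining payoff inequalities---is a faithful and slightly more explicit rendering of the paper's one-line argument.
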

\begin{proof}
We remark that $ij \in g = g(\hat{\ell})$ if and only if $\hat{\ell}_{ij} = \hat{\ell}_{ji} = 1$. The negation of the assertion stated in Lemma \ref{9:lem:61} applied to $\hat{\ell}_ij = 1$ and $\hat{\ell}_{ji} = 1$ independently now implies that $\hat{\ell}^{i \star}_{ji} = \hat{\ell}^{j \star}_{ij} =1$.
\end{proof}

\medskip\noindent
We also require a partial characterisation of weakly monadically stable networks. This is stated in the following lemma.
\begin{lemma} \label{9:lem:63}
Let the cost structure $c \gg 0$ be strictly positive. Then every weakly monadically stable network $g \in \mathbb{G}^N$ in the consent model with two-sided link formation costs $(A^a, \pi^a )$ is link deletion proof for the network payoff function $\varphi^a$.
\end{lemma}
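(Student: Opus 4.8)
The plan is to fix an arbitrary weakly monadically stable network $g$ together with a supporting profile $\hat\ell$ (whose existence is guaranteed by weak monadic stability, and which is precisely the profile to which Lemmas \ref{9:lem:61} and \ref{9:lem:62} refer), and then to verify the link deletion proofness inequality one link at a time. So I would fix a player $i \in N$ and a neighbour $j \in N_i(g)$, whence $ij \in g = g(\hat\ell)$, and aim to establish $\varphi^a_i(g-ij) \leqslant \varphi^a_i(g)$.

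First I would reduce the target inequality to a condition on the gross payoff $\varphi$. Since $N_i(g-ij) = N_i(g) \setminus \{j\}$, the cost sums defining $\varphi^a_i(g)$ and $\varphi^a_i(g-ij)$ differ by the single term $c_{ij}$, so that $\varphi^a_i(g) - \varphi^a_i(g-ij) = \left[\, \varphi_i(g) - \varphi_i(g-ij) \,\right] - c_{ij}$. Consequently, deletion proofness of $g$ at the link $ij$ for $\varphi^a$ is equivalent to the gross condition $\varphi_i(g-ij) + c_{ij} \leqslant \varphi_i(g)$.

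The heart of the argument is to extract exactly this gross condition from the belief structure, and the key observation is that player $j$'s monadic belief about player $i$ encodes precisely player $i$'s deletion incentive. Because $ij \in g$ and the cost structure is strictly positive (so $c_{ij}, c_{ji} > 0$), Lemma \ref{9:lem:62} applies and yields $\hat\ell^{j\star}_{ij} = 1$. By the definition of $j$'s monadic belief system in the case of an \emph{existing} link $ij \in g$, the equality $\hat\ell^{j\star}_{ij} = 1$ holds if and only if $\varphi_i(g-ij) + c_{ij} \leqslant \varphi_i(g)$, which is exactly the gross condition isolated above. Combining the two steps gives $\varphi^a_i(g-ij) \leqslant \varphi^a_i(g)$, and since $i$ and $j \in N_i(g)$ were arbitrary, $g$ is link deletion proof for $\varphi^a$.

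The one place demanding care, and the step I expect to be the main obstacle, is the bookkeeping of indices in the monadic belief system: one must recognise that it is player $j$'s belief $\hat\ell^{j\star}_{ij}$ (a belief about $i$'s signal, governed by $\varphi_i$ and the cost $c_{ij}$), and \emph{not} $i$'s own belief $\hat\ell^{i\star}_{ji}$, that reproduces $i$'s deletion condition. The underlying economic content is that a realized link persists only because its costly reciprocation by $j$ is a best response, and $j$ reciprocates only if $j$ believes $i$ wishes to retain the link; that belief is literally the statement that $i$ has no profitable unilateral deletion of $ij$. Once this identification is made, no estimates remain and the conclusion is immediate from Lemma \ref{9:lem:62} and the belief-system definition.
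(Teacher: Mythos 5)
Your proof is correct and in substance identical to the paper's: the paper establishes the same gross inequality $\varphi_i(g-ij)+c_{ij}\leqslant \varphi_i(g)$ by contradiction, showing that its failure would force $\hat{\ell}^{j\star}_{ij}=0$ and hence, via Lemma \ref{9:lem:61}, make $\ell_{ji}=0$ the unique best response for player $j$, contradicting $\hat{\ell}_{ji}=1$ from $ij\in g$. Your direct route through Lemma \ref{9:lem:62} simply packages that contrapositive step (Lemma \ref{9:lem:62} is itself proved from Lemma \ref{9:lem:61} by exactly this negation), and your index bookkeeping---that it is $j$'s belief $\hat{\ell}^{j\star}_{ij}$, governed by $\varphi_i$ and $c_{ij}$, which encodes $i$'s deletion incentive---is precisely the mechanism the paper's proof relies on.
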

\begin{proof}
Suppose that $g \in \mathbb{G}^N$ is weakly monadic in the consent model with two-sided link formation costs $(A^a, \pi^a )$. Then there exists some communication profile $\hat{\ell} \in A^a$ such that $g = g(\hat{\ell})$ and for every player $i \in N \colon \hat{\ell}_i \in A^a_i$ is a best response to $\hat{\ell}^{i \star}_{-i}$ for the game theoretic payoff function $\pi^a$.
\\
Suppose now that $g$ is not link deletion proof for $\varphi^a$. Then there exists some $i \in N$ with $ij \in g$ for some $j \neq i$ and $\varphi^a (g - ij) > \varphi^a_i (g)$, implying that $\varphi_i (g-ij) +c_{ij} > \varphi_i(g)$. By definition, $\hat{\ell}^{j \star}_{ij} =0$. Hence, from Lemma \ref{9:lem:61}, $\ell_{ji} =0$ is the unique best response to $\hat{\ell}^{j \star}$ for player $j$. Since $ij \in g$ by assumption it has to hold that $\hat{\ell}_{ji} = 1$. This contradicts the hypothesis that $\hat{\ell}_j$ is a best response to $\hat{\ell}^{j \star}_{-j}$.
\\
This contradiction indeed shows that $g$ has to be link deletion proof relative to $\varphi^a$.
\end{proof}

\medskip\noindent
The proof of Theorem \ref{9:equiv:mon-sps} now proceeds as follows.
\\
First we show that strict pairwise stability for $\varphi^a$ implies monadic stability in $(A^a, \pi^a )$ under the hypothesis that $c \gg 0$.
\\
Let $g \in \mathbb{G}^N$ be a network that is strictly pairwise stable with regard to the network payoff function $\varphi^a$ as given in the assertion. Then $g$ is strong link deletion proof and satisfies the property that
\[
ij \notin g \mbox{ implies that } \varphi^a_i (g+ij) < \varphi^a_i (g) \mbox{ as well as } \varphi^a_j(g+ij) < \varphi^a_j (g).
\]
Hence, this can be rewritten as
\begin{equation}
\label{9:eq:main-mon-sps}
ij \notin g \mbox{ implies } \varphi_i (g+ij) - c_{ij} < \varphi_i (g) \mbox{ as well as } \varphi_j (g+ij)-c_{ji} < \varphi_j (g).
\end{equation}
With $g$ we define for all $i \in N \colon$
\begin{align*}
\hat{\ell}_{ij} = 1 & \mbox{ if } ij \in g \\
\hat{\ell}_{ij} = 0 & \mbox{ if } ij \notin g
\end{align*}
Hence, $g (\hat{\ell}) =g$ and $\hat{\ell}$ is non-superfluous. We now investigate whether the given communication profile $\hat{\ell}$ is indeed a best response to the monadic belief system $\hat{\ell}^{i \star}$ for all $i \in N$ as required by the definition of weak monadic stability.
\begin{description}
\item[Case A:] $ij \notin g$.
\\
From (\ref{9:eq:main-mon-sps}) it follows immediately that $\hat{\ell}^{i \star}_{ji} = \hat{\ell}^{j \star}_{ij} =0$. From the hypothesis that $c_{ij} >0$ and $c_{ji} >0$ and the definition of monadic belief systems, it follows with Lemma \ref{9:lem:61} that $\hat{\ell}_{ij} =0$ is the unique best response to $\hat{\ell}^{i \star}_{-i}$ and that $\hat{\ell}_{ji} =0$ is the unique best response to $\hat{\ell}^{j \star}_{-j}$.
\\
Hence, for Case A the communication strategy $\hat{\ell}$ satisfies the condition of weak monadic stability.

\item[Case B:] $ij \in g$.
\\
In this case $\hat{\ell}_{ij} = \hat{\ell}_{ji} =1$. Link deletion proofness of $g$ now implies that $\hat{\ell}^{i \star}_{ji} =1$ or else (\ref{9:eq:main-mon-sps}) is contradicted.
\end{description}
Cases A and B now imply that
\begin{equation}
\label{9:eq:second-mon-sps}
ij \in g \mbox{ if and only if } \hat{\ell}^{i \star}_{ji} = \hat{\ell}^{j \star}_{ij} =1 .
\end{equation}
Applying strong link deletion proofness and the insight for Case A leads us to the conclusion that $\hat{\ell}_i$ is indeed the unique best response to $\hat{\ell}^{i \star}_{-i}$. This in turn implies that $\hat{\ell}$ supports $g$ as a weakly monadically stable network.
\\
Finally, it is immediately clear from (\ref{9:eq:second-mon-sps}) and the definition of $\hat{\ell}$ that for all $i,j \in N \colon \hat{\ell}^{i \star}_{ji} = \hat{\ell}_{ij}$, implying that the monadic beliefs are indeed confirmed.
\\
Thus, we conclude that $\hat{\ell}$ supports $g$ as a monadically stable network. This completes the proof of the first part of the assertion.

\bigskip\noindent
Second, we show that the monadic stability of a network for $(A^a, \pi^a )$ implies strict pairwise stability for $\varphi^a$ under the hypothesis that $c \gg 0$.
\\
Let $g \in \mathbb{G}^N$ be monadically stable. Then there exists some action tuple $\hat{\ell} \in A^a$ such that $g = g(\hat{\ell})$ and for every player $i \in N \colon \hat{\ell}_i \in A^a_i$ is a best response to $\hat{\ell}^{i \star}_{-i}$ for the payoff function $\pi^a$. Furthermore, $\hat{\ell}^{i \star}_{-i} = \hat{\ell}_{-i}$.
\\
From Lemma \ref{9:lem:63} we already know that $g$ has to be link deletion proof for $\varphi^a$ since $g$ is weakly monadically stable. Hence, for every $ij \in g$ we have that $\varphi_i (g-ij) +c_{ij} \leqslant \varphi_i(g)$. Now through the definition of the monadic belief systems and the self-confirming condition of monadic stability we conclude that for every $ij \in g$:
\begin{equation}
\hat{\ell}_{ij} = \hat{\ell}^{j \star}_{ij} = \hat{\ell}_{ji} = \hat{\ell}^{i \star}_{ji} =1 .
\end{equation}
Let $i \in N$ and $h \subset L_i (g)$. Now we define $\ell^h \in A^a_i$ by
\[
\ell^h_{ij} = \left\{
\begin{array}{ll}
\hat{\ell}_{ij} & \mbox{if } ij \notin h \\
0 & \mbox{if } ij \in h .
\end{array}
\right.
\]
Then $g \left( \, \ell^h , \hat{\ell}_{-i} \, \right) = g -h$. Since $\hat{\ell}_i$ is a best response to $\hat{\ell}^{i \star}_{-i} = \hat{\ell}_{-i}$ it has to hold that\footnote{Here we again apply the confirmation condition for monadic stability that is satisfied by $\hat{\ell}$.}
\[
\pi^a_i \left( \, \ell^h , \hat{\ell}_{-i} \, \right) \leqslant \pi^a_i ( \hat{\ell}) .
\]
Hence,
\begin{equation}
\varphi_i ( g-h ) + \sum_{ij \in h} c_{ij} \leqslant \varphi_i (g) .
\end{equation}
This in turn implies that $\varphi^a_i (g-h) \leqslant \varphi^a_i (g)$.
\\
Since, $i \in N$ and $h$ are chosen arbitrarily, the network $g$ has to be strong link deletion proof.
\\
Next, let $ij \notin g$. Then $\hat{\ell}_{ij} = 0$ and/or $\hat{\ell}_{ji} = 0$. Suppose that $\hat{\ell}_{ji} = 0$. Then by the confirmation condition of monadic stability it follows that $\hat{\ell}^{i \star}_{ji} = \hat{\ell}_{ji} = 0$. Hence by Lemma \ref{9:lem:61}, $\hat{\ell}_{ij} = 0$. Thus we conclude that for every $ij \notin g$:
\begin{equation}
\hat{\ell}_{ij} = \hat{\ell}^{j \star}_{ij} = \hat{\ell}_{ji} = \hat{\ell}^{i \star}_{ji} = 0.
\end{equation}
This in turn implies through the definition of the monadic belief system that $\varphi_i (g+ij)-c_{ij} < \varphi_i (g)$ as well as $\varphi_j (g+ij)-c_{ji} < \varphi_j (g)$. Or $\varphi^a_i (g+ij) < \varphi^a_i (g)$ as well as $\varphi^a_j (g+ij) < \varphi^a_j (g)$. This shows the assertion that $g$ is indeed strictly pairwise stable.

\medskip\noindent
This completes the proof of Theorem \ref{9:equiv:mon-sps}.

\end{document}